\documentclass[a4paper]{amsart}
\usepackage{fullpage}
\usepackage{amsmath,amsthm,amssymb}
\usepackage{epic,eepic}
\usepackage{epsfig,cite,oldgerm}
\usepackage{float}

\setlength{\textwidth}{14cm}
\setlength{\evensidemargin}{1.15cm}
\setlength{\oddsidemargin}{1.15cm}

\newtheorem{lemma}{Lemma}

\newtheorem{conditions}{ }

\def\lprod{\mathop{\prod{\mkern-29.5mu}{\mathbf\longleftarrow}}}

\begin{document}

\title[Scalar products in six-vertex models]
{An Izergin-Korepin procedure for calculating scalar products in six-vertex models}

\author{M Wheeler}

\address{Department of Mathematics and Statistics,
                 University of Melbourne,
                 Parkville, Victoria, Australia.}
\email{m.wheeler@ms.unimelb.edu.au}

\keywords{Spin chains. Algebraic Bethe Ansatz. Scalar products.}

\begin{abstract}

Using the framework of the algebraic Bethe Ansatz, we study the scalar product of the inhomogeneous XXZ spin-$\frac{1}{2}$ chain. Inspired by the Izergin-Korepin procedure for evaluating the domain wall partition function, we obtain a set of conditions which uniquely determine the scalar product. Assuming the Bethe equations for one set of variables within the scalar product, these conditions may be solved to produce a determinant expression originally found by Slavnov. We also consider the inhomogeneous XX spin-$\frac{1}{2}$ chain in an external magnetic field. Repeating our earlier procedure, we find a set of conditions on the scalar product of this model and solve them in the presence of the Bethe equations. The expression obtained is in factorized form.

\end{abstract}

\maketitle

\setcounter{section}{0}

\section{Introduction}\label{introduction}

Scalar products are objects of essential interest in models solvable by the quantum inverse scattering method/algebraic Bethe Ansatz \cite{kbi}. In the case of two-dimensional integrable models, the definition of the scalar product may be phrased as follows: Let $V_a,V_b$ denote copies of the vector space $\mathbb{C}^2$, and consider a generic two-dimensional model with $R$-matrix $R_{ab}(u,v) \in {\rm End}(V_a \otimes V_b)$, monodromy matrix 
\begin{align}
T_a(u) 
=
\left(
\begin{array}{cc}
A(u) & B(u)
\\ 
C(u) & D(u)
\end{array}
\right)_a
\in 
{\rm End}(V_a),
\end{align}
and pseudo-vacuum states $\langle 0|, | 0\rangle$ satisfying
\begin{align}
&\langle 0| A(u) = a(u) \langle 0|,& 
&\langle 0| B(u) = 0,& 
&\langle 0| D(u) = d(u) \langle 0|,&
\\
&A(u) |0\rangle = a(u) |0\rangle,&
&C(u) |0\rangle = 0,&
&D(u) |0\rangle = d(u) |0\rangle,&
\\
&\langle 0| 0\rangle = 1.&
\end{align}
The scalar product of the model is then given by
\begin{align}
S_N\Big(\{u\},\{v\}\Big)
=
\langle 0| C(u_1)\ldots C(u_N)
B(v_1)\ldots B(v_N) |0\rangle,
\label{SP}
\end{align}
where the operators $C(u)$ and $B(v)$ obey a non-trivial commutation relation which is recovered as a single component of the bilinear equation
\begin{align}
R_{ab}(u,v) T_a(u) T_b(v) = T_b(v) T_a(u) R_{ab}(u,v).
\label{YBalg}
\end{align}
Finding simple and exact expressions for scalar products is important for at least two reasons: 1. In the limit when the variables $\{u\}$ and $\{v\}$ are equal to the same solution of the Bethe equations, they reduce to the norm-squared of a Bethe eigenstate; 2. Algebraically, they are closely related to correlation functions and constitute the first step in the calculation of such quantities. Nevertheless, the explicit calculation of the scalar product is a difficult task, owing to the complicated nature of the algebraic relations contained in (\ref{YBalg}).  

A major development in the theory came in \cite{sla}, when N~A~Slavnov found a determinant expression for the scalar product (\ref{SP}) of models with an $A_1^{(1)}$ type $R$-matrix \cite{jim} in the case when one set of variables $\{v\}$ satisfies the Bethe equations, whilst the other set $\{u\}$ is left free. Aside from its intrinsic beauty, the Slavnov scalar product formula has many practical uses. The simplest of these is seen by setting the free set of variables $\{u\}$ equal to the set $\{v\}$, when it specializes to a formula for the norm-squared of a Bethe eigenstate, originally proposed in \cite{gau} and proved in \cite{kor}. The Slavnov formula was also indispensable to the work of \cite{kmst}, in which a compact expression was found for a generating functional of correlation functions in generic $A_1^{(1)}$ type models. More recently, by virtue of its determinant form, it was shown that the Slavnov scalar product can be viewed as a specialization of a $\tau$-function of the KP hierarchy \cite{fwz5} or more simply as a solution of discrete Hirota equations \cite{fs}.

The main purpose of this paper is to provide a recursive, diagrammatic proof of the Slavnov formula in the particular case of the inhomogeneous XXZ spin-$\frac{1}{2}$ model. In order to do this, we resort to the relationship between the XXZ model and the six-vertex model of statistical mechanics. We write the XXZ scalar product as a sum over lattice states in the six-vertex model, subject to certain boundary conditions. The lattice sum is shown to satisfy a set of four conditions which determine it uniquely, and then the Slavnov formula is proposed as a solution. This type of procedure is motivated by the work of V~E~Korepin and A~G~Izergin on the domain wall partition function of the six-vertex model, whereby a set of conditions was found in \cite{kor} and subsequently solved in \cite{ize}.  

In \cite{fel2}, B~U~Felderhof defined a transfer matrix which commutes with the Hamiltonian of the XY spin-$\frac{1}{2}$ chain in an external magnetic field. The transfer matrix thus constructed was parametrized in terms of Jacobi elliptic functions. By specializing to the trigonometric limit, one obtains a transfer matrix which commutes with the Hamiltonian of the XX spin-$\frac{1}{2}$ chain in an external field. We shall consider this latter case in the second part of the paper, and refer to it as the trigonometric Felderhof model. It previously appeared in \cite{da1} as the first in a hierarchy of vertex models, and its domain wall partition function was calculated in \cite{fwz1} using an Izergin-Korepin type of procedure. In reflection of the free-fermionic nature of this model the domain wall partition function has a product form, in which all zeros clearly factorize. In this work we calculate the scalar product of the trigonometric Felderhof model, once again assuming the Bethe equations for one set of variables. We find that the scalar product also has a product form, with all zeros factorized. Some of the zeros are of the type found in the domain wall partition function, while the rest are roots of the Bethe equations.

The outline of the paper is as follows: In \S 2 we introduce basic facts related to the XXZ spin-$\frac{1}{2}$ chain, describing its solution via the algebraic Bethe Ansatz. In \S 3 we define the domain wall partition function of the six-vertex model, and review the method of its evaluation due to \cite{kor,ize}. This serves as a useful introduction to \S 4, where we present our Izergin-Korepin procedure for calculating the scalar product. We set up a sequence of intermediate scalar products $\{S_n\}_{0 \leq n \leq N}$ which were originally considered in \cite{kmt}, and write down conditions which uniquely determine them, before proposing a solution and showing that it is valid. In \S 5 we introduce the XX spin-$\frac{1}{2}$ chain in an external magnetic field, and describe the method of diagonalizing its Hamiltonian due to Felderhof. The calculation of the trigonometric Felderhof domain wall partition function is reviewed in \S 6, before we evaluate its scalar product in \S 7. Once again, an Izergin-Korepin approach is employed in calculating these quantities. 

\section{XXZ spin-$\frac{1}{2}$ chain}
\label{xxz-intro}

In this section we introduce the basics of the XXZ spin-$\frac{1}{2}$ chain, describing its solution via the quantum inverse scattering method/algebraic Bethe Ansatz. For a more general introduction, the reader is referred to chapters VI and VII of the book \cite{kbi}.

\subsection{Space of states $V$}

The finite length XXZ spin-$\frac{1}{2}$ chain consists of a one-dimensional lattice with $M$ sites. Each site $m$ is occupied by a spin-$\frac{1}{2}$ fermion and has a corresponding two-dimensional vector space $V_m$ with the basis
\begin{align}
{\rm Basis}(V_m)
=
\Big\{ \uparrow_m,\downarrow_m \Big\},
\end{align}
where for convenience we have adopted the notations 
\begin{align}
\uparrow_m\  
=
\Big(
\begin{array}{c}
1 \\ 0
\end{array}
\Big)_m,
\quad 
\downarrow_m\ 
=
\Big(
\begin{array}{c}
0 \\ 1
\end{array}
\Big)_m
\label{V_i}.
\end{align}
Physically speaking, $\uparrow_m$ and $\downarrow_m$ represent the spin eigenstates of a spin-$\frac{1}{2}$ fermion at site $m$. The space of states $V$ is defined as
\begin{align}
V = V_1 \otimes \cdots \otimes V_M
\end{align}
and it is the goal of the algebraic Bethe Ansatz to find states within $V$ which are eigenvectors of the Hamiltonian (see \S \ref{s-ham}). The two simplest states in $V$ are those for which all spins are up/down, and we prescribe them the notation
\begin{align}
|\Uparrow_M\rangle = \bigotimes_{m=1}^{M} \uparrow_m,
\quad
|\Downarrow_M\rangle = \bigotimes_{m=1}^{M} \downarrow_m.
\end{align}
We will also make use of the definitions
\begin{align}
|\Uparrow_{N/M}\rangle
=
\bigotimes_{1 \leq m \leq N}
\uparrow_m
\bigotimes_{N < m \leq M}
\downarrow_m,
\quad\quad
|\Downarrow_{N/M}\rangle
=
\bigotimes_{1 \leq m \leq N}
\downarrow_m
\bigotimes_{N < m \leq M}
\uparrow_m
\end{align}
for states whose first $N$ spins are up/down, with all the remaining spins being down/up, respectively. 

Analogous definitions apply in the construction of dual spaces. To each site $m$ we associate the dual vector space $V_m^{*}$ with the basis
\begin{align}
{\rm Basis}(V_m^{*})
=
\Big\{ 
\uparrow_m^{*},\downarrow_m^{*}
\Big\},
\end{align}
where we have adopted the notations
\begin{align}
\uparrow^{*}_m\ 
=
\left(
\begin{array}{cc}
1 & 0 
\end{array}
\right)_m,
\quad 
\downarrow^{*}_m\ 
=
\left(
\begin{array}{cc}
0 & 1
\end{array} 
\right)_m
\label{V_i*},
\end{align}
and from this we construct the dual space of states 
\begin{align}
V^{*} = V_1^{*} \otimes \cdots \otimes V_M^{*}.
\end{align}
Similarly to before we shall write
\begin{align}
\langle \Uparrow_M| = \bigotimes_{m=1}^{M} \uparrow^*_m,
\quad 
\langle \Downarrow_M| = \bigotimes_{m=1}^{M} \downarrow^*_m
\end{align}
for the dual states whose spins are all up/down, and
\begin{align}
\langle \Uparrow_{N/M}|
=
\bigotimes_{1 \leq m \leq N}
\uparrow^{*}_m
\bigotimes_{N < m \leq M}
\downarrow^{*}_m,
\quad\quad
\langle \Downarrow_{N/M}|
=
\bigotimes_{1 \leq m \leq N}
\downarrow^{*}_m
\bigotimes_{N < m \leq M}
\uparrow^{*}_m
\end{align}
for dual states whose first $N$ spins are up/down, with all remaining spins being down/up, respectively.

\subsection{Pauli matrices}

We define the Pauli matrices
\begin{align}
\sigma_m^{x}
=
\left(
\begin{array}{rr}
0 & 1 
\\
1 & 0 
\end{array}
\right)_m,
\quad
\sigma_m^{y}
=
\left(
\begin{array}{rr}
0 & -i 
\\
i & 0 
\end{array}
\right)_m,
\quad
\sigma_m^{z}
=
\left(
\begin{array}{rr}
1 & 0 
\\
0 & -1
\end{array}
\right)_m
\label{pauli}
\end{align}
with $i = \sqrt{-1}$, and the spin raising/lowering matrices
\begin{align}
\sigma_m^{+}
=
\frac{1}{2}
(\sigma_m^{x} + i \sigma_m^{y})
=
\left(
\begin{array}{cc}
0 & 1
\\
0 & 0 
\end{array}
\right)_m,
\quad
\sigma_m^{-}
=
\frac{1}{2}
(\sigma_m^{x}-i\sigma_m^{y})
=
\left(
\begin{array}{cc}
0 & 0
\\
1 & 0
\end{array}
\right)_m
\label{spin-chang} 
\end{align}
where in all cases the subscript $m$ is used to indicate that the matrices act in the space $V_m$.

\subsection{Hamiltonian $H$}
\label{s-ham}

The Hamiltonian of the finite length XXZ spin-$\frac{1}{2}$ chain is given by 
\begin{align}
H
=
\frac{1}{2}
\sum_{m=1}^{M} 
\Big(
\sigma_m^x \sigma_{m+1}^x
+
\sigma_m^y \sigma_{m+1}^y
+
\Delta (\sigma_m^z \sigma_{m+1}^z+1)
\Big),
\label{hamiltonian}
\end{align}
where $\Delta=\cosh\gamma$ is the anisotropy parameter of the model, and the periodicity conditions $\sigma_{M+1}^{x}=\sigma_1^{x},\sigma_{M+1}^{y}=\sigma_1^{y},\sigma_{M+1}^{z}=\sigma_1^{z}$ 
are assumed. In the coming subsections, we shall review the algebraic Bethe Ansatz procedure for finding the eigenvectors $|\Psi\rangle \in V$ of $H$.

\subsection{$R$-matrix, crossing symmetry and Yang-Baxter equation}

The $R$-matrix corresponding to the XXZ spin-$\frac{1}{2}$ chain is given by
\begin{align}
R_{ab}(u,v)
=
\left(
\begin{array}{cccc}
[u-v+\gamma] & 0                          & 0              & 0    \\
0                                        & [u-v]    & [\gamma]            & 0    \\
0                                        & [\gamma]             & [u-v]  & 0    \\
0                            & 0                          & 0              & [u-v+\gamma] 
\end{array}
\right)_{ab},
\label{Rmat1}
\end{align}
where we have defined $[u]=\sinh u$. The $R$-matrix is an element of ${\rm End}(V_a\otimes V_b)$, where $V_a,V_b$ are copies of $\mathbb{C}^2$. The variables $u,v$ are rapidities associated to the respective vector spaces $V_a,V_b$, and $\gamma$ is the crossing parameter. We identify the entries of (\ref{Rmat1}) with vertices, as shown in figure \ref{R6v}.

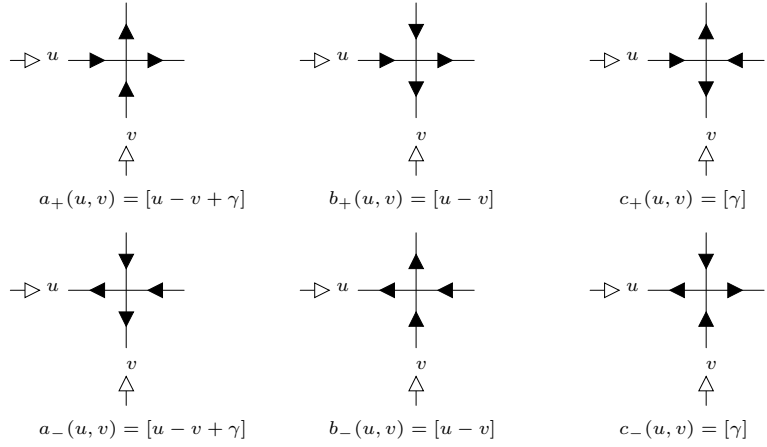
\begin{figure}[H]
\begin{center}
\begin{minipage}{4.3in}

\setlength{\unitlength}{0.00038cm}
\begin{picture}(20000,14000)(-4500,-13000)

\path(-2000,0000)(2000,0000)
\blacken\path(-1250,250)(-1250,-250)(-750,0)(-1250,250)
\blacken\path(750,250)(750,-250)(1250,0)(750,250)
\put(-2750,0){\scriptsize{$u$}}
\path(-4000,0)(-3000,0)
\whiten\path(-3500,250)(-3500,-250)(-3000,0)(-3500,250)
\path(0000,-2000)(0000,2000)
\blacken\path(-250,-1250)(250,-1250)(0,-750)(-250,-1250)
\blacken\path(-250,750)(250,750)(0,1250)(-250,750)
\put(-3000,-5000){\scriptsize$a_{+}(u,v)=[u-v+\gamma]$}
\put(0,-2750){\scriptsize{$v$}}
\path(0,-4000)(0,-3000)
\whiten\path(-250,-3500)(250,-3500)(0,-3000)(-250,-3500)

\path(8000,0000)(12000,0000)
\blacken\path(8750,250)(8750,-250)(9250,0)(8750,250)
\blacken\path(10750,250)(10750,-250)(11250,0)(10750,250)
\put(7250,0){\scriptsize{$u$}}
\path(6000,0)(7000,0)
\whiten\path(6500,250)(6500,-250)(7000,0)(6500,250)
\path(10000,-2000)(10000,2000)
\blacken\path(9750,-750)(10250,-750)(10000,-1250)(9750,-750)
\blacken\path(9750,1250)(10250,1250)(10000,750)(9750,1250)
\put(7000,-5000){\scriptsize$b_{+}(u,v)=[u-v]$}
\put(10000,-2750){\scriptsize{$v$}}
\path(10000,-4000)(10000,-3000)
\whiten\path(9750,-3500)(10250,-3500)(10000,-3000)(9750,-3500)

\path(18000,0000)(22000,0000)
\blacken\path(18750,250)(18750,-250)(19250,0)(18750,250)
\blacken\path(21250,250)(21250,-250)(20750,0)(21250,250)
\put(17250,0){\scriptsize{$u$}}
\path(16000,0)(17000,0)
\whiten\path(16500,250)(16500,-250)(17000,0)(16500,250)
\path(20000,-2000)(20000,2000)
\blacken\path(19750,-750)(20250,-750)(20000,-1250)(19750,-750)
\blacken\path(19750,750)(20250,750)(20000,1250)(19750,750)
\put(17000,-5000){\scriptsize$c_{+}(u,v)=[\gamma]$}
\put(20000,-2750){\scriptsize{$v$}}
\path(20000,-4000)(20000,-3000)
\whiten\path(19750,-3500)(20250,-3500)(20000,-3000)(19750,-3500)

\path(-2000,-8000)(2000,-8000)
\blacken\path(-750,-7750)(-750,-8250)(-1250,-8000)(-750,-7750)
\blacken\path(1250,-7750)(1250,-8250)(750,-8000)(1250,-7750)
\put(-2750,-8000){\scriptsize{$u$}}
\path(-4000,-8000)(-3000,-8000)
\whiten\path(-3500,-7750)(-3500,-8250)(-3000,-8000)(-3500,-7750)
\path(0000,-10000)(0000,-6000)
\blacken\path(-250,-8750)(250,-8750)(0,-9250)(-250,-8750)
\blacken\path(-250,-6750)(250,-6750)(0,-7250)(-250,-6750)
\put(-3000,-13000){\scriptsize$a_{-}(u,v)=[u-v+\gamma]$}
\put(0,-10750){\scriptsize{$v$}}
\path(0,-12000)(0,-11000)
\whiten\path(-250,-11500)(250,-11500)(0,-11000)(-250,-11500)

\path(8000,-8000)(12000,-8000)
\blacken\path(9250,-7750)(9250,-8250)(8750,-8000)(9250,-7750)
\blacken\path(11250,-7750)(11250,-8250)(10750,-8000)(11250,-7750)
\put(7250,-8000){\scriptsize{$u$}}
\path(6000,-8000)(7000,-8000)
\whiten\path(6500,-7750)(6500,-8250)(7000,-8000)(6500,-7750)
\path(10000,-10000)(10000,-6000)
\blacken\path(9750,-9250)(10250,-9250)(10000,-8750)(9750,-9250)
\blacken\path(9750,-7250)(10250,-7250)(10000,-6750)(9750,-7250)
\put(7000,-13000){\scriptsize$b_{-}(u,v)=[u-v]$}
\put(10000,-10750){\scriptsize{$v$}}
\path(10000,-12000)(10000,-11000)
\whiten\path(9750,-11500)(10250,-11500)(10000,-11000)(9750,-11500)

\path(18000,-8000)(22000,-8000)
\blacken\path(19250,-7750)(19250,-8250)(18750,-8000)(19250,-7750)
\blacken\path(20750,-7750)(20750,-8250)(21250,-8000)(20750,-7750)
\put(17250,-8000){\scriptsize{$u$}}
\path(16000,-8000)(17000,-8000)
\whiten\path(16500,-7750)(16500,-8250)(17000,-8000)(16500,-7750)
\path(20000,-10000)(20000,-6000)
\blacken\path(19750,-9250)(20250,-9250)(20000,-8750)(19750,-9250)
\blacken\path(19750,-6750)(20250,-6750)(20000,-7250)(19750,-6750)
\put(17000,-13000){\scriptsize$c_{-}(u,v)=[\gamma]$}
\put(20000,-10750){\scriptsize{$v$}}
\path(20000,-12000)(20000,-11000)
\whiten\path(19750,-11500)(20250,-11500)(20000,-11000)(19750,-11500)

\end{picture}

\end{minipage}
\end{center}

\caption[Six vertices associated to the XXZ $R$-matrix]{Six vertices associated to the XXZ $R$-matrix. Each entry of the $R$-matrix (\ref{Rmat1}) is matched with a vertex. All types of vertex not shown are by definition weighted to zero.} 

\label{R6v}
\end{figure} 

\begin{lemma} 
{\rm
Define $\bar{u} = u+\gamma$ for all rapidities $u$. The $R$-matrix has the {\it crossing symmetry} property
\begin{align}
R_{ab}(u,v)
=
-\sigma_b^y R_{ba}(v,\bar{u})^{{\rm t}_b} \sigma_b^y
\label{cross}
\end{align}
where $\sigma_b^y$ is the second of the Pauli matrices (\ref{pauli}) acting in $V_b$, and ${\rm t}_b$ denotes transposition in the space ${\rm End}(V_b)$. 
}
\end{lemma}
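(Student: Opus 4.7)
My plan is to verify the identity entry by entry. Both sides of (\ref{cross}) are operators on $V_a \otimes V_b$; since $R_{ab}(u,v)$ has only six non-zero matrix elements, the check amounts to confirming six scalar equalities and the vanishing of the remaining entries on the right-hand side.

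First I would write $R_{ba}(v,\bar u)$ out explicitly from (\ref{Rmat1}) after substituting $\bar u = u+\gamma$. Using $\sinh(-x) = -\sinh x$, its diagonal entries become $[v-u]$ at the parallel positions $\uparrow\uparrow, \downarrow\downarrow$ and $[v-u-\gamma]$ at the anti-parallel positions $\uparrow\downarrow, \downarrow\uparrow$, while the two $[\gamma]$ off-diagonal entries are unaltered. I would then view this matrix as a $2\times 2$ block matrix with $V_b$ as the outer space and blocks acting in $V_a$, of the schematic form $\bigl(\begin{smallmatrix} X & Y \\ Z & W \end{smallmatrix}\bigr)_b$. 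The partial transpose $t_b$ simply swaps the off-diagonal blocks to give $\bigl(\begin{smallmatrix} X & Z \\ Y & W \end{smallmatrix}\bigr)_b$.

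The crucial step is the conjugation by $\sigma^y_b$. Directly from the definition one computes, for any $2\times 2$ matrix,
\begin{align*}
\sigma^y \begin{pmatrix} a & b \\ c & d \end{pmatrix} \sigma^y = \begin{pmatrix} d & -c \\ -b & a \end{pmatrix},
\end{align*}
which applied block-wise takes $R_{ba}(v,\bar u)^{t_b}$ to $\bigl(\begin{smallmatrix} W & -Y \\ -Z & X \end{smallmatrix}\bigr)_b$. Multiplying by the overall $-1$ and reading off the entries, the interchange of the diagonal blocks $X \leftrightarrow W$ carries $-[v-u]$ into the anti-parallel positions (yielding $[u-v]$) and $-[v-u-\gamma]$ into the parallel positions (yielding $[u-v+\gamma]$); the signs on the off-diagonal blocks cancel with the overall $-1$, so the two $[\gamma]$ entries reappear at the "crossing" positions. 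This is precisely $R_{ab}(u,v)$.

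The calculation is conceptually routine; the only real obstacle is careful index bookkeeping under three consecutive operations on $V_b$. What makes the identity work is the shift $\bar u = u + \gamma$: it is exactly what compensates for the swap of diagonal blocks induced by $\sigma^y_b(\cdot)^{t_b}\sigma^y_b$, so that after the shift the parallel and anti-parallel entries land in the correct positions of $R_{ab}(u,v)$.
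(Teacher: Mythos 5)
Your proposal is correct and follows essentially the same route as the paper: a direct computation of $-\sigma_b^y R_{ba}(v,\bar u)^{t_b}\sigma_b^y$ entry by entry, with the shift $\bar u = u+\gamma$ converting $[v-\bar u]$ and $[v-\bar u+\gamma]$ into the entries of $R_{ab}(u,v)$. Organizing the calculation via $2\times 2$ blocks in $V_b$ and the identity $\sigma^y\bigl(\begin{smallmatrix}a&b\\c&d\end{smallmatrix}\bigr)\sigma^y=\bigl(\begin{smallmatrix}d&-c\\-b&a\end{smallmatrix}\bigr)$ is just a tidier bookkeeping of the same $4\times4$ verification the paper writes out explicitly.
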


\begin{proof} 
We have
\begin{align}
R_{ba}(v,\bar{u})^{{\rm t}_b}
=
\left(
\begin{array}{cccc}
[v-\bar{u}+\gamma] & 0 & 0 & [\gamma]
\\
0 & [v-\bar{u}] & 0 & 0
\\
0 & 0 & [v-\bar{u}] & 0
\\ 
\phantom{.}[\gamma] & 0 & 0 & [v-\bar{u}+\gamma]
\end{array}
\right)_{ba},
\end{align}
which leads to the equation
\begin{align}
-\sigma_b^{y} R_{ba}(v,\bar{u})^{{\rm t}_b} \sigma_b^{y}
=
\left(
\begin{array}{cccc}
[\bar{u}-v] & 0 & 0 & 0
\\
0 & [\bar{u}-v-\gamma] & [\gamma] & 0
\\
0 & [\gamma] & [\bar{u}-v-\gamma] & 0
\\
0 & 0 & 0 & [\bar{u}-v]
\end{array}
\right)_{ba}.
\end{align}
Finally, using the definition $\bar{u} = u+\gamma$ and the fact that $R_{ba}(u,v) = R_{ab}(u,v)$, we prove (\ref{cross}).

\end{proof}

\begin{lemma}
{\rm The $R$-matrix obeys the {\it Yang-Baxter equation} 
\begin{align}
R_{ab}(u,v) R_{ac}(u,w) R_{bc}(v,w)
=
R_{bc}(v,w) R_{ac}(u,w) R_{ab}(u,v),
\label{YB}
\end{align}
which holds in ${\rm End}(V_a \otimes V_b \otimes V_c)$ for all $u,v,w$.
}
\end{lemma}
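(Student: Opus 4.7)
The plan is to verify (\ref{YB}) by direct computation on the eight-dimensional space $V_a\otimes V_b\otimes V_c$. First I would observe that every non-zero entry of (\ref{Rmat1}) preserves the total $z$-component of spin (the ``ice rule'' visible in figure \ref{R6v}), so both sides of (\ref{YB}) decompose as a direct sum over the eigenspaces of $\sigma^z_a+\sigma^z_b+\sigma^z_c$, of dimensions $1+3+3+1$. It therefore suffices to verify (\ref{YB}) block by block.

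On the two one-dimensional blocks (all spins up, or all spins down) both sides act as multiplication by $[u-v+\gamma][u-w+\gamma][v-w+\gamma]$, read off immediately from the corner entries of the $R$-matrix. The $R$-matrix is moreover invariant under simultaneous arrow-reversal, since $a_{+}=a_{-}$, $b_{+}=b_{-}$ and $c_{+}=c_{-}$ in figure \ref{R6v}, so the block containing the single-down-spin states is mapped onto the block containing the single-up-spin states, and only one three-dimensional block remains to be checked.

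For that block I would expand both sides of (\ref{YB}) as $3\times 3$ matrices in the basis $\{\downarrow\uparrow\uparrow,\,\uparrow\downarrow\uparrow,\,\uparrow\uparrow\downarrow\}$. Summing over the intermediate spin configurations of the three-factor product, each matrix entry becomes a short sum of products of the Boltzmann weights of (\ref{Rmat1}). The diagonal entries match termwise on both sides, while the off-diagonal entries each reduce to a single instance of the hyperbolic Ptolemy-type identity
\begin{equation*}
[x][y-z]+[y][z-x]+[z][x-y]=0,
\end{equation*}
with arguments drawn from $\{u-v,\,u-w,\,v-w,\,\gamma\}$. This identity is itself an immediate corollary of the addition formula $\sinh(\alpha+\beta)=\sinh\alpha\cosh\beta+\cosh\alpha\sinh\beta$.

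The main obstacle is combinatorial rather than algebraic: the whole verification distills to one trigonometric identity, but one must enumerate all intermediate vertex configurations on each side of (\ref{YB}) and pair them up correctly between the two sides. A cleaner alternative, worth mentioning, is the purely diagrammatic argument in which each side of (\ref{YB}) is drawn as a three-strand lattice with the crossings in opposite cyclic orders, and equality is established by sliding the middle strand past the crossing of the outer two.
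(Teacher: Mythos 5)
The paper states this lemma without proof --- the Yang-Baxter equation for the symmetric six-vertex $R$-matrix is treated as standard background --- so your verification supplies an argument the paper omits rather than diverging from one. Your outline is correct. Conservation of $\sigma^z_a+\sigma^z_b+\sigma^z_c$ does split both sides of (\ref{YB}) into blocks of sizes $1,3,3,1$; the two one-dimensional blocks agree trivially, each side acting by $[u-v+\gamma][u-w+\gamma][v-w+\gamma]$; and since $a_{+}=a_{-}$, $b_{+}=b_{-}$, $c_{+}=c_{-}$, conjugation by $\sigma^x_a\sigma^x_b\sigma^x_c$ commutes with every $R$-matrix and exchanges the two three-dimensional blocks, so only one of them needs checking. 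In that block the diagonal entries match termwise as you claim, and each off-diagonal entry reduces to an identity of the type
\begin{align*}
[u-v+\gamma][v-w+\gamma]=[u-v][v-w]+[\gamma][u-w+\gamma],
\end{align*}
which is an instance of $[a-b][c-d]-[a-c][b-d]+[a-d][b-c]=0$ and hence of the three-term identity you quote (your phrase ``arguments drawn from $\{u-v,u-w,v-w,\gamma\}$'' is slightly loose, since shifts of these differences by $\gamma$ occur, but the identity invoked is the right one and follows from the addition formula as you say). One caveat: the ``cleaner alternative'' you mention at the end --- sliding the middle strand past the crossing of the outer two --- is not an independent proof, because that sliding move \emph{is} the content of the Yang-Baxter equation; invoking it would be circular. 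The direct block-by-block computation is the argument to keep.
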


\subsection{$L$-matrix and local intertwining equation}

The $L$-matrix for the XXZ model depends on a single indeterminate $u$, and acts in the space $V_a$. Its entries are operators acting at the $m^{\rm th}$ lattice site, and identically everywhere else. It has the form
\begin{align}
L_{am}(u)
=
\left(
\begin{array}{cc}
[u+\frac{\gamma}{2}\sigma_m^{z}]
&
[\gamma] \sigma_m^{-}
\\
\phantom{.}
[\gamma] \sigma_m^{+}
&
[u-\frac{\gamma}{2} \sigma_m^{z}]
\end{array}
\right)_a,
\label{xxz-Lmat}
\end{align}
where we have defined, as before, $[u] = \sinh u$. 

\begin{lemma}
{\rm
Using the definition of the $R$-matrix (\ref{Rmat1}) and the $L$-matrix (\ref{xxz-Lmat}), the local intertwining equation is given by
\begin{align}
R_{ab}(u,v) L_{am}(u) L_{bm}(v)
=
L_{bm}(v) L_{am}(u) R_{ab}(u,v).
\label{xxz-Lint}
\end{align}
}
\end{lemma}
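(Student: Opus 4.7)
The plan is to prove (\ref{xxz-Lint}) by reducing it to a specialization of the Yang-Baxter equation (\ref{YB}). The crucial observation is that the $L$-matrix (\ref{xxz-Lmat}), once its operator-valued entries are resolved on the two-dimensional space $V_m$, coincides with the $R$-matrix (\ref{Rmat1}) at a particular value of the second rapidity.

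The first step is to rewrite $L_{am}(u)$ as a $4\times 4$ c-number matrix on $V_a \otimes V_m$ in the basis $\{\uparrow_a\uparrow_m,\ \uparrow_a\downarrow_m,\ \downarrow_a\uparrow_m,\ \downarrow_a\downarrow_m\}$. Using $\sigma_m^z\uparrow_m = \uparrow_m$ and $\sigma_m^z\downarrow_m = -\downarrow_m$, together with the definitions (\ref{spin-chang}) of $\sigma_m^{\pm}$, one finds that the diagonal operator entries $[u \pm \tfrac{\gamma}{2}\sigma_m^z]$ contribute $[u+\gamma/2]$ at positions $(1,1),(4,4)$ and $[u-\gamma/2]$ at positions $(2,2),(3,3)$, while $[\gamma]\sigma_m^{\mp}$ contribute $[\gamma]$ at positions $(2,3)$ and $(3,2)$. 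Comparing with (\ref{Rmat1}), one reads off the identification
\begin{align*}
L_{am}(u) = R_{am}(u,\gamma/2),
\qquad
L_{bm}(v) = R_{bm}(v,\gamma/2).
\end{align*}

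With this in hand, the local intertwining equation (\ref{xxz-Lint}) is nothing more than the Yang-Baxter equation (\ref{YB}) with the third auxiliary space relabelled $V_c \to V_m$ and the third rapidity specialized to $w = \gamma/2$. Since (\ref{YB}) holds for arbitrary $w$, no obstruction arises from this specialization, and the proof is complete modulo the verification in step one.

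The only genuine labour lies in that first step, and it is mechanical: it amounts to resolving the Pauli-matrix entries on the two basis vectors of $V_m$. An alternative brute-force route would be to multiply out both sides of (\ref{xxz-Lint}) directly as $8\times 8$ matrices in ${\rm End}(V_a \otimes V_b \otimes V_m)$ and check each non-zero component, but this would merely replicate, in a far less transparent form, the computation already encoded in the Yang-Baxter equation. The main conceptual obstacle — namely the braided triangle identity — has already been discharged in the previous lemma, so the present statement is genuinely a corollary of (\ref{YB}) rather than an independent calculation.
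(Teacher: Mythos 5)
Your proposal is correct and follows exactly the same route as the paper: both identify $L_{am}(u)$ with $R_{am}(u,\gamma/2)$ by resolving the Pauli-matrix entries on $V_m$, and then obtain (\ref{xxz-Lint}) as the specialization $w=\gamma/2$ of the Yang-Baxter equation (\ref{YB}).
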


\begin{proof}

Using the matrix representations of $\{\sigma_m^{+},\sigma_m^{-},\sigma_m^{z}\}$ given by equations (\ref{pauli}) and (\ref{spin-chang}), we find that the $L$-matrix (\ref{xxz-Lmat}) takes the form
\begin{align}
L_{am}(u)
=
\left(
\begin{array}{cccc}
[u+\frac{\gamma}{2}] & 0 & 0 & 0
\\
0 & [u-\frac{\gamma}{2}] & [\gamma] & 0
\\
0 & [\gamma] & [u-\frac{\gamma}{2}] & 0
\\
0 & 0 & 0 & [u+\frac{\gamma}{2}]
\end{array}
\right)_{am}
=
R_{am}(u,\gamma/2),
\end{align}
from which we see that it is equal to the $R$-matrix $R_{am}(u,w_m)$ with $w_m = \frac{\gamma}{2}$. The local intertwining equation (\ref{xxz-Lint}) becomes
\begin{align}
R_{ab}(u,v) R_{am}(u,\gamma/2) R_{bm}(v,\gamma/2)
=
R_{bm}(v,\gamma/2) R_{am}(u,\gamma/2) R_{ab}(u,v),
\end{align}
which is simply a corollary of the Yang-Baxter equation (\ref{YB}).

\end{proof}

\subsection{Monodromy matrix and global intertwining equation}
\label{s-xxzmon}

The monodromy matrix is an ordered product of $L$-matrices, taken across all sites of the chain. It is given by
\begin{align}
T_a(u)
=
L_{a1}(u)\ldots L_{aM}(u).
\end{align}
In the last subsection we observed that the $L$-matrix $L_{am}(u)$ is equal to the $R$-matrix $R_{am}(u,w_m)$ under the specialization $w_m =\frac{\gamma}{2}$. Using this observation, it is convenient to construct an inhomogeneous monodromy matrix as an ordered product of the $R$-matrices $R_{am}(u,w_m)$, without restricting the variables $w_m$. That is, we define
\begin{align}
T_{a}(u,\{w\}_M)
=
R_{a1}(u,w_1) \ldots R_{aM}(u,w_M).
\label{Tmat}
\end{align}
The variables $\{w_1,\ldots,w_M\}$ are called {\it inhomogeneities} and the usual monodromy matrix is recovered by setting $w_m = \frac{\gamma}{2}$ for all $1\leq m \leq M$. It turns out that the inclusion of the variables $\{w_1,\ldots,w_M\}$ simplifies many later calculations. The contribution from the space ${\rm End}(V_a)$ can be exhibited explicitly by defining
\begin{align}
T_{a}(u,\{w\}_M)
=
\left(
\begin{array}{cc}
A(u,\{w\}_M) & B(u,\{w\}_M) \\
C(u,\{w\}_M) & D(u,\{w\}_M)
\end{array}
\right)_a,
\label{Tmat2}
\end{align}
where the matrix entries are all operators acting in $V=V_1\otimes\cdots\otimes V_M$. The diagrammatic version of these operators is given in figure \ref{6vT}.

\begin{figure}[H]

\begin{center}
\begin{minipage}{4.3in}

\setlength{\unitlength}{0.0004cm}
\begin{picture}(20000,15000)(-2500,-13000)

\path(-2000,0000)(10000,0000)
\blacken\path(-1250,250)(-1250,-250)(-750,0)(-1250,250)
\blacken\path(8750,250)(8750,-250)(9250,0)(8750,250)
\put(-2750,0){$u$}
\path(-4000,0)(-3000,0)
\whiten\path(-3500,250)(-3500,-250)(-3000,0)(-3500,250)
\path(0000,-2000)(0000,2000)
\put(-400,-2700){\scriptsize$w_1$}
\path(0,-4000)(0,-3000)
\whiten\path(-250,-3500)(250,-3500)(0,-3000)(-250,-3500)
\path(2000,-2000)(2000,2000)
\path(4000,-2000)(4000,2000)
\path(6000,-2000)(6000,2000)
\path(8000,-2000)(8000,2000)
\put(7800,-2700){\scriptsize$w_M$}
\path(8000,-4000)(8000,-3000)
\whiten\path(7750,-3500)(8250,-3500)(8000,-3000)(7750,-3500)
\put(1500,-4500)
{$A(u,\{w\}_M)$}

\path(14000,0000)(26000,0000)
\blacken\path(14750,250)(14750,-250)(15250,0)(14750,250)
\blacken\path(25250,250)(25250,-250)(24750,0)(25250,250)
\put(13250,0){$u$}
\path(12000,0)(13000,0)
\whiten\path(12500,250)(12500,-250)(13000,0)(12500,250)
\path(16000,-2000)(16000,2000)
\put(15600,-2700){\scriptsize$w_1$}
\path(16000,-4000)(16000,-3000)
\whiten\path(15750,-3500)(16250,-3500)(16000,-3000)(15750,-3500)
\path(18000,-2000)(18000,2000)
\path(20000,-2000)(20000,2000)
\path(22000,-2000)(22000,2000)
\path(24000,-2000)(24000,2000)
\put(23800,-2700){\scriptsize$w_M$}
\path(24000,-4000)(24000,-3000)
\whiten\path(23750,-3500)(24250,-3500)(24000,-3000)(23750,-3500)
\put(17500,-4500)
{$B(u,\{w\}_M)$}

\path(-2000,-8000)(10000,-8000)
\blacken\path(-750,-7750)(-750,-8250)(-1250,-8000)(-750,-7750)
\blacken\path(8750,-7750)(8750,-8250)(9250,-8000)(8750,-7750)
\put(-2750,-8000){$u$}
\path(-4000,-8000)(-3000,-8000)
\whiten\path(-3500,-7750)(-3500,-8250)(-3000,-8000)(-3500,-7750)
\path(0000,-10000)(0000,-6000)
\put(-400,-10700){\scriptsize$w_1$}
\path(0,-12000)(0,-11000)
\whiten\path(-250,-11500)(250,-11500)(0,-11000)(-250,-11500)
\path(2000,-10000)(2000,-6000)
\path(4000,-10000)(4000,-6000)
\path(6000,-10000)(6000,-6000)
\path(8000,-10000)(8000,-6000)
\put(7800,-10700){\scriptsize$w_M$}
\path(8000,-12000)(8000,-11000)
\whiten\path(7750,-11500)(8250,-11500)(8000,-11000)(7750,-11500)
\put(1500,-12500)
{$C(u,\{w\}_M)$}

\path(14000,-8000)(26000,-8000)
\blacken\path(15250,-7750)(15250,-8250)(14750,-8000)(15250,-7750)
\blacken\path(25250,-7750)(25250,-8250)(24750,-8000)(25250,-7750)
\put(13250,-8000){$u$}
\path(12000,-8000)(13000,-8000)
\whiten\path(12500,-7750)(12500,-8250)(13000,-8000)(12500,-7750)
\path(16000,-10000)(16000,-6000)
\put(15600,-10700){\scriptsize$w_1$}
\path(16000,-12000)(16000,-11000)
\whiten\path(15750,-11500)(16250,-11500)(16000,-11000)(15750,-11500)
\path(18000,-10000)(18000,-6000)
\path(20000,-10000)(20000,-6000)
\path(22000,-10000)(22000,-6000)
\path(24000,-10000)(24000,-6000)
\put(23800,-10700){\scriptsize$w_M$}
\path(24000,-12000)(24000,-11000)
\whiten\path(23750,-11500)(24250,-11500)(24000,-11000)(23750,-11500)
\put(17500,-12500)
{$D(u,\{w\}_M)$}

\end{picture}

\end{minipage}

\end{center}

\caption[Four vertex-strings of the XXZ monodromy matrix]{Four vertex-strings of the XXZ monodromy matrix. Each entry of (\ref{Tmat2}) is matched with a string of $R$-matrix vertices.}
\label{6vT}

\end{figure}
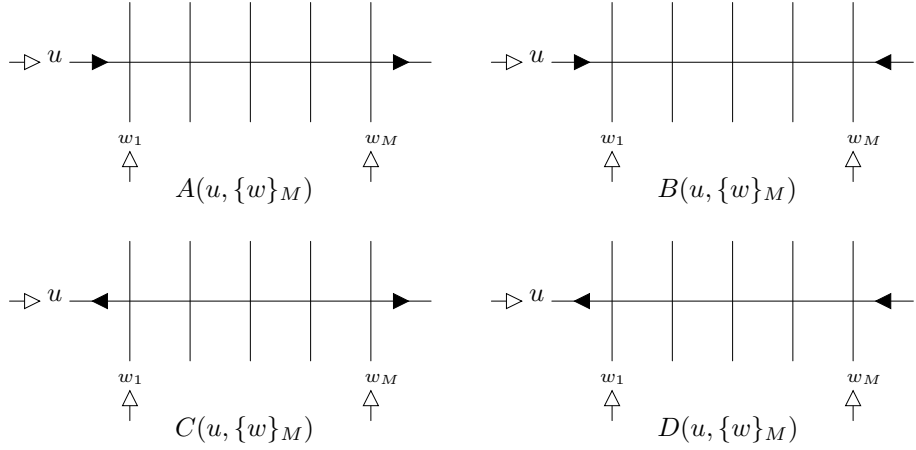 

\noindent Owing to the Yang-Baxter equation (\ref{YB}) and the definition of the monodromy matrix (\ref{Tmat}), we obtain the global intertwining equation 
\begin{align}
R_{ab}(u,v) T_{a}(u,\{w\}_M) T_{b}(v,\{w\}_M)
=
T_{b}(v,\{w\}_M) T_{a}(u,\{w\}_M) R_{ab}(u,v)
\label{int1}
\end{align}
This equation contains sixteen commutation relations between the entries of the monodromy matrix. Particular examples of these commutation relations include
\begin{align}
&B(u) B(v) = B(v) B(u), \label{BB} \\ 
&[u-v+\gamma] B(u) A(v) =  [\gamma] B(v) A(u) + [u-v] A(v) B(u), \label{AB} \\
&[\gamma] B(u) D(v) + [u-v] D(u) B(v) = [u-v+\gamma] B(v) D(u), \label{DB} \\ 
&C(u) C(v) = C(v) C(u), \label{CC} \\
&[\gamma] A(u) C(v) + [u-v] C(u) A(v) = [u-v+\gamma] A(v) C(u), \label{CA} \\ 
&[u-v+\gamma] D(u) C(v) = [\gamma] D(v) C(u) + [u-v] C(v) D(u), \label{CD}
\\
& D(u) D(v) = D(v) D(u), \label{DD} 
\end{align}
which are obtained by multiplying the matrices in (\ref{int1}) and equating components.

\begin{lemma}
{\rm 
Let $\{u\}_{L} = \{u_1,\ldots,u_L\}$ and $\{w\}_M = \{w_1,\ldots,w_M\}$ be two sets of variables, with cardinalities $L,M \geq 1$. To each variable $u_l$ we associate an auxiliary space $V_{a_l}$, while to each variable $w_m$ we associate a quantum space $V_{m}$. Defining
\begin{align}
T\Big(\{u\}_{L},\{w\}_{M}\Big)
=
T_{a_L}(u_L,\{w\}_M)
\ldots
T_{a_1}(u_1,\{w\}_M),
\label{doublemon}
\end{align}
we claim that
\begin{align}
T\Big(\{u\}_{L},\{w\}_{M}\Big)
=
(-)^{LM}
\overline{T}_1(w_1,\{\bar{u}\}_L)
\ldots
\overline{T}_M(w_M,\{\bar{u}\}_L),
\label{doublemon2}
\end{align}
where for all $1 \leq m \leq M$ we have defined
\begin{align}
\overline{T}_m(w_m,\{\bar{u}\}_L)
=
\left(
\begin{array}{rr}
D(w_m,\{\bar{u}\}_L) & -B(w_m,\{\bar{u}\}_L)
\\
-C(w_m,\{\bar{u}\}_L) & A(w_m,\{\bar{u}\}_L)
\end{array}
\right)_m
\label{doublemon!}
\end{align}
with $\{\bar{u}\}_L = \{u_1+\gamma,\ldots,u_L+\gamma\}$.

}
\end{lemma}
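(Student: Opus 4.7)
The plan is to proceed by induction on $L$, the cardinality of $\{u\}_L$. The base case $L=0$ is immediate: the LHS is the empty product of monodromy matrices, and since $T_m(w_m,\emptyset)=I$ each $\overline{T}_m(w_m,\emptyset)=\sigma_m^y\cdot I\cdot \sigma_m^y=I$, so both sides reduce to the identity.

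For the inductive step, the central observation is the single-factor recursion
\begin{equation*}
\overline{T}_m(w_m,\{\bar{u}\}_L) = -R_{a_L,m}(u_L,w_m)\,\overline{T}_m(w_m,\{\bar{u}\}_{L-1}).
\end{equation*}
To obtain this I would split off the last factor, $T_m(w_m,\{\bar{u}\}_L) = T_m(w_m,\{\bar{u}\}_{L-1})\,R_{m,a_L}(w_m,\bar{u}_L)$, take the transpose ${\rm t}_m$ (which reverses the order of the two factors, since both act nontrivially in $V_m$), conjugate by $\sigma_m^y$ to form $\overline{T}_m(w_m,\{\bar{u}\}_L)$, insert $I=(\sigma_m^y)^2$ between the isolated factor and the rest, and invoke the crossing symmetry (\ref{cross}) to identify
\begin{equation*}
\sigma_m^y\, R_{m,a_L}(w_m,\bar{u}_L)^{{\rm t}_m}\, \sigma_m^y = -R_{a_L,m}(u_L,w_m).
\end{equation*}

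Substituting this recursion into each of the $M$ factors on the RHS of (\ref{doublemon2}) contributes a cumulative sign $(-)^M$, upgrading the prefactor to $(-)^{(L+1)M}$. I would then pull each $R_{a_L,m}(u_L,w_m)$ all the way to the left in the resulting product; this is pure bookkeeping, because for $m\neq m'$ the operator $R_{a_L,m}$ acts in $V_{a_L}\otimes V_m$ whereas $\overline{T}_{m'}(w_{m'},\{\bar{u}\}_{L-1})$ acts in $V_{m'}\otimes V_{a_1}\otimes\cdots\otimes V_{a_{L-1}}$, and these tensor factors are disjoint. The leftmost block then collapses to $T_{a_L}(u_L,\{w\}_M)=\prod_{m=1}^M R_{a_L,m}(u_L,w_m)$, and the residual $\prod_m \overline{T}_m(w_m,\{\bar{u}\}_{L-1})$ is identified with $(-)^{(L-1)M}T(\{u\}_{L-1},\{w\}_M)$ by the inductive hypothesis; the two sign factors $(-)^{(L+1)M}$ and $(-)^{(L-1)M}$ combine to $(-)^{2LM}=1$, leaving $T_{a_L}(u_L,\{w\}_M)\,T(\{u\}_{L-1},\{w\}_M) = T(\{u\}_L,\{w\}_M)$, as required. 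The only substantive input is crossing symmetry, which simultaneously supplies the $LM$ signs and swaps the roles of auxiliary and quantum spaces; the rest of the argument is the disjoint-tensor-factor commutativity that legitimises the column-wise reordering.
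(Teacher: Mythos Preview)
Your proof is correct. The paper's argument uses the same two ingredients --- crossing symmetry (\ref{cross}) and commutativity of operators on disjoint tensor factors --- but arranges them differently: rather than inducting on $L$, it expands $T(\{u\}_L,\{w\}_M)$ into all $LM$ $R$-matrices at once, commutes them from the row-first ordering $\prod_l \prod_m R_{a_l m}$ to the column-first ordering $\prod_m \prod_l R_{a_l m}$, applies crossing to every factor simultaneously, uses $(AB)^{{\rm t}_m} = B^{{\rm t}_m} A^{{\rm t}_m}$ to reverse the order within each column, and then recognises each column as $\sigma_m^y T_m(w_m,\{\bar u\}_L)^{{\rm t}_m}\sigma_m^y = \overline{T}_m(w_m,\{\bar u\}_L)$. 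Your induction performs the same reordering one row at a time via the single-factor recursion, which is a clean alternative packaging; the paper's direct version has the mild advantage of making the row--column duality (the interchange of auxiliary and quantum roles) visible in a single global step.
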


\begin{proof}
 
By the definition (\ref{doublemon}) we have
\begin{align}
T\Big(\{u\}_{L},\{w\}_M\Big)
=
\Big(
R_{a_L 1}(u_L,w_1)
\ldots
R_{a_L M}(u_L,w_M)
\Big)
\ldots
\Big(
R_{a_1 1}(u_1,w_1)
\ldots
R_{a_1 M}(u_1,w_M)
\Big).
\end{align}
Commuting $R$-matrices which act in different spaces leads to the equation
\begin{align}
T\Big(\{u\}_{L},\{w\}_M\Big)
=
\label{every}
\Big(
R_{a_L 1}(u_L,w_1)
\ldots
R_{a_1 1}(u_1,w_1)
\Big)
\ldots
\Big(
R_{a_L M}(u_L,w_M)
\ldots
R_{a_1 M}(u_1,w_M)
\Big),
\end{align}
and using the crossing symmetry relation (\ref{cross}) on every $R$-matrix in (\ref{every}) we obtain
\begin{align}
T\Big(\{u\}_{L},\{w\}_M\Big)
&=
(-)^{LM}
\sigma_1^{y}
\Big(
R_{1 a_L}(w_1,\bar{u}_L)^{{\rm t}_1}
\ldots
R_{1 a_1}(w_1,\bar{u}_1)^{{\rm t}_1}
\Big)
\sigma_1^{y}
\label{every2}
\\
&
\times
\ldots
\times 
\sigma_M^{y}
\Big(
R_{M a_L}(w_M,\bar{u}_L)^{{\rm t}_M}
\ldots
R_{M a_1}(w_M,\bar{u}_1)^{{\rm t}_M}
\Big)
\sigma_M^{y}.
\nonumber
\end{align}
From the standard identity of matrix transposition $(AB)^{t} = B^t A^t$ we may reverse the order of the $R$-matrices in (\ref{every2}), yielding
\begin{align}
T\Big(\{u\}_{L},\{w\}_M\Big)
&=
(-)^{LM}
\sigma_1^{y}
\Big(
R_{1 a_1}(w_1,\bar{u}_1)
\ldots
R_{1 a_L}(w_1,\bar{u}_L)
\Big)^{{\rm t}_1}
\sigma_1^{y}
\label{every3}
\\
&
\times
\ldots
\times
\sigma_M^{y}
\Big(
R_{M a_1}(w_M,\bar{u}_1)
\ldots
R_{M a_L}(w_M,\bar{u}_L)
\Big)^{{\rm t}_M}
\sigma_M^{y}.
\nonumber
\end{align}
Finally we replace each parenthesized term in (\ref{every3}) with its corresponding monodromy matrix, which gives 
\begin{align}
T\Big(\{u\}_{L},\{w\}_M\Big)
=
(-)^{LM}
\sigma_1^{y}
T_1(w_1,\{\bar{u}\}_L)^{{\rm t}_1}
\sigma_1^{y}
\ldots
\sigma_M^{y}
T_M(w_M,\{\bar{u}\}_L)^{{\rm t}_M}
\sigma_M^{y}.
\label{every4}
\end{align}
Letting the monodromy matrices in (\ref{every4}) be written in the form
\begin{align}
T_m(w_m,\{\bar{u}\}_L)
=
\left(
\begin{array}{cc}
A(w_m,\{\bar{u}\}_L) & B(w_m,\{\bar{u}\}_L)
\\
C(w_m,\{\bar{u}\}_L) & D(w_m,\{\bar{u}\}_L)
\end{array}
\right)_m
\end{align}
for all $1 \leq m \leq M$ and contracting on the quantum spaces $V_1,\ldots,V_M$, we recover the result (\ref{doublemon2}).

\end{proof}

\begin{lemma} 
{\rm 
The spin-up states $|\Uparrow_M\rangle$ and $\langle \Uparrow_M|$ are eigenvectors of the diagonal elements of the monodromy matrix:
\begin{align}
A(u,\{w\}_M)|\Uparrow_M\rangle
&=
\prod_{j=1}^{M} [u-w_j+\gamma] |\Uparrow_M\rangle,
\quad
D(u,\{w\}_M)|\Uparrow_M\rangle
=
\prod_{j=1}^{M} [u-w_j] |\Uparrow_M\rangle,
\label{diagonal1}
\\
\langle \Uparrow_M | A(u,\{w\}_M)
&=
\prod_{j=1}^{M} [u-w_j+\gamma] \langle \Uparrow_M|,
\quad
\langle \Uparrow_M | D(u,\{w\}_M)
=
\prod_{j=1}^{M} [u-w_j] \langle \Uparrow_M |. 
\label{diagonal1*}
\end{align}
In addition, the spin-down states $|\Downarrow_M\rangle$ and $\langle \Downarrow_M|$ are eigenvectors of the diagonal elements of the monodromy matrix:
\begin{align}
A(u,\{w\}_M)|\Downarrow_M\rangle
&=
\prod_{j=1}^{M} [u-w_j] |\Downarrow_M\rangle,
\quad
D(u,\{w\}_M)|\Downarrow_M\rangle
=
\prod_{j=1}^{M} [u-w_j+\gamma] |\Downarrow_M\rangle,
\label{diagonal2}
\\
\langle \Downarrow_M | A(u,\{w\}_M)
&=
\prod_{j=1}^{M} [u-w_j] \langle \Downarrow_M|,
\quad
\langle \Downarrow_M | D(u,\{w\}_M)
=
\prod_{j=1}^{M} [u-w_j+\gamma] \langle \Downarrow_M |. 
\label{diagonal2*}
\end{align}

}
\end{lemma}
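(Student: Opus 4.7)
The plan is to exploit the local structure of the monodromy matrix $T_a(u,\{w\}_M) = R_{a1}(u,w_1)\cdots R_{aM}(u,w_M)$ together with the arrow-conservation property visible in the $R$-matrix (\ref{Rmat1}): the total number of down spins on the pair $(V_a,V_m)$ is preserved by $R_{am}(u,w_m)$. This already tells me, without computing any weights, that $|\Uparrow_M\rangle$ must be a joint eigenvector of the diagonal entries $A,D$. Indeed, starting from $|\alpha_a\rangle\otimes|\Uparrow_M\rangle$ with $\alpha\in\{\uparrow,\downarrow\}$, the total number of down spins is $0$ or $1$; after applying all $R$-matrices and projecting the auxiliary onto $\langle\alpha_a|$, conservation forces the output quantum state to have the same number of down spins as the input, namely zero, so it is proportional to $|\Uparrow_M\rangle$.

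First I would treat $A(u,\{w\}_M)|\Uparrow_M\rangle$: since $A$ corresponds to the auxiliary entering and exiting as up, and all quantum sites start up, the only action at each site that keeps a quantum spin up is the all-up entry of (\ref{Rmat1}), weighted $[u-w_m+\gamma]$; multiplying over the $M$ sites gives the claimed eigenvalue. For $D(u,\{w\}_M)|\Uparrow_M\rangle$ the auxiliary enters and exits as down, and I would argue that the off-diagonal branch $|\downarrow_a\uparrow_m\rangle \mapsto [\gamma]|\uparrow_a\downarrow_m\rangle$ cannot contribute: once quantum site $m$ is flipped to down, all subsequent $R_{am'}$ with $m'\neq m$ act trivially on it, so that flipped spin survives to the output and violates the eigenvector conclusion obtained above. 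Only the diagonal branch $|\downarrow_a\uparrow_m\rangle \mapsto [u-w_m]|\downarrow_a\uparrow_m\rangle$ survives, producing the product $\prod_{j=1}^{M}[u-w_j]$.

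The remaining identities (\ref{diagonal1*}), (\ref{diagonal2}), (\ref{diagonal2*}) will then follow by symmetry. For the dual states $\langle\Uparrow_M|,\langle\Downarrow_M|$ I would simply read the same diagrams from right to left, using the transposed entries of (\ref{Rmat1}); for $|\Downarrow_M\rangle$ and $\langle\Downarrow_M|$ I would interchange the roles of $\uparrow$ and $\downarrow$ throughout, which swaps the diagonal entries $[u-w_j+\gamma]$ and $[u-w_j]$ between $A$ and $D$, matching (\ref{diagonal2}) and (\ref{diagonal2*}).

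I do not expect any serious obstacle, since the entire argument reduces to inspection of (\ref{Rmat1}) one vertex at a time. The only subtlety worth stating carefully is the remark that any off-diagonal branch flips a quantum site permanently (because each $R_{am}$ touches only one quantum site), so off-diagonal paths automatically drop out of the matrix elements being computed; this is what allows the eigenvalues to be read off directly as products of diagonal weights, with no induction required.
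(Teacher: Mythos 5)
Your argument is correct. Note that the paper states this lemma without proof, treating it as a standard fact of the algebraic Bethe Ansatz, so there is no proof to compare against; but your reasoning — arrow conservation of the $R$-matrix (\ref{Rmat1}) forces the diagonal entries $A,D$ to preserve $|\Uparrow_M\rangle$ and $|\Downarrow_M\rangle$, and any off-diagonal branch permanently flips both the auxiliary and one quantum spin, so only the fully diagonal "frozen" path contributes to the eigenvalue — is exactly the kind of frozen-configuration argument the paper uses repeatedly elsewhere (e.g.\ in establishing the recursion relations for $Z_N$ and $S_n$). The symmetry reductions for the dual states and for $|\Downarrow_M\rangle$ are also sound, since the $R$-matrix as written is a symmetric matrix and interchanging $\uparrow\leftrightarrow\downarrow$ exchanges the roles of the weights $[u-w_j+\gamma]$ and $[u-w_j]$ between $A$ and $D$.
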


\subsection{Recovering $H$ from the transfer matrix}

Let 
\begin{align}
t(u,\{w\}_M) = A(u,\{w\}_M)+D(u,\{w\}_M)
\end{align}
denote the transfer matrix of the XXZ model. The Hamiltonian (\ref{hamiltonian}) is recovered via the quantum trace identity
\begin{align}
H
=
[\gamma]
\frac{d}{du}
\log t(u)
\Big|_{u = \frac{\gamma}{2}},
\quad
{\rm where}\quad
t(u)
=
t(u,\{w\}_M)
\Big|_{w_1=\cdots = w_M =\frac{\gamma}{2}}.
\end{align}
Therefore all eigenvectors of $t(u,\{w\}_M)$ are also eigenvectors of $H$. The attention turns, therefore, to finding vectors $|\Psi\rangle \in V$ satisfying
\begin{align}
\Big(
A(u,\{w\}_M)+D(u,\{w\}_M)
\Big)
|\Psi\rangle = 
\tau_{\Psi}(u,\{w\}_M) |\Psi\rangle
\label{eigenvector}
\end{align}
for some suitable constants $\tau_{\Psi}(u,\{w\}_M)$.

\subsection{Bethe Ansatz for the eigenvectors}
\label{s-betheans}

The eigenvectors of $t(u,\{w\}_M)$ are given by the Ansatz 
\begin{align}
|\Psi\rangle = B(v_1,\{w\}_M)\ldots B(v_N,\{w\}_M)|\Uparrow_M\rangle,
\label{bethe1}
\end{align}
where we assume that $N \leq M$, since we annihilate the state $|\Uparrow_M\rangle$ when acting with more $B$-operators than the number of sites in the spin chain. Similarly, we construct eigenvectors of $t(u,\{w\}_M)$ in the dual space of states via the Ansatz 
\begin{align}
\langle \Psi|
=
\langle \Uparrow_M|
C(v_N,\{w\}_M)
\ldots
C(v_1,\{w\}_M),
\label{bethe3}
\end{align}
where we again restrict $N \leq M$. To ensure that (\ref{bethe1}) and (\ref{bethe3}) are genuine eigenvectors, the variables $\{v_1,\ldots,v_N\}$ are required to satisfy the Bethe equations. Using the commutation relations (\ref{BB})--(\ref{CD}) and the actions (\ref{diagonal1}), (\ref{diagonal1*}) it is possible to show that $|\Psi\rangle,\langle \Psi|$ are eigenvectors of $t(u,\{w\}_M)$ if and only if 
\begin{align}
\prod_{j=1}^{M}\frac{ [v_i-w_j+\gamma]}{[v_i-w_j]}
=
\prod_{j \not= i}^{N}
\frac{[v_i-v_j + \gamma]}{[v_i - v_j - \gamma]}
\label{bethe2}
\end{align}
for all $1\leq i \leq N$. Throughout the rest of this work, we will call the expression (\ref{bethe1}) a {\it Bethe eigenvector} and assume implicitly that the Bethe equations (\ref{bethe2}) hold for the parameters $\{v\}_N$. 

\section{Domain wall partition function $Z_N\Big(\{v\}_N,\{w\}_N\Big)$}
\label{xxz-pf}

In this section we study $Z_N$, the domain wall partition function of the XXZ spin-$\frac{1}{2}$ chain. This quantity acquires its name because it is equal to the partition function of the six-vertex model under domain wall boundary conditions \cite{bax}. The calculation of $Z_N$ is essential for the explicit evaluation of more complicated objects within the XXZ model, such as its scalar product.

\subsection{Definition of $Z_N(\{v\}_N,\{w\}_N)$}

Let $\{v\}_N = \{v_1,\ldots,v_N\}$ and $\{w\}_N = \{w_1,\ldots,w_N\}$ be two sets of variables. The domain wall partition function has the algebraic definition
\begin{align}
Z_N\Big(\{v\}_N,\{w\}_N\Big)
=
\langle \Downarrow_N | \prod_{i=1}^{N} B(v_i,\{w\}_N) |\Uparrow_N \rangle,
\label{pf-xxz}
\end{align}
where the ordering of the $B$-operators is irrelevant, since by (\ref{BB}) they commute.  

\subsection{Graphical representation of partition function}

Using the graphical conventions described in \S \ref{s-xxzmon}, the domain wall partition function may be represented as the $N\times N$ lattice shown in figure \ref{pf6v}.

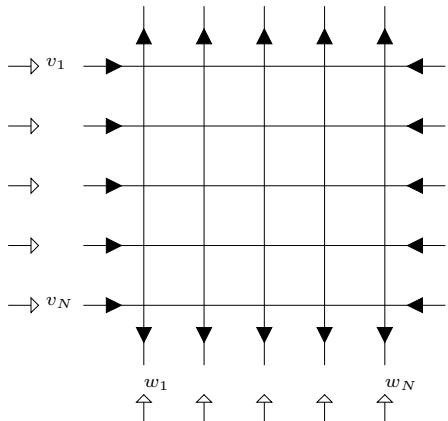
\begin{figure}[H]

\begin{center}
\begin{minipage}{4.3in}

\setlength{\unitlength}{0.0004cm}
\begin{picture}(20000,12500)(-10000,-3000)

\path(-2000,0)(10000,0)
\put(-3250,0){\scriptsize$v_N$}
\path(-4500,0)(-3500,0)
\whiten\path(-3750,250)(-3750,-250)(-3500,0)(-3750,250)
\blacken\path(-1250,250)(-1250,-250)(-750,0)(-1250,250)
\blacken\path(9250,250)(9250,-250)(8750,0)(9250,250)

\path(-2000,2000)(10000,2000)
\path(-4500,2000)(-3500,2000)
\whiten\path(-3750,2250)(-3750,1750)(-3500,2000)(-3750,2250)
\blacken\path(-1250,2250)(-1250,1750)(-750,2000)(-1250,2250)
\blacken\path(9250,2250)(9250,1750)(8750,2000)(9250,2250)

\path(-2000,4000)(10000,4000)
\path(-4500,4000)(-3500,4000)
\whiten\path(-3750,4250)(-3750,3750)(-3500,4000)(-3750,4250)
\blacken\path(-1250,4250)(-1250,3750)(-750,4000)(-1250,4250)
\blacken\path(9250,4250)(9250,3750)(8750,4000)(9250,4250)

\path(-2000,6000)(10000,6000)
\path(-4500,6000)(-3500,6000)
\whiten\path(-3750,6250)(-3750,5750)(-3500,6000)(-3750,6250)
\blacken\path(-1250,6250)(-1250,5750)(-750,6000)(-1250,6250)
\blacken\path(9250,6250)(9250,5750)(8750,6000)(9250,6250)

\path(-2000,8000)(10000,8000)
\put(-3250,8000){\scriptsize$v_1$}
\path(-4500,8000)(-3500,8000)
\whiten\path(-3750,8250)(-3750,7750)(-3500,8000)(-3750,8250)
\blacken\path(-1250,8250)(-1250,7750)(-750,8000)(-1250,8250)
\blacken\path(9250,8250)(9250,7750)(8750,8000)(9250,8250)


\path(0,-2000)(0,10000)
\put(0,-2750){\scriptsize$w_1$}
\path(0,-4000)(0,-3000)
\whiten\path(-250,-3250)(250,-3250)(0,-3000)(-250,-3250)
\blacken\path(-250,-750)(250,-750)(0,-1250)(-250,-750)
\blacken\path(-250,8750)(250,8750)(0,9250)(-250,8750)

\path(2000,-2000)(2000,10000)
\path(2000,-4000)(2000,-3000)
\whiten\path(1750,-3250)(2250,-3250)(2000,-3000)(1750,-3250)
\blacken\path(1750,-750)(2250,-750)(2000,-1250)(1750,-750)
\blacken\path(1750,8750)(2250,8750)(2000,9250)(1750,8750)

\path(4000,-2000)(4000,10000)
\path(4000,-4000)(4000,-3000)
\whiten\path(3750,-3250)(4250,-3250)(4000,-3000)(3750,-3250)
\blacken\path(3750,-750)(4250,-750)(4000,-1250)(3750,-750)
\blacken\path(3750,8750)(4250,8750)(4000,9250)(3750,8750)

\path(6000,-2000)(6000,10000)
\path(6000,-4000)(6000,-3000)
\whiten\path(5750,-3250)(6250,-3250)(6000,-3000)(5750,-3250)
\blacken\path(5750,-750)(6250,-750)(6000,-1250)(5750,-750)
\blacken\path(5750,8750)(6250,8750)(6000,9250)(5750,8750)

\path(8000,-2000)(8000,10000)
\put(8000,-2750){\scriptsize$w_N$}
\path(8000,-4000)(8000,-3000)
\whiten\path(7750,-3250)(8250,-3250)(8000,-3000)(7750,-3250)
\blacken\path(7750,-750)(8250,-750)(8000,-1250)(7750,-750)
\blacken\path(7750,8750)(8250,8750)(8000,9250)(7750,8750)

\end{picture}

\end{minipage}
\end{center}

\caption[Domain wall partition function of the six-vertex model]{Domain wall partition function of the six-vertex model. The top row of upward pointing arrows corresponds with the state vector $|\Uparrow_N\rangle$. The bottom row of downward pointing arrows corresponds with the dual state vector $\langle \Downarrow_N|$. Each horizontal lattice line corresponds to multiplication by a $B$-operator.}

\label{pf6v}
\end{figure}

\subsection{Conditions on $Z_N(\{v\}_N,\{w\}_N)$}
\label{s-pfcond}

In \cite{kor}, Korepin showed that the domain wall partition function satisfies a set of four conditions which determine it uniquely. We reproduce these facts below, with the following two lemmas.

\begin{lemma}
{\rm Let us adopt the shorthand $Z_N = Z_N(\{v\}_N,\{w\}_N)$. For all $N \geq 2$ we claim that 

\setcounter{conditions}{0}
\begin{conditions}
{\rm
$Z_N$ is symmetric in the $\{ w \}_N$ variables.
}
\end{conditions}

\begin{conditions}
{\rm
$Z_N$ is a trigonometric polynomial of degree $N-1$ in the rapidity variable $v_N$.
}
\end{conditions}

\begin{conditions}
{\rm
Setting $v_N = w_N-\gamma$, $Z_{N}$ satisfies the recursion relation 
\begin{align}
Z_N\Big|_{v_N=w_N-\gamma}
&=
[\gamma]
\prod_{i=1}^{N-1}
[v_i-w_N][w_N-w_i-\gamma]
Z_{N-1},
\label{recursion}
\end{align}
where $Z_{N-1}$ is the domain wall partition function on a square lattice of size $N-1$.
}
\end{conditions}

In addition, we have the supplementary condition 

\begin{conditions}
{\rm The partition function on the $1\times 1$ lattice is given by $Z_{1} = [\gamma]$.
}
\end{conditions}
}
\end{lemma}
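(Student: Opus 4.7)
The plan is to verify the four conditions one by one, using the graphical representation of $Z_N$ as an $N\times N$ six-vertex lattice with domain wall boundary conditions, together with the Yang-Baxter equation and the explicit form of the $R$-matrix weights.

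For condition 1, I would prove symmetry in $\{w\}_N$ by the standard trick of inserting an extra $R$-matrix between two adjacent columns. Concretely, to exchange $w_m$ and $w_{m+1}$, I would consider $R_{m,m+1}(w_m,w_{m+1})$ acting in the two adjacent quantum spaces at the bottom of the lattice. Acting on $\langle \Downarrow_m \Downarrow_{m+1}|$ this $R$-matrix is diagonal and multiplies by $[w_m-w_{m+1}+\gamma]$ (an $a$-type vertex). Using the Yang-Baxter equation (\ref{YB}) row by row, I would push this $R$-matrix up through the lattice; at each row it interchanges the spectral parameters $w_m, w_{m+1}$ of the two columns it sits between. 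Upon reaching the top it hits $|\Uparrow_m\Uparrow_{m+1}\rangle$, where it again acts diagonally with the same eigenvalue $[w_m-w_{m+1}+\gamma]$; dividing out this common factor yields $Z_N(\ldots,w_m,w_{m+1},\ldots) = Z_N(\ldots,w_{m+1},w_m,\ldots)$. Since adjacent transpositions generate $S_N$, this establishes full symmetry.

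For condition 2, I would argue by analyzing which vertex types can appear in the bottom row, corresponding to the rapidity $v_N$. Each of the six allowed vertices has weight $[v_N - w_i + \gamma]$, $[v_N - w_i]$, or $[\gamma]$; the first two are trigonometric polynomials of degree exactly one in $v_N$, while $c$-type vertices contribute degree zero. The bottom row represents $B(v_N,\{w\}_N)$ acting on $|\Uparrow_N\rangle$, which flips exactly one spin: the horizontal arrows on the left and right of the row are forced to be in the same direction by domain wall boundary conditions, while the incoming vertical arrows from below (all down) differ from the outgoing vertical arrows going up (which must collectively carry exactly one additional down-spin). By arrow conservation at each vertex, this forces precisely one $c$-type vertex in the bottom row, so at most $N-1$ vertices of that row contribute powers of $v_N$, and $Z_N$ has degree at most $N-1$ in $v_N$.

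For condition 3, I would specialize $v_N = w_N - \gamma$ and track the freezing of the lattice. At this value, $[v_N - w_N + \gamma] = 0$, so $a_{\pm}$-vertices are forbidden in the bottom-right corner. Together with the boundary data (down-arrow below, up-arrow above eventually, horizontal $B$-type boundary on left/right), this forces the bottom-right vertex to be a $c$-type vertex with weight $[\gamma]$. Propagating the resulting constraint along the rightmost column (which must carry the single spin flip all the way from the bottom to the top) and along the bottom row (whose remaining $N-1$ vertices are now forced to be $b$-type since no $a$- or $c$-vertices are possible with the configurations dictated by the frozen corner), I collect the frozen product $[\gamma]\prod_{i=1}^{N-1}[v_i-w_N][w_N-w_i-\gamma]$, leaving the remaining $(N-1)\times(N-1)$ sublattice indexed by $\{v_1,\dots,v_{N-1}\}$ and $\{w_1,\dots,w_{N-1}\}$, which is $Z_{N-1}$. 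Condition 4 is then immediate: on the $1\times 1$ lattice the single vertex is forced by the four boundary arrows to be $c$-type, with weight $[\gamma]$.

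The main obstacle is the bookkeeping in condition 3, namely checking that the two frozen strips (bottom row and rightmost column) really are frozen in a consistent way and that one obtains exactly the product asserted, with no sign or double-counting errors; this relies on examining which of $b_+$ vs.\ $b_-$ appears at each frozen site and verifying that $[v_N-w_N] = -[\gamma]$ at the pivot vertex is absorbed correctly. The symmetry argument of condition 1 is routine but requires care that the extra $R$-matrix one inserts really acts diagonally on the boundary states with the same eigenvalue at both ends; this is where the specific form of the $a$-type weight on $|\!\uparrow\uparrow\rangle$ and $|\!\downarrow\downarrow\rangle$ is used.
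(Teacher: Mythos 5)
Your proposal is correct, and for conditions {\bf 2}--{\bf 4} it follows essentially the same route as the paper: identify the single $c$-type vertex forced into the $v_N$-row to get the degree bound, freeze the bottom row and rightmost column at $v_N=w_N-\gamma$ to extract $[\gamma]\prod_{i=1}^{N-1}[v_i-w_N][w_N-w_i-\gamma]$ times $Z_{N-1}$, and read off $Z_1=[\gamma]$ from the single vertex. (The paper does the degree count and the freezing by inserting the resolution of identity $\sum_n\sigma_n^{+}|\Downarrow_N\rangle\langle\Downarrow_N|\sigma_n^{-}$ and computing $\langle\Downarrow_N|B(v_N)\sigma_n^{+}|\Downarrow_N\rangle$ explicitly, which is just the algebraic transcription of your row-peeling picture.) Where you genuinely diverge is condition {\bf 1}: you use the classical Korepin ``railroad'' argument, inserting $R_{m,m+1}(w_m,w_{m+1})$ at the boundary, pushing it through the lattice with the Yang--Baxter equation, and cancelling the common eigenvalue $[w_m-w_{m+1}+\gamma]$ on the aligned boundary states $\langle\Downarrow_N|$ and $|\Uparrow_N\rangle$ (with the understanding that one divides where this factor is nonzero and extends by polynomiality). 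The paper instead invokes its crossing-symmetry Lemma 4 to rewrite $Z_N$ as $\langle\Uparrow_N^a|C(w_1,\{\bar v\}_N)\cdots C(w_N,\{\bar v\}_N)|\Downarrow_N^a\rangle$ and appeals to the commutativity (\ref{CC}) of the $C$-operators. Both are valid; your argument is more elementary and self-contained, while the paper's rotated representation is machinery it deliberately sets up because it is reused for condition {\bf 1} of the scalar products $S_n$ (figure \ref{altgraph}), where the relevant symmetry involves $D$-operators and the simple railroad argument would be less immediate. The only points needing care in your write-up are descriptive rather than substantive: the ``pivot'' vertex at the corner carries weight $[\gamma]$ (it is the $c$-vertex forced by the vanishing of $[v_N-w_N+\gamma]$), not $[v_N-w_N]=-[\gamma]$, and the horizontal boundary arrows of a $B$-row both point inward rather than ``in the same direction''; neither affects the conclusion.
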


\begin{proof}

\setcounter{conditions}{0}
\begin{conditions}
{\rm 
We write the domain wall partition function in the form
\begin{align}
Z_N\Big(\{v\}_N,\{w\}_N\Big)
=
\langle \Uparrow_N^a|
\otimes
\langle \Downarrow_N|
T\Big(
\{v\}_N,\{w\}_N
\Big)
|\Uparrow_N\rangle
\otimes
|\Downarrow_N^a\rangle
\end{align}
with $T(\{v\}_N,\{w\}_N)$ given by (\ref{doublemon}), and where we have defined the auxiliary states
\begin{align}
\langle \Uparrow_N^a|
=
\bigotimes_{i=1}^{N} 
\uparrow_{a_i}^{*},
\quad
|\Downarrow_N^a\rangle
=
\bigotimes_{i=1}^{N}
\downarrow_{a_i}.
\end{align}
Using the expressions (\ref{doublemon2}), (\ref{doublemon!}) for $T(\{v\}_N,\{w\}_N)$ and contracting on the quantum spaces $V_1,\ldots,V_N$ gives
\begin{align}
Z_N\Big(
\{v\}_N,\{w\}_N
\Big)
=
\langle \Uparrow_N^a|
C(w_1,\{\bar{v}\}_N)
\ldots
C(w_N,\{\bar{v}\}_N)
|\Downarrow_N^a\rangle.
\label{zequiv}
\end{align}
The diagrammatic interpretation of the equivalence (\ref{zequiv}) is shown in figure \ref{zequiv2}.


\begin{figure}[H]

\begin{center}
\begin{minipage}{4.3in}

\setlength{\unitlength}{0.000325cm}
\begin{picture}(20000,12500)(-4000,-3000)

\path(-2000,0)(10000,0)
\put(-3250,0){\scriptsize$v_N$}
\path(-4500,0)(-3500,0)
\whiten\path(-3750,250)(-3750,-250)(-3500,0)(-3750,250)
\blacken\path(-1250,250)(-1250,-250)(-750,0)(-1250,250)
\blacken\path(9250,250)(9250,-250)(8750,0)(9250,250)

\path(-2000,2000)(10000,2000)
\path(-4500,2000)(-3500,2000)
\whiten\path(-3750,2250)(-3750,1750)(-3500,2000)(-3750,2250)
\blacken\path(-1250,2250)(-1250,1750)(-750,2000)(-1250,2250)
\blacken\path(9250,2250)(9250,1750)(8750,2000)(9250,2250)

\path(-2000,4000)(10000,4000)
\path(-4500,4000)(-3500,4000)
\whiten\path(-3750,4250)(-3750,3750)(-3500,4000)(-3750,4250)
\blacken\path(-1250,4250)(-1250,3750)(-750,4000)(-1250,4250)
\blacken\path(9250,4250)(9250,3750)(8750,4000)(9250,4250)

\path(-2000,6000)(10000,6000)
\path(-4500,6000)(-3500,6000)
\whiten\path(-3750,6250)(-3750,5750)(-3500,6000)(-3750,6250)
\blacken\path(-1250,6250)(-1250,5750)(-750,6000)(-1250,6250)
\blacken\path(9250,6250)(9250,5750)(8750,6000)(9250,6250)

\path(-2000,8000)(10000,8000)
\put(-3250,8000){\scriptsize$v_1$}
\path(-4500,8000)(-3500,8000)
\whiten\path(-3750,8250)(-3750,7750)(-3500,8000)(-3750,8250)
\blacken\path(-1250,8250)(-1250,7750)(-750,8000)(-1250,8250)
\blacken\path(9250,8250)(9250,7750)(8750,8000)(9250,8250)


\path(0,-2000)(0,10000)
\put(0,-2750){\scriptsize$w_1$}
\path(0,-4000)(0,-3000)
\whiten\path(-250,-3250)(250,-3250)(0,-3000)(-250,-3250)
\blacken\path(-250,-750)(250,-750)(0,-1250)(-250,-750)
\blacken\path(-250,8750)(250,8750)(0,9250)(-250,8750)

\path(2000,-2000)(2000,10000)
\path(2000,-4000)(2000,-3000)
\whiten\path(1750,-3250)(2250,-3250)(2000,-3000)(1750,-3250)
\blacken\path(1750,-750)(2250,-750)(2000,-1250)(1750,-750)
\blacken\path(1750,8750)(2250,8750)(2000,9250)(1750,8750)

\path(4000,-2000)(4000,10000)
\path(4000,-4000)(4000,-3000)
\whiten\path(3750,-3250)(4250,-3250)(4000,-3000)(3750,-3250)
\blacken\path(3750,-750)(4250,-750)(4000,-1250)(3750,-750)
\blacken\path(3750,8750)(4250,8750)(4000,9250)(3750,8750)

\path(6000,-2000)(6000,10000)
\path(6000,-4000)(6000,-3000)
\whiten\path(5750,-3250)(6250,-3250)(6000,-3000)(5750,-3250)
\blacken\path(5750,-750)(6250,-750)(6000,-1250)(5750,-750)
\blacken\path(5750,8750)(6250,8750)(6000,9250)(5750,8750)

\path(8000,-2000)(8000,10000)
\put(8000,-2750){\scriptsize$w_N$}
\path(8000,-4000)(8000,-3000)
\whiten\path(7750,-3250)(8250,-3250)(8000,-3000)(7750,-3250)
\blacken\path(7750,-750)(8250,-750)(8000,-1250)(7750,-750)
\blacken\path(7750,8750)(8250,8750)(8000,9250)(7750,8750)


\put(12000,3750){$=$}


\path(18000,0)(30000,0)
\put(16250,0){\scriptsize$w_1$}
\path(15000,0)(16000,0)
\whiten\path(15750,250)(15750,-250)(16000,0)(15750,250)
\blacken\path(19250,250)(19250,-250)(18750,0)(19250,250)
\blacken\path(28750,250)(28750,-250)(29250,0)(28750,250)

\path(18000,2000)(30000,2000)
\path(15000,2000)(16000,2000)
\whiten\path(15750,2250)(15750,1750)(16000,2000)(15750,2250)
\blacken\path(19250,2250)(19250,1750)(18750,2000)(19250,2250)
\blacken\path(28750,2250)(28750,1750)(29250,2000)(28750,2250)

\path(18000,4000)(30000,4000)
\path(15000,4000)(16000,4000)
\whiten\path(15750,4250)(15750,3750)(16000,4000)(15750,4250)
\blacken\path(19250,4250)(19250,3750)(18750,4000)(19250,4250)
\blacken\path(28750,4250)(28750,3750)(29250,4000)(28750,4250)

\path(18000,6000)(30000,6000)
\path(15000,6000)(16000,6000)
\whiten\path(15750,6250)(15750,5750)(16000,6000)(15750,6250)
\blacken\path(19250,6250)(19250,5750)(18750,6000)(19250,6250)
\blacken\path(28750,6250)(28750,5750)(29250,6000)(28750,6250)

\path(18000,8000)(30000,8000)
\put(16250,8000){\scriptsize$w_N$}
\path(15000,8000)(16000,8000)
\whiten\path(15750,8250)(15750,7750)(16000,8000)(15750,8250)
\blacken\path(19250,8250)(19250,7750)(18750,8000)(19250,8250)
\blacken\path(28750,8250)(28750,7750)(29250,8000)(28750,8250)


\path(20000,-2000)(20000,10000)
\put(20000,-2750){\scriptsize$\bar{v}_1$}
\path(20000,-4000)(20000,-3000)
\whiten\path(19750,-3250)(20250,-3250)(20000,-3000)(19750,-3250)
\blacken\path(19750,-1250)(20250,-1250)(20000,-750)(19750,-1250)
\blacken\path(19750,9250)(20250,9250)(20000,8750)(19750,9250)

\path(22000,-2000)(22000,10000)
\path(22000,-4000)(22000,-3000)
\whiten\path(21750,-3250)(22250,-3250)(22000,-3000)(21750,-3250)
\blacken\path(21750,-1250)(22250,-1250)(22000,-750)(21750,-1250)
\blacken\path(21750,9250)(22250,9250)(22000,8750)(21750,9250)

\path(24000,-2000)(24000,10000)
\path(24000,-4000)(24000,-3000)
\whiten\path(23750,-3250)(24250,-3250)(24000,-3000)(23750,-3250)
\blacken\path(23750,-1250)(24250,-1250)(24000,-750)(23750,-1250)
\blacken\path(23750,9250)(24250,9250)(24000,8750)(23750,9250)

\path(26000,-2000)(26000,10000)
\path(26000,-4000)(26000,-3000)
\whiten\path(25750,-3250)(26250,-3250)(26000,-3000)(25750,-3250)
\blacken\path(25750,-1250)(26250,-1250)(26000,-750)(25750,-1250)
\blacken\path(25750,9250)(26250,9250)(26000,8750)(25750,9250)

\path(28000,-2000)(28000,10000)
\put(28000,-2750){\scriptsize$\bar{v}_N$}
\path(28000,-4000)(28000,-3000)
\whiten\path(27750,-3250)(28250,-3250)(28000,-3000)(27750,-3250)
\blacken\path(27750,-1250)(28250,-1250)(28000,-750)(27750,-1250)
\blacken\path(27750,9250)(28250,9250)(28000,8750)(27750,9250)

\end{picture}

\end{minipage}
\end{center}

\caption[Equivalent expressions for the domain wall partition function]{Equivalent expressions for the domain wall partition function. The diagram on the left represents a stacking of the operators $B(v_i,\{w\}_N)$. The diagram on the right represents a stacking of the operators $C(w_i,\{\bar{v}\}_N)$.}

\label{zequiv2}
\end{figure}
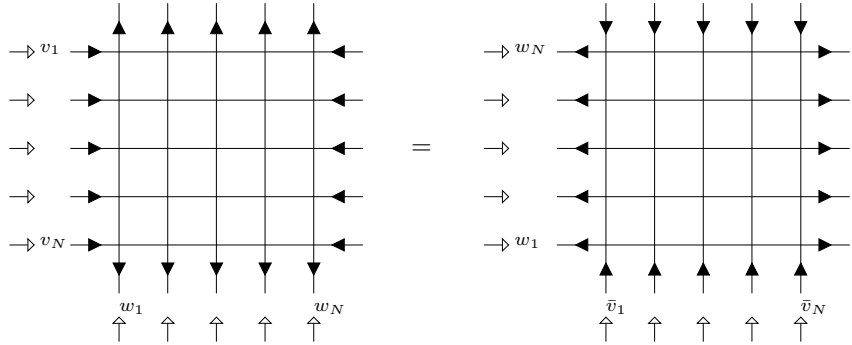

By equation (\ref{CC}) the $C(w_i,\{\bar{v}\}_N)$ operators all commute, proving that $Z_N(\{v\}_N,\{w\}_N)$ is symmetric in $\{w\}_N$. 
}
\end{conditions}

\begin{conditions}
{\rm 
By inserting the set of states $\sum_{n=1}^{N} \sigma_n^{+} |\Downarrow_N\rangle \langle \Downarrow_N| \sigma_n^{-}$ after the first $B$-operator appearing in (\ref{pf-xxz}) we obtain the expansion
\begin{align}
Z_N\Big(\{v\}_N,\{w\}_N\Big)
&=
\sum_{n=1}^{N}
\langle \Downarrow_N|
B(v_N,\{w\}_N)
\sigma^{+}_n |\Downarrow_N\rangle
\langle \Downarrow_N| \sigma^{-}_n
\prod_{i=1}^{N-1} B(v_i,\{w\}_N)
|\Uparrow_N\rangle,
\label{pfexp-xxz}
\end{align}
in which all dependence on $v_N$ appears in the first factor within the sum. Hence we proceed to calculate $\langle \Downarrow_N| B(v_N,\{w\}_N) \sigma_n^{+} |\Downarrow_N\rangle$ for all $1 \leq n \leq N$, as shown below: 

\begin{figure}[H]

\begin{center}
\begin{minipage}{4.3in}

\setlength{\unitlength}{0.00035cm}
\begin{picture}(20000,6000)(-3000,-3000)

\path(-2000,0)(10000,0)
\put(-3250,0){\scriptsize$v_N$}
\path(-4500,0)(-3500,0)
\whiten\path(-3750,250)(-3750,-250)(-3500,0)(-3750,250)
\blacken\path(-1250,250)(-1250,-250)(-750,0)(-1250,250)
\blacken\path(9250,250)(9250,-250)(8750,0)(9250,250)


\path(0,-2000)(0,2000)
\put(0,-2750){\scriptsize$w_1$}
\path(0,-4000)(0,-3000)
\whiten\path(-250,-3250)(250,-3250)(0,-3000)(-250,-3250)
\blacken\path(-250,-750)(250,-750)(0,-1250)(-250,-750)
\blacken\path(-250,1250)(250,1250)(0,750)(-250,1250)

\path(2000,-2000)(2000,2000)
\path(2000,-4000)(2000,-3000)
\whiten\path(1750,-3250)(2250,-3250)(2000,-3000)(1750,-3250)
\blacken\path(1750,-750)(2250,-750)(2000,-1250)(1750,-750)
\blacken\path(1750,1250)(2250,1250)(2000,750)(1750,1250)

\path(4000,-2000)(4000,2000)
\put(4000,-2750){\scriptsize$w_n$}
\path(4000,-4000)(4000,-3000)
\whiten\path(3750,-3250)(4250,-3250)(4000,-3000)(3750,-3250)
\blacken\path(3750,-750)(4250,-750)(4000,-1250)(3750,-750)
\blacken\path(3750,750)(4250,750)(4000,1250)(3750,750)

\path(6000,-2000)(6000,2000)
\path(6000,-4000)(6000,-3000)
\whiten\path(5750,-3250)(6250,-3250)(6000,-3000)(5750,-3250)
\blacken\path(5750,-750)(6250,-750)(6000,-1250)(5750,-750)
\blacken\path(5750,1250)(6250,1250)(6000,750)(5750,1250)

\path(8000,-2000)(8000,2000)
\put(8000,-2750){\scriptsize$w_N$}
\path(8000,-4000)(8000,-3000)
\whiten\path(7750,-3250)(8250,-3250)(8000,-3000)(7750,-3250)
\blacken\path(7750,-750)(8250,-750)(8000,-1250)(7750,-750)
\blacken\path(7750,1250)(8250,1250)(8000,750)(7750,1250)


\put(12000,-250){$=$}


\path(18000,0)(30000,0)
\put(16750,0){\scriptsize$v_N$}
\path(15500,0)(16500,0)
\whiten\path(16250,250)(16250,-250)(16500,0)(16250,250)
\blacken\path(18750,250)(18750,-250)(19250,0)(18750,250)
\blacken\path(20750,250)(20750,-250)(21250,0)(20750,250)
\blacken\path(22750,250)(22750,-250)(23250,0)(22750,250)
\blacken\path(25250,250)(25250,-250)(24750,0)(25250,250)
\blacken\path(27250,250)(27250,-250)(26750,0)(27250,250)
\blacken\path(29250,250)(29250,-250)(28750,0)(29250,250)


\path(20000,-2000)(20000,2000)
\put(20000,-2750){\scriptsize$w_1$}
\path(20000,-4000)(20000,-3000)
\whiten\path(19750,-3250)(20250,-3250)(20000,-3000)(19750,-3250)
\blacken\path(19750,-750)(20250,-750)(20000,-1250)(19750,-750)
\blacken\path(19750,1250)(20250,1250)(20000,750)(19750,1250)

\path(22000,-2000)(22000,2000)
\path(22000,-4000)(22000,-3000)
\whiten\path(21750,-3250)(22250,-3250)(22000,-3000)(21750,-3250)
\blacken\path(21750,-750)(22250,-750)(22000,-1250)(21750,-750)
\blacken\path(21750,1250)(22250,1250)(22000,750)(21750,1250)

\path(24000,-2000)(24000,2000)
\put(24000,-2750){\scriptsize$w_n$}
\path(24000,-4000)(24000,-3000)
\whiten\path(23750,-3250)(24250,-3250)(24000,-3000)(23750,-3250)
\blacken\path(23750,-750)(24250,-750)(24000,-1250)(23750,-750)
\blacken\path(23750,750)(24250,750)(24000,1250)(23750,750)

\path(26000,-2000)(26000,2000)
\path(26000,-4000)(26000,-3000)
\whiten\path(25750,-3250)(26250,-3250)(26000,-3000)(25750,-3250)
\blacken\path(25750,-750)(26250,-750)(26000,-1250)(25750,-750)
\blacken\path(25750,1250)(26250,1250)(26000,750)(25750,1250)

\path(28000,-2000)(28000,2000)
\put(28000,-2750){\scriptsize$w_N$}
\path(28000,-4000)(28000,-3000)
\whiten\path(27750,-3250)(28250,-3250)(28000,-3000)(27750,-3250)
\blacken\path(27750,-750)(28250,-750)(28000,-1250)(27750,-750)
\blacken\path(27750,1250)(28250,1250)(28000,750)(27750,1250)

\end{picture}

\end{minipage}
\end{center}

\caption[Peeling away the bottom row of the partition function]{Peeling away the bottom row of the partition function. The diagram on the left represents $\langle \Downarrow_N |B(v_N,\{w\}_N) \sigma_n^{+} |\Downarrow_N \rangle$, with the internal black arrows being summed over all configurations. The diagram on the right represents the only surviving configuration.}

\label{zequiv3}
\end{figure}

The right hand side of figure \ref{zequiv3} is simply a product of vertices. Replacing each vertex with its corresponding trigonometric weight (see figure 1), we conclude that
\begin{align}
\langle \Downarrow_N|
B(v_N,\{w\}_N)
\sigma_n^{+} |\Downarrow_N\rangle
=
\prod_{1\leq i<n} [v_N-w_i]
[\gamma]
\prod_{n < i \leq N} [v_N-w_i+\gamma].
\label{zequiv4}
\end{align}
Substituting (\ref{zequiv4}) into the expansion (\ref{pfexp-xxz}) gives 
\begin{align}
Z_N
=
\label{PFexp1}
[\gamma]
\sum_{n=1}^{N}
\prod_{1\leq i<n}
[v_N-w_i]
\prod_{n < i \leq N}
[v_N-w_i+\gamma]
\langle \Downarrow_N|\sigma_n^{-}
\prod_{i=1}^{N-1} B(v_i,\{w\}_N)
|\Uparrow_N\rangle.
\end{align}
From this equation we see that every term in $Z_N(\{v\}_N,\{w\}_N)$ contains a product of exactly $N-1$ trigonometric functions with argument $v_N$. Therefore $Z_N(\{v\}_N,\{w\}_N)$ is a trigonometric polynomial of degree $N-1$ in the variable $v_N$. 
}
\end{conditions}

\begin{conditions}
{\rm
We start from the expansion (\ref{PFexp1}) of the domain wall partition function, and set $v_N=w_N-\gamma$. This causes all terms in the summation over $1\leq n \leq N$ to collapse to zero except the $n=N$ term, giving
\begin{align}
Z_N\Big(\{v\}_N,\{w\}_N\Big) \Big|_{v_N=w_N-\gamma}
&=
[\gamma]
\prod_{i=1}^{N-1}
[w_N-w_i-\gamma]
\langle \Downarrow_N|
\sigma_N^{-}
\prod_{i=1}^{N-1} B(v_i,\{w\}_N)
|\Uparrow_N\rangle.
\label{PFexp3}
\end{align}
We then consider the graphical representation of $\langle \Downarrow_N | \sigma_N^{-} \prod_{i=1}^{N-1} B(v_i,\{w\}_N) |\Uparrow_N\rangle$, as shown below.

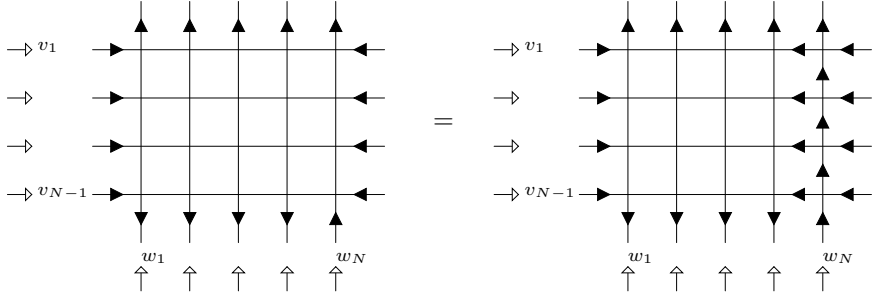
\begin{figure}[H]

\begin{center}
\begin{minipage}{4.3in}

\setlength{\unitlength}{0.000325cm}
\begin{picture}(20000,12500)(-4000,-3000)

\path(-2000,0)(10000,0)
\put(-4250,0){\scriptsize$v_{N-1}$}
\path(-5500,0)(-4500,0)
\whiten\path(-4750,250)(-4750,-250)(-4500,0)(-4750,250)
\blacken\path(-1250,250)(-1250,-250)(-750,0)(-1250,250)
\blacken\path(9250,250)(9250,-250)(8750,0)(9250,250)

\path(-2000,2000)(10000,2000)
\path(-5500,2000)(-4500,2000)
\whiten\path(-4750,2250)(-4750,1750)(-4500,2000)(-4750,2250)
\blacken\path(-1250,2250)(-1250,1750)(-750,2000)(-1250,2250)
\blacken\path(9250,2250)(9250,1750)(8750,2000)(9250,2250)

\path(-2000,4000)(10000,4000)
\path(-5500,4000)(-4500,4000)
\whiten\path(-4750,4250)(-4750,3750)(-4500,4000)(-4750,4250)
\blacken\path(-1250,4250)(-1250,3750)(-750,4000)(-1250,4250)
\blacken\path(9250,4250)(9250,3750)(8750,4000)(9250,4250)

\path(-2000,6000)(10000,6000)
\put(-4250,6000){\scriptsize$v_1$}
\path(-5500,6000)(-4500,6000)
\whiten\path(-4750,6250)(-4750,5750)(-4500,6000)(-4750,6250)
\blacken\path(-1250,6250)(-1250,5750)(-750,6000)(-1250,6250)
\blacken\path(9250,6250)(9250,5750)(8750,6000)(9250,6250)


\path(0,-2000)(0,8000)
\put(0,-2750){\scriptsize$w_1$}
\path(0,-4000)(0,-3000)
\whiten\path(-250,-3250)(250,-3250)(0,-3000)(-250,-3250)
\blacken\path(-250,-750)(250,-750)(0,-1250)(-250,-750)
\blacken\path(-250,6750)(250,6750)(0,7250)(-250,6750)

\path(2000,-2000)(2000,8000)
\path(2000,-4000)(2000,-3000)
\whiten\path(1750,-3250)(2250,-3250)(2000,-3000)(1750,-3250)
\blacken\path(1750,-750)(2250,-750)(2000,-1250)(1750,-750)
\blacken\path(1750,6750)(2250,6750)(2000,7250)(1750,6750)

\path(4000,-2000)(4000,8000)
\path(4000,-4000)(4000,-3000)
\whiten\path(3750,-3250)(4250,-3250)(4000,-3000)(3750,-3250)
\blacken\path(3750,-750)(4250,-750)(4000,-1250)(3750,-750)
\blacken\path(3750,6750)(4250,6750)(4000,7250)(3750,6750)

\path(6000,-2000)(6000,8000)
\path(6000,-4000)(6000,-3000)
\whiten\path(5750,-3250)(6250,-3250)(6000,-3000)(5750,-3250)
\blacken\path(5750,-750)(6250,-750)(6000,-1250)(5750,-750)
\blacken\path(5750,6750)(6250,6750)(6000,7250)(5750,6750)

\path(8000,-2000)(8000,8000)
\put(8000,-2750){\scriptsize$w_N$}
\path(8000,-4000)(8000,-3000)
\whiten\path(7750,-3250)(8250,-3250)(8000,-3000)(7750,-3250)
\blacken\path(7750,-1250)(8250,-1250)(8000,-750)(7750,-1250)
\blacken\path(7750,6750)(8250,6750)(8000,7250)(7750,6750)


\put(12000,2750){$=$}


\path(18000,0)(30000,0)
\put(15750,0){\scriptsize$v_{N-1}$}
\path(14500,0)(15500,0)
\whiten\path(15250,250)(15250,-250)(15500,0)(15250,250)
\blacken\path(18750,250)(18750,-250)(19250,0)(18750,250)
\blacken\path(27250,250)(27250,-250)(26750,0)(27250,250)
\blacken\path(29250,250)(29250,-250)(28750,0)(29250,250)

\path(18000,2000)(30000,2000)
\path(14500,2000)(15500,2000)
\whiten\path(15250,2250)(15250,1750)(15500,2000)(15250,2250)
\blacken\path(18750,2250)(18750,1750)(19250,2000)(18750,2250)
\blacken\path(27250,2250)(27250,1750)(26750,2000)(27250,2250)
\blacken\path(29250,2250)(29250,1750)(28750,2000)(29250,2250)

\path(18000,4000)(30000,4000)
\path(14500,4000)(15500,4000)
\whiten\path(15250,4250)(15250,3750)(15500,4000)(15250,4250)
\blacken\path(18750,4250)(18750,3750)(19250,4000)(18750,4250)
\blacken\path(27250,4250)(27250,3750)(26750,4000)(27250,4250)
\blacken\path(29250,4250)(29250,3750)(28750,4000)(29250,4250)

\path(18000,6000)(30000,6000)
\put(15750,6000){\scriptsize$v_1$}
\path(14500,6000)(15500,6000)
\whiten\path(15250,6250)(15250,5750)(15500,6000)(15250,6250)
\blacken\path(18750,6250)(18750,5750)(19250,6000)(18750,6250)
\blacken\path(27250,6250)(27250,5750)(26750,6000)(27250,6250)
\blacken\path(29250,6250)(29250,5750)(28750,6000)(29250,6250)


\path(20000,-2000)(20000,8000)
\put(20000,-2750){\scriptsize$w_1$}
\path(20000,-4000)(20000,-3000)
\whiten\path(19750,-3250)(20250,-3250)(20000,-3000)(19750,-3250)
\blacken\path(19750,-750)(20250,-750)(20000,-1250)(19750,-750)
\blacken\path(19750,6750)(20250,6750)(20000,7250)(19750,6750)

\path(22000,-2000)(22000,8000)
\path(22000,-4000)(22000,-3000)
\whiten\path(21750,-3250)(22250,-3250)(22000,-3000)(21750,-3250)
\blacken\path(21750,-750)(22250,-750)(22000,-1250)(21750,-750)
\blacken\path(21750,6750)(22250,6750)(22000,7250)(21750,6750)

\path(24000,-2000)(24000,8000)
\path(24000,-4000)(24000,-3000)
\whiten\path(23750,-3250)(24250,-3250)(24000,-3000)(23750,-3250)
\blacken\path(23750,-750)(24250,-750)(24000,-1250)(23750,-750)
\blacken\path(23750,6750)(24250,6750)(24000,7250)(23750,6750)

\path(26000,-2000)(26000,8000)
\path(26000,-4000)(26000,-3000)
\whiten\path(25750,-3250)(26250,-3250)(26000,-3000)(25750,-3250)
\blacken\path(25750,-750)(26250,-750)(26000,-1250)(25750,-750)
\blacken\path(25750,6750)(26250,6750)(26000,7250)(25750,6750)

\path(28000,-2000)(28000,8000)
\put(28000,-2750){\scriptsize$w_N$}
\path(28000,-4000)(28000,-3000)
\whiten\path(27750,-3250)(28250,-3250)(28000,-3000)(27750,-3250)
\blacken\path(27750,-1250)(28250,-1250)(28000,-750)(27750,-1250)
\blacken\path(27750,750)(28250,750)(28000,1250)(27750,750)
\blacken\path(27750,2750)(28250,2750)(28000,3250)(27750,2750)
\blacken\path(27750,4750)(28250,4750)(28000,5250)(27750,4750)
\blacken\path(27750,6750)(28250,6750)(28000,7250)(27750,6750)

\end{picture}

\end{minipage}
\end{center}

\caption[Peeling away the right-most column of the partition function]{Peeling away the right-most column of the partition function. The left hand side represents the quantity $\langle \Downarrow_N | \sigma_N^{-} \prod_{i=1}^{N-1} B(v_i,\{w\}_N) |\Uparrow_N\rangle$, with the internal black arrows being summed over all configurations. The diagram on the right contains all surviving configurations.}

\label{zequiv5}
\end{figure}

The right hand side of figure \ref{zequiv5} represents the $(N-1)\times (N-1)$ domain wall partition function, multiplied by a column of vertices. Replacing these vertices with their trigonometric weights, we find that 
\begin{align}
\langle \Downarrow_N|
\sigma_N^{-} 
\prod_{i=1}^{N-1} B(v_i,\{w\}_N) 
|\Uparrow_N\rangle
=
\prod_{i=1}^{N-1}
[v_i-w_N]
Z_{N-1}\Big(\{v\}_{N-1},\{w\}_{N-1}\Big).
\label{zequiv6}
\end{align}
Substituting (\ref{zequiv6}) into (\ref{PFexp3}) produces the required recursion relation (\ref{recursion}).

}
\end{conditions}

\begin{conditions}
{\rm
Specializing the definition (\ref{pf-xxz}) to the case $N=1$ gives
\begin{align} 
Z_1(v_1,w_1) 
=
\langle \Downarrow_1 |
B(v_1,w_1)
| \Uparrow_1 \rangle
=
\uparrow_{a_1}^{*} \otimes \downarrow_1^{*}
R_{a_1 1}(v_1,w_1)
\uparrow_1 \otimes \downarrow_{a_1}
=
[\gamma],
\end{align}
as required. Alternatively, the lattice representation of $Z_1$ is simply the top-right vertex in figure \ref{R6v}, whose weight is equal to $[\gamma]$.  
} 
\end{conditions}
\end{proof}

\begin{lemma}
\label{uniqueness}
{\rm 
Let $\{\breve{Z}_N\}_{N \in \mathbb{N}}$ denote a set of functions $\breve{Z}_N(\{v\}_N,\{w\}_N)$ which satisfy the four conditions of the previous lemma. Then $\breve{Z}_N = Z_N$ for all $N \geq 1$. In other words, the conditions imposed on the domain wall partition function determine it uniquely. 
}
\end{lemma}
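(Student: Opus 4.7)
The plan is to prove $\breve{Z}_N = Z_N$ by induction on $N$, reducing the functional equality to equality at a carefully chosen finite set of specializations of $v_N$. The base case $N=1$ is immediate from condition 4, which forces $\breve{Z}_1 = [\gamma] = Z_1$. The inductive step is where the interplay between the four conditions becomes essential, and it is the only part of the argument that requires genuine work.

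Assume $\breve{Z}_{N-1} = Z_{N-1}$, and fix all arguments except $v_N$. Condition 3, applied to each of $\breve{Z}_N$ and $Z_N$ together with the inductive hypothesis, immediately yields
\[
\breve{Z}_N \big|_{v_N = w_N - \gamma}
= [\gamma] \prod_{i=1}^{N-1}[v_i - w_N][w_N - w_i - \gamma]\, Z_{N-1}
= Z_N \big|_{v_N = w_N - \gamma}.
\]
Condition 1 says both $\breve{Z}_N$ and $Z_N$ are symmetric in $\{w\}_N$, so exactly the same argument with $w_N$ replaced by $w_k$ produces agreement at $v_N = w_k - \gamma$ for each $1 \leq k \leq N$. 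For generic $\{w\}_N$ these are $N$ pairwise distinct specializations of $v_N$ at which the two functions coincide.

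To finish, I will invoke condition 2: the difference $\breve{Z}_N - Z_N$ is a trigonometric polynomial of degree $N-1$ in $v_N$, and therefore lies in an $N$-dimensional space (spanned, for instance, by the exponentials $e^{k v_N}$ with $k \in \{-(N-1), -(N-3), \ldots, N-1\}$, the space generated by any product of $N-1$ factors $[v_N - a_i]$). A nonzero element of this space can vanish at at most $N-1$ points, so the difference must be identically zero in $v_N$; the remaining variables were arbitrary, so $\breve{Z}_N = Z_N$ on a dense open set and hence everywhere by continuity. The main obstacle is precisely this interpolation step, which is why condition 2 has been included in the list: without it, one could only assert pointwise agreement at $N$ specializations rather than equality of functions. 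Condition 1 plays the essential secondary role of upgrading the single recursion point provided by condition 3 into the $N$ matching points required to close the induction; everything else is bookkeeping.
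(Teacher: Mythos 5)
Your proposal is correct and follows essentially the same route as the paper: base case from condition {\bf 4}, condition {\bf 3} plus the inductive hypothesis to get agreement at $v_N = w_N-\gamma$, condition {\bf 1} to propagate this to all $N$ points $v_N = w_k-\gamma$, and condition {\bf 2} to conclude by interpolation. The extra remarks on the $N$-dimensional span of exponentials and the genericity/continuity argument are harmless refinements of the same argument.
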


\begin{proof}
From condition {\bf 4} on $\breve{Z}_1,Z_1$ we know that $\breve{Z}_1=Z_1$. Hence we may assume that $\breve{Z}_{N-1} = Z_{N-1}$ for some $N \geq 2$. Using this assumption together with condition {\bf 3} on $\breve{Z}_N,Z_N$ yields 
\begin{align}
\breve{Z}_N
\Big|_{v_N = w_N-\gamma}
&=
[\gamma] \prod_{i=1}^{N-1} [v_i-w_N] [w_N-w_i-\gamma]
\breve{Z}_{N-1},
\\
&=
[\gamma] \prod_{i=1}^{N-1} [v_i-w_N] [w_N-w_i-\gamma]
Z_{N-1}
=
Z_{N} \Big|_{v_N=w_N-\gamma}.
\nonumber
\end{align}
Condition {\bf 1} on $\breve{Z}_N,Z_N$ states that both are symmetric in the variables $\{w\}_N$. Using this fact in the previous equation, we find that 
\begin{align} 
\breve{Z}_N \Big|_{v_N=w_{i}-\gamma}
=
Z_N \Big|_{v_N=w_{i}-\gamma}
\quad
{\rm for\ all}\ 1 \leq i \leq N,
\end{align}
which proves that $\breve{Z}_N$ and $Z_N$ are equal at $N$ distinct values of $v_N$. By condition {\bf 2}, both functions are trigonometric polynomials of degree $N-1$ in $v_N$, so their equality at $N$ points implies $\breve{Z}_N = Z_N$ everywhere. This completes the proof by induction.

\end{proof}

\subsection{Determinant expression for $Z_N$}

In \cite{ize}, Izergin found a function which satisfies the four conditions of the previous subsection, and is therefore equal to the domain wall partition function. We present this formula below.

\begin{lemma}
{\rm For all $N \geq 1$ we define  
\begin{align}
\breve{Z}_N\Big(\{v\}_N,\{w\}_N\Big)
&=
\frac{\displaystyle{\prod_{i,j=1}^{N}} [v_i-w_j+\gamma] [v_i-w_j]}
{\displaystyle{\prod_{1 \leq i < j \leq N}} [v_i-v_j] [w_j-w_i]}
\det\left(\frac{[\gamma]}{[v_i-w_j+\gamma][v_i-w_j]}\right)_{1\leq i,j \leq N},
\label{dwpf}
\\
&=
\frac{
\det\left(
[\gamma] 
\displaystyle{\prod_{k \not= i}^{N} [v_k-w_j+\gamma] [v_k-w_j]}
\right)_{1\leq i,j \leq N}
}
{
\displaystyle{
\prod_{1 \leq i < j \leq N} [v_i-v_j] [w_j-w_i]
}
}.
\nonumber
\end{align}
The functions $\{\breve{Z}_N\}_{N\in \mathbb{N}}$ satisfy the four conditions of lemma 6. Equivalently, the domain wall partition function $Z_N$ is equal to the right hand side of (\ref{dwpf}).
}
\end{lemma}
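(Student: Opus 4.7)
The plan is to verify that the function $\breve{Z}_N$ defined by (\ref{dwpf}) satisfies the four conditions appearing in Lemma 6, and then invoke the uniqueness result of Lemma \ref{uniqueness} to conclude $\breve{Z}_N = Z_N$.

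Condition 4 is checked by direct substitution of $N=1$: the prefactor $[v_1-w_1+\gamma][v_1-w_1]$ cancels against the single matrix entry $[\gamma]/([v_1-w_1+\gamma][v_1-w_1])$, leaving $[\gamma]$. Condition 1 follows from a standard antisymmetry-cancellation: swapping $w_i \leftrightarrow w_j$ leaves the symmetric prefactor $\prod_{i,j}[v_i-w_j+\gamma][v_i-w_j]$ unchanged, exchanges two columns of the determinant (one sign), and exchanges two factors of $\prod_{i<j}[w_j-w_i]$ in the denominator (another sign), with the two signs cancelling.

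For condition 2, I prefer to work with the second form of (\ref{dwpf}). Divisibility of the determinant by $\prod_{i<j}[v_i-v_j][w_j-w_i]$ is immediate from the fact that its value vanishes whenever two of the $v_i$'s coincide (two rows become equal) or two of the $w_j$'s coincide (two columns become equal), so $\breve{Z}_N$ is a genuine trigonometric polynomial in each variable. To bound its degree in $v_N$: the matrix entries in row $i\neq N$ each carry a factor $[v_N-w_j+\gamma][v_N-w_j]$ via the $k=N$ term of $\prod_{k\neq i}$, while the entries of row $N$ are independent of $v_N$. Expanding along row $N$ shows the determinant has $v_N$-degree $2(N-1)$; cancelling against the $v_N$-degree $N-1$ coming from $\prod_{i<N}[v_i-v_N]$ in the denominator leaves degree $N-1$, as required.

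The heart of the argument is condition 3. Setting $v_N = w_N-\gamma$ in the second form of (\ref{dwpf}), every matrix entry in column $j=N$ with row index $i<N$ contains the factor $[v_N-w_N+\gamma][v_N-w_N] = [0][-\gamma] = 0$ (arising from $k=N$ in $\prod_{k\neq i}$), whereas the entry at position $(N,N)$ equals $[\gamma]\prod_{k<N}[v_k-w_N+\gamma][v_k-w_N]$. Laplace expansion along column $N$ therefore reduces the $N\times N$ determinant to this surviving entry multiplied by the $(N-1)\times(N-1)$ minor $M_{NN}$. Inside $M_{NN}$, every entry $(i,j)$ with $i,j<N$ acquires the common factor $[v_N-w_j+\gamma][v_N-w_j]|_{v_N=w_N-\gamma} = [w_N-w_j][w_N-w_j-\gamma]$, which I extract column by column to identify $M_{NN}$ with $\prod_{j<N}[w_N-w_j][w_N-w_j-\gamma]$ times the numerator of $\breve{Z}_{N-1}(\{v\}_{N-1},\{w\}_{N-1})$. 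Finally, the denominator of (\ref{dwpf}) factors as $\prod_{i<j<N}[v_i-v_j][w_j-w_i] \cdot \prod_{i<N}[v_i-v_N][w_N-w_i]$, and at $v_N = w_N-\gamma$ the extra factors $\prod_{i<N}[v_i-w_N+\gamma]$ and $\prod_{i<N}[w_N-w_i]$ that appear here cancel precisely against the corresponding factors extracted from the numerator, leaving $[\gamma]\prod_{i<N}[v_i-w_N][w_N-w_i-\gamma] \cdot \breve{Z}_{N-1}$, which is exactly (\ref{recursion}).

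The main obstacle is purely the bookkeeping in condition 3: one has to track carefully which factors of $[w_N-w_j]$, $[w_N-w_j-\gamma]$, $[v_k-w_N]$ and $[v_k-w_N+\gamma]$ come from the Laplace expansion of the determinant versus the denominator prefactor, and confirm that every superfluous factor cancels to produce the recursion in its stated form. The other three conditions are essentially routine once the second (polynomial-in-the-numerator) form of Izergin's formula is adopted.
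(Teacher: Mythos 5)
Your proposal is correct and follows essentially the same route as the paper: verify the four Korepin conditions for the Izergin determinant and invoke the uniqueness lemma, with conditions 1, 2 and 4 argued identically and condition 3 obtained from a Laplace expansion of the determinant at $v_N = w_N-\gamma$. The only cosmetic difference is that you specialize first and expand along column $N$ (where only the $(N,N)$ entry survives), whereas the paper expands along row $N$ first and then observes that the specialization kills every term except $j=N$; the resulting cancellation bookkeeping is the same.
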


\begin{proof}
 
\setcounter{conditions}{0}
\begin{conditions}
{\rm
Permuting $w_m\leftrightarrow w_n$ in $\det\left([\gamma] \prod_{k \not= i}^{N} [v_k-w_j+\gamma] [v_k-w_j]\right)_{1\leq i,j \leq N}$ swaps two columns of the determinant, which introduces a minus sign into the numerator of (\ref{dwpf}). Similarly, permuting $w_m \leftrightarrow w_n$ in $\prod_{1 \leq i<j \leq N} [w_j-w_i]$ introduces a minus sign into the denominator of (\ref{dwpf}). These minus signs cancel, leaving $\breve{Z}_N$ invariant under permutations of the $\{w\}_N$ variables. 
}
\end{conditions}

\begin{conditions}
{\rm
The numerator of (\ref{dwpf}) is a trigonometric polynomial of degree $2N-2$ in $v_N$, with zeros at the points $v_N=v_n$ for all $1 \leq n \leq N-1$, since such a substitution would render two rows of the determinant equal. The denominator of (\ref{dwpf}) is a trigonometric polynomial of degree $N-1$ in $v_N$, with zeros at the same points. Cancelling these common zeros, $\breve{Z}_N$ is a trigonometric polynomial of degree $N-1$ in $v_N$. 
}
\end{conditions}

\begin{conditions}
{\rm 
Expanding the determinant in $\breve{Z}_N$ along its $N^{\rm th}$ row we find
\begin{align}
\breve{Z}_N\Big(\{v\}_N,\{w\}_N\Big)
&=
\sum_{j=1}^{N}
\breve{Z}_{N-1}\Big(\{v\}_{N-1},\{w_1,\ldots, \widehat{w_j},\ldots,w_N\}\Big)
\\
&
\times
\frac{
\displaystyle{
[\gamma]
\prod_{k=1}^{N-1}
[v_k-w_j+\gamma] 
[v_k-w_j]
\prod_{k\not=j}^{N}
[v_N-w_k+\gamma]
[v_N-w_k]
}
}{
\displaystyle{
\prod_{k = 1}^{N-1}
[v_k-v_N]
\prod_{k \not= j}^{N}
[w_j-w_k]
}
},
\nonumber
\end{align}
where $\widehat{w_j}$ denotes the omission of that variable. Setting $v_N=w_N-\gamma$ in the above expression, all terms in the summation over $1 \leq j \leq N$ vanish except the $j=N$ term and we obtain
\begin{align}
\breve{Z}_N
\Big|_{v_N=w_N-\gamma}
&=
[\gamma]
\prod_{k=1}^{N-1}
[v_k-w_N]
[w_N-w_k-\gamma]
\breve{Z}_{N-1}
\Big(\{v\}_{N-1},\{w\}_{N-1}\Big),
\end{align} 
which is the desired recursion relation.
}
\end{conditions}

\begin{conditions}
{\rm 
From the definition (\ref{dwpf}) it is clear that $\breve{Z}_1=[\gamma]$.
} 
\end{conditions}

\end{proof}

\section{Scalar products $S_n\Big(\{u\}_n,\{v\}_N,\{w\}_M\Big)$}
\label{xxz-sp}

In this section we define and calculate a sequence of {\it intermediate scalar products} $S_n$, which interpolate between the domain wall partition function and the full scalar product. The domain wall partition function corresponds to the case $n=0$, whereas the full scalar product is given by the case $n=N$. These functions were originally studied in \cite{kmt}, using Drinfel'd twists in the algebraic Bethe Ansatz setting \cite{ms}. Our approach owes much to the work of\cite{kmt} but is somewhat different, since it does not involve constructing the $F$-basis.

\subsection{Definition of $S_n(\{u\}_n,\{v\}_N,\{w\}_M)$}
\label{xxz-sp-def}

Let $\{u\}_n = \{u_1,\ldots,u_n\}$, $\{v\}_N = \{v_1,\ldots,v_N\}$, $\{w\}_M = \{w_1,\ldots,w_M\}$ be three sets of variables whose cardinalities satisfy $0 \leq n \leq N$ and $1\leq N \leq M$. We proceed to introduce functions $S_n(\{u\}_n,\{v\}_N,\{w\}_M)$ for all $0 \leq n \leq N$. In the case $n=0$ we define
\begin{align}
S_0 \Big( \{v\}_N, \{w\}_M \Big) 
=
\langle \Downarrow_{N/M}|
\prod_{j=1}^{N} 
B(v_j,\{w\}_M)  
|\Uparrow_M\rangle,
\label{s0-xxz}
\end{align}
where for conciseness we have suppressed dependence on the set $\{u\}_0 = \emptyset$. As we will soon show, up to an overall normalization the scalar product $S_0$ is equal to the domain wall partition function $Z_N$. Next, for all $1\leq n \leq N-1$ we define
\begin{align}
S_n \Big( \{u\}_n, \{v\}_N, \{w\}_M \Big)
=
\langle \Downarrow_{\widetilde{N}/M}|
\prod_{i=1}^{n}
C(u_i,\{w\}_M)
\prod_{j=1}^{N}  
B(v_j,\{w\}_M) 
|\Uparrow_M\rangle,
\label{sn-xxz}
\end{align}
where we have adopted the notation $\widetilde{N} = N-n$, which is used frequently hereafter. Finally, for $n=N$ we define
\begin{align}
&
S_N \Big( \{u\}_N, \{v\}_N, \{w\}_M \Big)
=
\langle \Uparrow_M|
\prod_{i=1}^{N}
C(u_i,\{w\}_M)
\prod_{j=1}^{N} 
B(v_j,\{w\}_M) 
|\Uparrow_M\rangle,
\label{sN-xxz}
\end{align}
which represents the usual scalar product. In all cases (\ref{s0-xxz})--(\ref{sN-xxz}) we shall assume that the parameters $\{v\}_N$ obey the Bethe equations (\ref{bethe2}), while the remaining variables $\{u\}_n$ are considered free. Accordingly, we name these objects {\it Bethe scalar products}. It turns out that $\{S_n\}_{0 \leq n \leq N}$ are related by a simple recursion. Hence they provide a convenient way of calculating $S_N$, starting from Izergin's determinant formula (\ref{dwpf}) for $Z_N$.  

\subsection{Graphical representation of scalar products}
\label{xxz-sp-graph}

We now provide lattice representations of the Bethe scalar products $\{S_n\}_{0 \leq n \leq N}$. As in the case of the domain wall partition function, the lattices simplify the calculation of these functions.

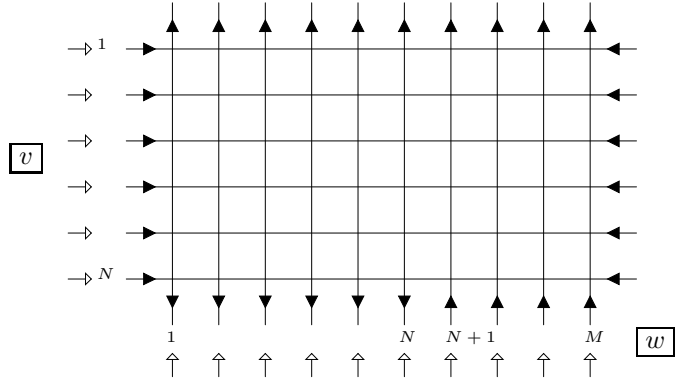
\begin{figure}[H]

\begin{center}
\begin{minipage}{4.3in}

\setlength{\unitlength}{0.0003cm}
\begin{picture}(20000,15000)(-9000,-4000)

\put(-7000,5000){\fbox{$v$}}


\path(-2000,0)(20000,0)
\put(-3250,0){\tiny{$N$}}
\path(-4500,0)(-3500,0)
\whiten\path(-3750,250)(-3750,-250)(-3500,0)(-3750,250)
\blacken\path(-1250,250)(-1250,-250)(-750,0)(-1250,250)
\blacken\path(19250,250)(19250,-250)(18750,0)(19250,250)

\path(-2000,2000)(20000,2000)
\path(-4500,2000)(-3500,2000)
\whiten\path(-3750,2250)(-3750,1750)(-3500,2000)(-3750,2250)
\blacken\path(-1250,2250)(-1250,1750)(-750,2000)(-1250,2250)
\blacken\path(19250,2250)(19250,1750)(18750,2000)(19250,2250)

\path(-2000,4000)(20000,4000)
\path(-4500,4000)(-3500,4000)
\whiten\path(-3750,4250)(-3750,3750)(-3500,4000)(-3750,4250)
\blacken\path(-1250,4250)(-1250,3750)(-750,4000)(-1250,4250)
\blacken\path(19250,4250)(19250,3750)(18750,4000)(19250,4250)

\path(-2000,6000)(20000,6000)
\path(-4500,6000)(-3500,6000)
\whiten\path(-3750,6250)(-3750,5750)(-3500,6000)(-3750,6250)
\blacken\path(-1250,6250)(-1250,5750)(-750,6000)(-1250,6250)
\blacken\path(19250,6250)(19250,5750)(18750,6000)(19250,6250)

\path(-2000,8000)(20000,8000)
\path(-4500,8000)(-3500,8000)
\whiten\path(-3750,8250)(-3750,7750)(-3500,8000)(-3750,8250)
\blacken\path(-1250,8250)(-1250,7750)(-750,8000)(-1250,8250)
\blacken\path(19250,8250)(19250,7750)(18750,8000)(19250,8250)

\path(-2000,10000)(20000,10000)
\put(-3250,10000){\tiny$1$}
\path(-4500,10000)(-3500,10000)
\whiten\path(-3750,10250)(-3750,9750)(-3500,10000)(-3750,10250)
\blacken\path(-1250,10250)(-1250,9750)(-750,10000)(-1250,10250)
\blacken\path(19250,10250)(19250,9750)(18750,10000)(19250,10250)


\put(20000,-3000){\fbox{$w$}}


\path(0,-2000)(0,12000)
\put(-250,-2750){\tiny$1$}
\path(0,-4250)(0,-3250)
\whiten\path(-250,-3500)(250,-3500)(0,-3250)(-250,-3500)
\blacken\path(-250,-750)(250,-750)(0,-1250)(-250,-750)
\blacken\path(-250,10750)(250,10750)(0,11250)(-250,10750)

\path(2000,-2000)(2000,12000)
\path(2000,-4250)(2000,-3250)
\whiten\path(1750,-3500)(2250,-3500)(2000,-3250)(1750,-3500)
\blacken\path(1750,-750)(2250,-750)(2000,-1250)(1750,-750)
\blacken\path(1750,10750)(2250,10750)(2000,11250)(1750,10750)

\path(4000,-2000)(4000,12000)
\path(4000,-4250)(4000,-3250)
\whiten\path(3750,-3500)(4250,-3500)(4000,-3250)(3750,-3500)
\blacken\path(3750,-750)(4250,-750)(4000,-1250)(3750,-750)
\blacken\path(3750,10750)(4250,10750)(4000,11250)(3750,10750)

\path(6000,-2000)(6000,12000)
\path(6000,-4250)(6000,-3250)
\whiten\path(5750,-3500)(6250,-3500)(6000,-3250)(5750,-3500)
\blacken\path(5750,-750)(6250,-750)(6000,-1250)(5750,-750)
\blacken\path(5750,10750)(6250,10750)(6000,11250)(5750,10750)

\path(8000,-2000)(8000,12000)
\path(8000,-4250)(8000,-3250)
\whiten\path(7750,-3500)(8250,-3500)(8000,-3250)(7750,-3500)
\blacken\path(7750,-750)(8250,-750)(8000,-1250)(7750,-750)
\blacken\path(7750,10750)(8250,10750)(8000,11250)(7750,10750)

\path(10000,-2000)(10000,12000)
\put(9750,-2750){\tiny$N$}
\path(10000,-4250)(10000,-3250)
\whiten\path(9750,-3500)(10250,-3500)(10000,-3250)(9750,-3500)
\blacken\path(9750,-750)(10250,-750)(10000,-1250)(9750,-750)
\blacken\path(9750,10750)(10250,10750)(10000,11250)(9750,10750)

\path(12000,-2000)(12000,12000)
\put(11750,-2750){\tiny$N+1$}
\path(12000,-4250)(12000,-3250)
\whiten\path(11750,-3500)(12250,-3500)(12000,-3250)(11750,-3500)
\blacken\path(11750,-1250)(12250,-1250)(12000,-750)(11750,-1250)
\blacken\path(11750,10750)(12250,10750)(12000,11250)(11750,10750)

\path(14000,-2000)(14000,12000)
\path(14000,-4250)(14000,-3250)
\whiten\path(13750,-3500)(14250,-3500)(14000,-3250)(13750,-3500)
\blacken\path(13750,-1250)(14250,-1250)(14000,-750)(13750,-1250)
\blacken\path(13750,10750)(14250,10750)(14000,11250)(13750,10750)

\path(16000,-2000)(16000,12000)
\path(16000,-4250)(16000,-3250)
\whiten\path(15750,-3500)(16250,-3500)(16000,-3250)(15750,-3500)
\blacken\path(15750,-1250)(16250,-1250)(16000,-750)(15750,-1250)
\blacken\path(15750,10750)(16250,10750)(16000,11250)(15750,10750)

\path(18000,-2000)(18000,12000)
\put(17750,-2750){\tiny$M$}
\path(18000,-4250)(18000,-3250)
\whiten\path(17750,-3500)(18250,-3500)(18000,-3250)(17750,-3500)
\blacken\path(17750,-1250)(18250,-1250)(18000,-750)(17750,-1250)
\blacken\path(17750,10750)(18250,10750)(18000,11250)(17750,10750)

\end{picture}

\end{minipage}
\end{center}

\caption[Lattice representation of $S_0$]{Lattice representation of $S_0$. The top row of arrows corresponds with the state $|\Uparrow_M\rangle$, while the bottom row of arrows corresponds with the dual state $\langle \Downarrow_{N/M}|$. Each horizontal lattice line represents a $B$-operator $B(v_j,\{w\}_M)$.}
\end{figure}

\begin{figure}[H]

\begin{center}
\begin{minipage}{4.3in}

\setlength{\unitlength}{0.0003cm}
\begin{picture}(20000,20000)(-9000,-10000)

\put(-7000,-4000){\fbox{$u$}}


\path(-2000,-6000)(20000,-6000)
\put(-3250,-6000){\tiny$n$}
\path(-4500,-6000)(-3500,-6000)
\whiten\path(-3750,-5750)(-3750,-6250)(-3500,-6000)(-3750,-5750)
\blacken\path(-750,-5750)(-750,-6250)(-1250,-6000)(-750,-5750)
\blacken\path(18750,-5750)(18750,-6250)(19250,-6000)(18750,-5750)

\path(-2000,-4000)(20000,-4000)
\path(-4500,-4000)(-3500,-4000)
\whiten\path(-3750,-3750)(-3750,-4250)(-3500,-4000)(-3750,-3750)
\blacken\path(-750,-3750)(-750,-4250)(-1250,-4000)(-750,-3750)
\blacken\path(18750,-3750)(18750,-4250)(19250,-4000)(18750,-3750)

\path(-2000,-2000)(20000,-2000)
\put(-3250,-2000){\tiny$1$}
\path(-4500,-2000)(-3500,-2000)
\whiten\path(-3750,-1750)(-3750,-2250)(-3500,-2000)(-3750,-1750)
\blacken\path(-750,-1750)(-750,-2250)(-1250,-2000)(-750,-1750)
\blacken\path(18750,-1750)(18750,-2250)(19250,-2000)(18750,-1750)


\put(-7000,5000){\fbox{$v$}}


\path(-2000,0)(20000,0)
\put(-3250,0){\tiny$N$}
\path(-4500,0)(-3500,0)
\whiten\path(-3750,250)(-3750,-250)(-3500,0)(-3750,250)
\blacken\path(-1250,250)(-1250,-250)(-750,0)(-1250,250)
\blacken\path(19250,250)(19250,-250)(18750,0)(19250,250)

\path(-2000,2000)(20000,2000)
\path(-4500,2000)(-3500,2000)
\whiten\path(-3750,2250)(-3750,1750)(-3500,2000)(-3750,2250)
\blacken\path(-1250,2250)(-1250,1750)(-750,2000)(-1250,2250)
\blacken\path(19250,2250)(19250,1750)(18750,2000)(19250,2250)

\path(-2000,4000)(20000,4000)
\path(-4500,4000)(-3500,4000)
\whiten\path(-3750,4250)(-3750,3750)(-3500,4000)(-3750,4250)
\blacken\path(-1250,4250)(-1250,3750)(-750,4000)(-1250,4250)
\blacken\path(19250,4250)(19250,3750)(18750,4000)(19250,4250)

\path(-2000,6000)(20000,6000)
\path(-4500,6000)(-3500,6000)
\whiten\path(-3750,6250)(-3750,5750)(-3500,6000)(-3750,6250)
\blacken\path(-1250,6250)(-1250,5750)(-750,6000)(-1250,6250)
\blacken\path(19250,6250)(19250,5750)(18750,6000)(19250,6250)

\path(-2000,8000)(20000,8000)
\path(-4500,8000)(-3500,8000)
\whiten\path(-3750,8250)(-3750,7750)(-3500,8000)(-3750,8250)
\blacken\path(-1250,8250)(-1250,7750)(-750,8000)(-1250,8250)
\blacken\path(19250,8250)(19250,7750)(18750,8000)(19250,8250)

\path(-2000,10000)(20000,10000)
\put(-3250,10000){\tiny$1$}
\path(-4500,10000)(-3500,10000)
\whiten\path(-3750,10250)(-3750,9750)(-3500,10000)(-3750,10250)
\blacken\path(-1250,10250)(-1250,9750)(-750,10000)(-1250,10250)
\blacken\path(19250,10250)(19250,9750)(18750,10000)(19250,10250)


\put(20000,-9000){\fbox{$w$}}


\path(0,-8000)(0,12000)
\put(-250,-9000){\tiny$1$}
\path(0,-10500)(0,-9500)
\whiten\path(-250,-9750)(250,-9750)(0,-9500)(-250,-9750)
\blacken\path(-250,-6750)(250,-6750)(0,-7250)(-250,-6750)
\blacken\path(-250,10750)(250,10750)(0,11250)(-250,10750)

\path(2000,-8000)(2000,12000)
\path(2000,-10500)(2000,-9500)
\whiten\path(1750,-9750)(2250,-9750)(2000,-9500)(1750,-9750)
\blacken\path(1750,-6750)(2250,-6750)(2000,-7250)(1750,-6750)
\blacken\path(1750,10750)(2250,10750)(2000,11250)(1750,10750)

\path(4000,-8000)(4000,12000)
\put(3750,-9000){\tiny$\widetilde{N}$}
\path(4000,-10500)(4000,-9500)
\whiten\path(3750,-9750)(4250,-9750)(4000,-9500)(3750,-9750)
\blacken\path(3750,-6750)(4250,-6750)(4000,-7250)(3750,-6750)
\blacken\path(3750,10750)(4250,10750)(4000,11250)(3750,10750)

\path(6000,-8000)(6000,12000)
\put(5750,-9000){\tiny$\widetilde{N}+1$}
\path(6000,-10500)(6000,-9500)
\whiten\path(5750,-9750)(6250,-9750)(6000,-9500)(5750,-9750)
\blacken\path(5750,-7250)(6250,-7250)(6000,-6750)(5750,-7250)
\blacken\path(5750,10750)(6250,10750)(6000,11250)(5750,10750)

\path(8000,-8000)(8000,12000)
\path(8000,-10500)(8000,-9500)
\whiten\path(7750,-9750)(8250,-9750)(8000,-9500)(7750,-9750)
\blacken\path(7750,-7250)(8250,-7250)(8000,-6750)(7750,-7250)
\blacken\path(7750,10750)(8250,10750)(8000,11250)(7750,10750)

\path(10000,-8000)(10000,12000)
\path(10000,-10500)(10000,-9500)
\whiten\path(9750,-9750)(10250,-9750)(10000,-9500)(9750,-9750)
\blacken\path(9750,-7250)(10250,-7250)(10000,-6750)(9750,-7250)
\blacken\path(9750,10750)(10250,10750)(10000,11250)(9750,10750)

\path(12000,-8000)(12000,12000)
\path(12000,-10500)(12000,-9500)
\whiten\path(11750,-9750)(12250,-9750)(12000,-9500)(11750,-9750)
\blacken\path(11750,-7250)(12250,-7250)(12000,-6750)(11750,-7250)
\blacken\path(11750,10750)(12250,10750)(12000,11250)(11750,10750)

\path(14000,-8000)(14000,12000)
\path(14000,-10500)(14000,-9500)
\whiten\path(13750,-9750)(14250,-9750)(14000,-9500)(13750,-9750)
\blacken\path(13750,-7250)(14250,-7250)(14000,-6750)(13750,-7250)
\blacken\path(13750,10750)(14250,10750)(14000,11250)(13750,10750)

\path(16000,-8000)(16000,12000)
\path(16000,-10500)(16000,-9500)
\whiten\path(15750,-9750)(16250,-9750)(16000,-9500)(15750,-9750)
\blacken\path(15750,-7250)(16250,-7250)(16000,-6750)(15750,-7250)
\blacken\path(15750,10750)(16250,10750)(16000,11250)(15750,10750)

\path(18000,-8000)(18000,12000)
\put(17750,-9000){\tiny$M$}
\path(18000,-10500)(18000,-9500)
\whiten\path(17750,-9750)(18250,-9750)(18000,-9500)(17750,-9750)
\blacken\path(17750,-7250)(18250,-7250)(18000,-6750)(17750,-7250)
\blacken\path(17750,10750)(18250,10750)(18000,11250)(17750,10750)

\end{picture}

\end{minipage}
\end{center}

\caption[Lattice representation of $S_n$]{Lattice representation of $S_n$. The top row of arrows corresponds with the state $|\Uparrow_M\rangle$, while the bottom row of arrows corresponds with the dual state $\langle \Downarrow_{\widetilde{N}/M}|$. The highest $N$ horizontal lines represent $B$-operators $B(v_j,\{w\}_M)$, while the lowest $n$ horizontal lines represent $C$-operators $C(u_i,\{w\}_M)$.}

\label{lat}

\end{figure}

\begin{figure}[H]

\begin{center}
\begin{minipage}{4.3in}

\setlength{\unitlength}{0.0003cm}
\begin{picture}(20000,18000)(-9000,-9000)

\put(-7000,-3000){\fbox{$u$}}


\path(-2000,-6000)(20000,-6000)
\put(-3250,-6000){\tiny$N$}
\path(-4500,-6000)(-3500,-6000)
\whiten\path(-3750,-5750)(-3750,-6250)(-3500,-6000)(-3750,-5750)
\blacken\path(-750,-5750)(-750,-6250)(-1250,-6000)(-750,-5750)
\blacken\path(18750,-5750)(18750,-6250)(19250,-6000)(18750,-5750)

\path(-2000,-4000)(20000,-4000)
\path(-4500,-4000)(-3500,-4000)
\whiten\path(-3750,-3750)(-3750,-4250)(-3500,-4000)(-3750,-3750)
\blacken\path(-750,-3750)(-750,-4250)(-1250,-4000)(-750,-3750)
\blacken\path(18750,-3750)(18750,-4250)(19250,-4000)(18750,-3750)

\path(-2000,-2000)(20000,-2000)
\path(-4500,-2000)(-3500,-2000)
\whiten\path(-3750,-1750)(-3750,-2250)(-3500,-2000)(-3750,-1750)
\blacken\path(-750,-1750)(-750,-2250)(-1250,-2000)(-750,-1750)
\blacken\path(18750,-1750)(18750,-2250)(19250,-2000)(18750,-1750)

\path(-2000,0)(20000,0)
\put(-3250,0){\tiny$1$}
\path(-4500,0)(-3500,0)
\whiten\path(-3750,250)(-3750,-250)(-3500,0)(-3750,250)
\blacken\path(-750,250)(-750,-250)(-1250,0)(-750,250)
\blacken\path(18750,250)(18750,-250)(19250,0)(18750,250)


\put(-7000,5000){\fbox{$v$}}


\path(-2000,2000)(20000,2000)
\put(-3250,2000){\tiny$N$}
\path(-4500,2000)(-3500,2000)
\whiten\path(-3750,2250)(-3750,1750)(-3500,2000)(-3750,2250)
\blacken\path(-1250,2250)(-1250,1750)(-750,2000)(-1250,2250)
\blacken\path(19250,2250)(19250,1750)(18750,2000)(19250,2250)

\path(-2000,4000)(20000,4000)
\path(-4500,4000)(-3500,4000)
\whiten\path(-3750,4250)(-3750,3750)(-3500,4000)(-3750,4250)
\blacken\path(-1250,4250)(-1250,3750)(-750,4000)(-1250,4250)
\blacken\path(19250,4250)(19250,3750)(18750,4000)(19250,4250)

\path(-2000,6000)(20000,6000)
\path(-4500,6000)(-3500,6000)
\whiten\path(-3750,6250)(-3750,5750)(-3500,6000)(-3750,6250)
\blacken\path(-1250,6250)(-1250,5750)(-750,6000)(-1250,6250)
\blacken\path(19250,6250)(19250,5750)(18750,6000)(19250,6250)

\path(-2000,8000)(20000,8000)
\put(-3250,8000){\tiny$1$}
\path(-4500,8000)(-3500,8000)
\whiten\path(-3750,8250)(-3750,7750)(-3500,8000)(-3750,8250)
\blacken\path(-1250,8250)(-1250,7750)(-750,8000)(-1250,8250)
\blacken\path(19250,8250)(19250,7750)(18750,8000)(19250,8250)


\put(20000,-9000){\fbox{$w$}}


\path(0,-8000)(0,10000)
\put(-250,-8750){\tiny$1$}
\path(0,-10250)(0,-9250)
\whiten\path(-250,-9500)(250,-9500)(0,-9250)(-250,-9500)
\blacken\path(-250,-7250)(250,-7250)(0,-6750)(-250,-7250)
\blacken\path(-250,8750)(250,8750)(0,9250)(-250,8750)

\path(2000,-8000)(2000,10000)
\path(2000,-10250)(2000,-9250)
\whiten\path(1750,-9500)(2250,-9500)(2000,-9250)(1750,-9500)
\blacken\path(1750,-7250)(2250,-7250)(2000,-6750)(1750,-7250)
\blacken\path(1750,8750)(2250,8750)(2000,9250)(1750,8750)

\path(4000,-8000)(4000,10000)
\path(4000,-10250)(4000,-9250)
\whiten\path(3750,-9500)(4250,-9500)(4000,-9250)(3750,-9500)
\blacken\path(3750,-7250)(4250,-7250)(4000,-6750)(3750,-7250)
\blacken\path(3750,8750)(4250,8750)(4000,9250)(3750,8750)

\path(6000,-8000)(6000,10000)
\path(6000,-10250)(6000,-9250)
\whiten\path(5750,-9500)(6250,-9500)(6000,-9250)(5750,-9500)
\blacken\path(5750,-7250)(6250,-7250)(6000,-6750)(5750,-7250)
\blacken\path(5750,8750)(6250,8750)(6000,9250)(5750,8750)

\path(8000,-8000)(8000,10000)
\path(8000,-10250)(8000,-9250)
\whiten\path(7750,-9500)(8250,-9500)(8000,-9250)(7750,-9500)
\blacken\path(7750,-7250)(8250,-7250)(8000,-6750)(7750,-7250)
\blacken\path(7750,8750)(8250,8750)(8000,9250)(7750,8750)

\path(10000,-8000)(10000,10000)
\path(10000,-10250)(10000,-9250)
\whiten\path(9750,-9500)(10250,-9500)(10000,-9250)(9750,-9500)
\blacken\path(9750,-7250)(10250,-7250)(10000,-6750)(9750,-7250)
\blacken\path(9750,8750)(10250,8750)(10000,9250)(9750,8750)

\path(12000,-8000)(12000,10000)
\path(12000,-10250)(12000,-9250)
\whiten\path(11750,-9500)(12250,-9500)(12000,-9250)(11750,-9500)
\blacken\path(11750,-7250)(12250,-7250)(12000,-6750)(11750,-7250)
\blacken\path(11750,8750)(12250,8750)(12000,9250)(11750,8750)

\path(14000,-8000)(14000,10000)
\path(14000,-10250)(14000,-9250)
\whiten\path(13750,-9500)(14250,-9500)(14000,-9250)(13750,-9500)
\blacken\path(13750,-7250)(14250,-7250)(14000,-6750)(13750,-7250)
\blacken\path(13750,8750)(14250,8750)(14000,9250)(13750,8750)

\path(16000,-8000)(16000,10000)
\path(16000,-10250)(16000,-9250)
\whiten\path(15750,-9500)(16250,-9500)(16000,-9250)(15750,-9500)
\blacken\path(15750,-7250)(16250,-7250)(16000,-6750)(15750,-7250)
\blacken\path(15750,8750)(16250,8750)(16000,9250)(15750,8750)

\path(18000,-8000)(18000,10000)
\put(17750,-8750){\tiny$M$}
\path(18000,-10250)(18000,-9250)
\whiten\path(17750,-9500)(18250,-9500)(18000,-9250)(17750,-9500)
\blacken\path(17750,-7250)(18250,-7250)(18000,-6750)(17750,-7250)
\blacken\path(17750,8750)(18250,8750)(18000,9250)(17750,8750)

\end{picture}

\end{minipage}
\end{center}

\caption[Lattice representation of $S_N$]{Lattice representation of $S_N$. The top row of arrows corresponds with the state $|\Uparrow_M\rangle$, while the bottom row of arrows corresponds with the dual state $\langle \Uparrow_M|$. The highest $N$ horizontal lines represent $B$-operators $B(v_j,\{w\}_M)$, while the lowest $N$ horizontal lines represent $C$-operators $C(u_i,\{w\}_M)$.}
\end{figure}
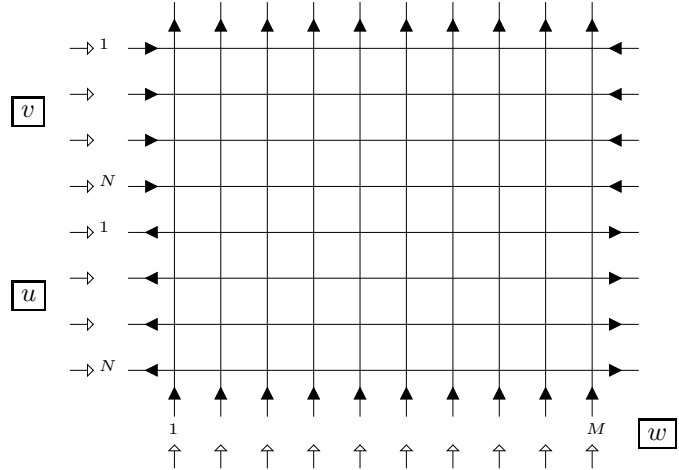

\subsection{Conditions on $S_n(\{u\}_n,\{v\}_N,\{w\}_M)$}

Progressing in the same manner as in \S \ref{s-pfcond}, we will show that the Bethe scalar products $S_{n}( \{u\}_n, \{v\}_N, \{w\}_M )$ satisfy a set of properties which determine them uniquely.

\begin{lemma}
\label{sncond}
{\rm 
We adopt the usual shorthand $S_n = S_n(\{u\}_n,\{v\}_N,\{w\}_M)$. For all $1 \leq n \leq N$ we claim that 

\setcounter{conditions}{0}
\begin{conditions}
{\rm $S_n$ is symmetric in the variables $\{w_{\widetilde{N}+1},\ldots,w_M\}$.}
\end{conditions}

\begin{conditions}
{\rm $S_n$ is a trigonometric polynomial of degree $M-1$ in $u_n$, with zeros occurring at the points $u_n = w_i-\gamma$, for all $1 \leq i \leq \widetilde{N}$.}
\end{conditions}

\begin{conditions}
{\rm Setting $u_n = w_{\widetilde{N}+1}$, $S_n$ satisfies the recursion relation
\begin{align}
&
S_{n}
\Big|_{u_n = w_{\widetilde{N}+1}}
=
\prod_{i=1}^{M}
[w_{\widetilde{N}+1}-w_i+\gamma]
S_{n-1},
\label{sNrec0-xxz}
\end{align}
where $S_{n-1}$ denotes the Bethe scalar product $S_{n-1}(\{u\}_{n-1},\{v\}_N,\{w\}_M)$.
}
\end{conditions}

In addition, we have the supplementary condition 
\begin{conditions}
{\rm 
$S_0$ and $Z_N$ are related via the equation
\begin{align}
S_{0} \Big(\{v\}_N,\{w\}_M \Big)
=
\prod_{i=1}^{N} 
\prod_{j=N+1}^{M}
[v_i-w_j]
Z_N \Big( \{v\}_N,\{w\}_{N} \Big),
\label{cond4}
\end{align}
where we have defined $\{w\}_{N} = \{w_1,\ldots,w_N\}$.}
\end{conditions}
}
\end{lemma}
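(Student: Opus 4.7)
The plan is to prove each of the four conditions using the lattice representation of $S_n$, in close parallel with the proof of Lemma~6 for the domain wall partition function. Two additional algebraic tools are crucial: the doublemon identity~(\ref{doublemon2}), which swaps horizontal and vertical lattice directions, and the factorization $T_a(u,\{w\}_M)=T_a'(u)\,T_a''(u)$ of the monodromy matrix along a partition of $\{w\}_M$ into two blocks.

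Condition~4 I would tackle first, and most directly. Splitting $\{w\}_M$ at position $N$ and expanding each $B(v_i)=A'(v_i)B''(v_i)+B'(v_i)D''(v_i)$, I argue by spin counting that only the $B'(v_i)D''(v_i)$ term contributes to $S_0$: the $''$-sector must map $|\Uparrow_{M-N}\rangle''$ to $\langle\Uparrow_{M-N}|''$, so no net $B''$ can appear. The $'$ and $''$ operators commute (disjoint quantum spaces), so $S_0$ factorizes into $Z_N(\{v\}_N,\{w\}_N)$ on the $'$-side and $\prod_{i=1}^N\prod_{j=N+1}^M[v_i-w_j]$ on the $''$-side via the $D$-eigenvalue~(\ref{diagonal1*}). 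For Condition~1, the doublemon identity recasts $S_n$ with horizontal and vertical roles swapped; columns $\widetilde{N}+1,\ldots,M$ have UP external arrows at both ends and evaluate to $A$-entries of vertical monodromies acting on all-up states, and such entries commute pairwise by the global intertwining equation~(\ref{int1}).

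For Condition~2, inserting a resolution of the identity between $C(u_n)$ and the remaining operators localizes the $u_n$-dependence in a single bottom row of $M$ vertices. Each vertex weight is of degree at most one in $u_n$, and the ice rule together with the $C$-boundary of the row (horizontal arrows changing from leftward at the left end to rightward at the right end) forces at least one $c$-vertex of degree zero, giving total degree $\le M-1$. For the zeros at $u_n=w_i-\gamma$ with $i\le\widetilde{N}$, I would use the singlet-projection property $R_{ab}(v-\gamma,v)\propto I-P_{ab}$: this specialization turns the vertex at (row $u_n$, column $i$) into a rank-one projector onto the antisymmetric subspace, and combining this with the column-$i$ boundary constraints (DOWN at the bottom external edge) produces the cancellation of all surviving configurations.

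Condition~3 I expect to be the main obstacle. I would reduce it to the single operator identity
\[
\langle\Downarrow_{\widetilde{N}/M}|\,C(w_{\widetilde{N}+1},\{w\}_M)
\;=\;
\prod_{i=1}^M[w_{\widetilde{N}+1}-w_i+\gamma]\,\langle\Downarrow_{(\widetilde{N}+1)/M}|,
\]
whose substitution into the definition of $S_n$ yields (\ref{sNrec0-xxz}) after recognizing the remaining expression as $S_{n-1}(\{u\}_{n-1},\{v\}_N,\{w\}_M)$. To prove the identity I would use the freezing $R_{ab}(u,u)=[\gamma]P_{ab}$: evaluating $T_a(w_{\widetilde{N}+1})$ at the matching rapidity makes the $R$-matrix between the auxiliary space and quantum site $\widetilde{N}+1$ into a permutation, and after propagating this permutation past the other $R$-matrices via Yang--Baxter moves and contracting against the reference vectors, the $C$-component reduces to the unique configuration creating a DOWN at site $\widetilde{N}+1$ with the stated scalar normalization (checkable directly for small cases such as $M=2,\widetilde{N}=0$). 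The technical delicacy lies in tracking the scalar prefactors accumulating from the $R$-matrices on either side of the permutation point and in ruling out spurious cross-terms contributed by the off-diagonal parts of those $R$-matrices.
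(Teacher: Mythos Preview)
Your proposal is essentially correct, but it diverges from the paper's proof in instructive ways.

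For Condition~1 you and the paper both invoke the doublemon identity (\ref{doublemon2}); note however that after crossing, the up--up component of $\overline{T}_m$ is $D$, not $A$ (see (\ref{doublemon!})). This is harmless since $D$-operators commute just as well.

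For Condition~4 you use the block factorisation $T_a = T_a' T_a''$ along the split $\{w\}_M = \{w\}_N \cup \{w_{N+1},\dots,w_M\}$ and a spin-counting argument. The paper instead reuses the doublemon identity and the $D$-eigenvalue on $|\Downarrow_N^a\rangle$. Your route is more elementary (it avoids crossing symmetry entirely), while the paper's is more uniform with its treatment of Condition~1.

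The real difference is in Conditions~2 and~3. The paper handles both with a \emph{single} calculation: inserting $\sum_{m>\widetilde N}\sigma_m^-|\Downarrow_{\widetilde N/M}\rangle\langle\Downarrow_{\widetilde N/M}|\sigma_m^+$ isolates the bottom row, and the boundary arrows (left $\downarrow$, bottom $\downarrow$ on columns $1,\dots,\widetilde N$) freeze the first $\widetilde N$ vertices to type $a_-$, producing the explicit factor $\prod_{i=1}^{\widetilde N}[u_n-w_i+\gamma]$ directly. This gives the degree, the zeros, \emph{and}---upon setting $u_n=w_{\widetilde N+1}$, which kills every term in the sum except $m=\widetilde N+1$---the recursion, all at once. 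Your singlet-projection argument for the zeros and your permutation argument $R(0)=[\gamma]P$ for the recursion are both valid (the operator identity you state for $\langle\Downarrow_{\widetilde N/M}|C(w_{\widetilde N+1})$ is true and is in fact equivalent to the paper's row computation at $u_n=w_{\widetilde N+1}$), but they are separate and heavier machinery for what the paper does in one stroke. The ``technical delicacy'' you anticipate in Condition~3 evaporates if you simply reuse the bottom-row expansion from Condition~2.
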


\begin{proof}

\setcounter{conditions}{0}
\begin{conditions}
{\rm We introduce the auxiliary state vectors
\begin{align}
\langle \Uparrow^{a}_N|
=
\bigotimes_{i=1}^{N} \uparrow^{*}_{a_i},
\quad
\langle \Downarrow^{b}_n|
=
\bigotimes_{i=1}^{n} \downarrow^{*}_{b_i},
\quad
|\Downarrow^{a}_N\rangle
=
\bigotimes_{i=1}^{N} \downarrow_{a_i},
\quad
|\Uparrow^{b}_{n}\rangle
=
\bigotimes_{i=1}^{n} \uparrow_{b_i}
\label{auxvec}
\end{align}
which allow us to write
\begin{align}
S_n
=
\langle \Downarrow_{\widetilde{N}/M}|
\otimes
\langle\Uparrow_{N}^{a}|
\otimes
\langle\Downarrow_n^{b}| 
T\Big(
\{v\}_N \cup \{u\}_n,\{w\}_M
\Big)
|\Uparrow_n^{b}\rangle
\otimes
|\Downarrow_N^{a}\rangle
\otimes
|\Uparrow_M\rangle,
\end{align}
where we have defined
\begin{align}
& T\Big(\{v\}_N \cup \{u\}_{n},\{w\}_M\Big)
=
\\
&
T_{b_{n}}(u_n,\{w\}_M)\ldots T_{b_{1}}(u_1,\{w\}_M)
T_{a_N}(v_N,\{w\}_M) \ldots T_{a_1}(v_1,\{w\}_M).
\nonumber
\end{align}
By application of lemma 4 we thus obtain
\begin{align}
T\Big(\{v\}_N \cup \{u\}_{n},\{w\}_M\Big)
=
(-)^{M\widetilde{N}}
\overline{T}_1(w_1,\{\bar{v}\}_N \cup \{\bar{u}\}_n)
\ldots
\overline{T}_M(w_M,\{\bar{v}\}_N \cup \{\bar{u}\}_n),
\end{align}
where for all $1 \leq i \leq M$ we have set
\begin{align}
\overline{T}_i(w_i,\{\bar{v}\}_N \cup \{\bar{u}\}_n)
=
\left(
\begin{array}{rr}
D(w_i,\{\bar{v}\}_N \cup \{\bar{u}\}_n)
&
-B(w_i,\{\bar{v}\}_N \cup \{\bar{u}\}_n)
\\
-C(w_i,\{\bar{v}\}_N \cup \{\bar{u}\}_n)
&
A(w_i,\{\bar{v}\}_N \cup \{\bar{u}\}_n)
\end{array}
\right)_i
\end{align}
with $\{\bar{v}\}_N \cup \{\bar{u}\}_n = \{v_1+\gamma,\ldots,v_N+\gamma,u_1+\gamma,\ldots,u_n+\gamma\}$. Finally, contracting on the quantum spaces $V_1,\ldots,V_M$ gives
\begin{align}
&S_n\Big(\{u\}_n,\{v\}_N,\{w\}_M\Big)
=
\label{preceq}
\\
&
(-)^{(M+1)\widetilde{N}}
\langle \Uparrow_N^a|
\otimes
\langle \Downarrow_n^b|
\prod_{i=1}^{\widetilde{N}}
C(w_i,\{\bar{v}\}_N\cup\{\bar{u}\}_n)
\prod_{j=\widetilde{N}+1}^{M}
D(w_j,\{\bar{v}\}_N\cup\{\bar{u}\}_n)
|\Uparrow_n^b\rangle
\otimes
|\Downarrow_N^a\rangle.
\nonumber
\end{align}
The diagrammatic interpretation of (\ref{preceq}) is shown in figure \ref{altgraph}.

\begin{figure}[H]

\begin{center}
\begin{minipage}{4.3in}

\setlength{\unitlength}{0.0003cm}
\begin{picture}(20000,23000)(-11000,-9000)

\put(-8000,3000){\fbox{$w$}}


\path(-2000,-6000)(18000,-6000)
\put(-4350,-6000){\tiny$1$}
\path(-5500,-6000)(-4500,-6000)
\whiten\path(-4750,-5750)(-4750,-6250)(-4500,-6000)(-4750,-5750)
\blacken\path(-750,-5750)(-750,-6250)(-1250,-6000)(-750,-5750)
\blacken\path(16750,-5750)(16750,-6250)(17250,-6000)(16750,-5750)

\path(-2000,-4000)(18000,-4000)
\path(-5500,-4000)(-4500,-4000)
\whiten\path(-4750,-3750)(-4750,-4250)(-4500,-4000)(-4750,-3750)
\blacken\path(-750,-3750)(-750,-4250)(-1250,-4000)(-750,-3750)
\blacken\path(16750,-3750)(16750,-4250)(17250,-4000)(16750,-3750)

\path(-2000,-2000)(18000,-2000)
\put(-4350,-2000){\tiny$\widetilde{N}$}
\path(-5500,-2000)(-4500,-2000)
\whiten\path(-4750,-1750)(-4750,-2250)(-4500,-2000)(-4750,-1750)
\blacken\path(-750,-1750)(-750,-2250)(-1250,-2000)(-750,-1750)
\blacken\path(16750,-1750)(16750,-2250)(17250,-2000)(16750,-1750)

\path(-2000,0)(18000,0)
\put(-4350,0){\tiny$\widetilde{N}+1$}
\path(-5500,0)(-4500,0)
\whiten\path(-4750,250)(-4750,-250)(-4500,0)(-4750,250)
\blacken\path(-750,250)(-750,-250)(-1250,0)(-750,250)
\blacken\path(17250,250)(17250,-250)(16750,0)(17250,250)

\path(-2000,2000)(18000,2000)
\path(-5500,2000)(-4500,2000)
\whiten\path(-4750,2250)(-4750,1750)(-4500,2000)(-4750,2250)
\blacken\path(-750,2250)(-750,1750)(-1250,2000)(-750,2250)
\blacken\path(17250,2250)(17250,1750)(16750,2000)(17250,2250)

\path(-2000,4000)(18000,4000)
\path(-5500,4000)(-4500,4000)
\whiten\path(-4750,4250)(-4750,3750)(-4500,4000)(-4750,4250)
\blacken\path(-750,4250)(-750,3750)(-1250,4000)(-750,4250)
\blacken\path(17250,4250)(17250,3750)(16750,4000)(17250,4250)

\path(-2000,6000)(18000,6000)
\path(-5500,6000)(-4500,6000)
\whiten\path(-4750,6250)(-4750,5750)(-4500,6000)(-4750,6250)
\blacken\path(-750,6250)(-750,5750)(-1250,6000)(-750,6250)
\blacken\path(17250,6250)(17250,5750)(16750,6000)(17250,6250)

\path(-2000,8000)(18000,8000)
\path(-5500,8000)(-4500,8000)
\whiten\path(-4750,8250)(-4750,7750)(-4500,8000)(-4750,8250)
\blacken\path(-750,8250)(-750,7750)(-1250,8000)(-750,8250)
\blacken\path(17250,8250)(17250,7750)(16750,8000)(17250,8250)

\path(-2000,10000)(18000,10000)
\path(-5500,10000)(-4500,10000)
\whiten\path(-4750,10250)(-4750,9750)(-4500,10000)(-4750,10250)
\blacken\path(-750,10250)(-750,9750)(-1250,10000)(-750,10250)
\blacken\path(17250,10250)(17250,9750)(16750,10000)(17250,10250)

\path(-2000,12000)(18000,12000)
\put(-4350,12000){\tiny$M$}
\path(-5500,12000)(-4500,12000)
\whiten\path(-4750,12250)(-4750,11750)(-4500,12000)(-4750,12250)
\blacken\path(-750,12250)(-750,11750)(-1250,12000)(-750,12250)
\blacken\path(17250,12250)(17250,11750)(16750,12000)(17250,12250)


\path(0,-8000)(0,14000)
\put(-250,-8750){\tiny$\bar{v}_1$}
\path(0,-10500)(0,-9500)
\whiten\path(-250,-9750)(250,-9750)(0,-9500)(-250,-9750)
\blacken\path(-250,-7250)(250,-7250)(0,-6750)(-250,-7250)
\blacken\path(-250,13250)(250,13250)(0,12750)(-250,13250)

\path(2000,-8000)(2000,14000)
\path(2000,-10500)(2000,-9500)
\whiten\path(1750,-9750)(2250,-9750)(2000,-9500)(1750,-9750)
\blacken\path(1750,-7250)(2250,-7250)(2000,-6750)(1750,-7250)
\blacken\path(1750,13250)(2250,13250)(2000,12750)(1750,13250)

\path(4000,-8000)(4000,14000)
\path(4000,-10500)(4000,-9500)
\whiten\path(3750,-9750)(4250,-9750)(4000,-9500)(3750,-9750)
\blacken\path(3750,-7250)(4250,-7250)(4000,-6750)(3750,-7250)
\blacken\path(3750,13250)(4250,13250)(4000,12750)(3750,13250)

\path(6000,-8000)(6000,14000)
\path(6000,-10500)(6000,-9500)
\whiten\path(5750,-9750)(6250,-9750)(6000,-9500)(5750,-9750)
\blacken\path(5750,-7250)(6250,-7250)(6000,-6750)(5750,-7250)
\blacken\path(5750,13250)(6250,13250)(6000,12750)(5750,13250)

\path(8000,-8000)(8000,14000)
\path(8000,-10500)(8000,-9500)
\whiten\path(7750,-9750)(8250,-9750)(8000,-9500)(7750,-9750)
\blacken\path(7750,-7250)(8250,-7250)(8000,-6750)(7750,-7250)
\blacken\path(7750,13250)(8250,13250)(8000,12750)(7750,13250)

\path(10000,-8000)(10000,14000)
\put(9750,-8750){\tiny$\bar{v}_N$}
\path(10000,-10500)(10000,-9500)
\whiten\path(9750,-9750)(10250,-9750)(10000,-9500)(9750,-9750)
\blacken\path(9750,-7250)(10250,-7250)(10000,-6750)(9750,-7250)
\blacken\path(9750,13250)(10250,13250)(10000,12750)(9750,13250)

\path(12000,-8000)(12000,14000)
\put(11750,-8750){\tiny$\bar{u}_1$}
\path(12000,-10500)(12000,-9500)
\whiten\path(11750,-9750)(12250,-9750)(12000,-9500)(11750,-9750)
\blacken\path(11750,-6750)(12250,-6750)(12000,-7250)(11750,-6750)
\blacken\path(11750,12750)(12250,12750)(12000,13250)(11750,12750)

\path(14000,-8000)(14000,14000)
\path(14000,-10500)(14000,-9500)
\whiten\path(13750,-9750)(14250,-9750)(14000,-9500)(13750,-9750)
\blacken\path(13750,-6750)(14250,-6750)(14000,-7250)(13750,-6750)
\blacken\path(13750,12750)(14250,12750)(14000,13250)(13750,12750)

\path(16000,-8000)(16000,14000)
\put(15750,-8750){\tiny$\bar{u}_n$}
\path(16000,-10500)(16000,-9500)
\whiten\path(15750,-9750)(16250,-9750)(16000,-9500)(15750,-9750)
\blacken\path(15750,-6750)(16250,-6750)(16000,-7250)(15750,-6750)
\blacken\path(15750,12750)(16250,12750)(16000,13250)(15750,12750)

\end{picture}

\end{minipage}
\end{center}

\caption[Alternative graphical representation of $S_n$]{Alternative graphical representation of $S_n$. Neglecting an overall minus sign, $S_n$ is equal to this lattice, which is essentially a rotation of figure \ref{lat}. The top row of arrows represents the state $|\Downarrow_N^a\rangle \otimes |\Uparrow_n^b\rangle$, while the bottom row of arrows represents the dual state $\langle \Uparrow_N^a| \otimes \langle \Downarrow_n^b|$. The lowest $\widetilde{N}$ horizontal lines represent $C$-operators, with the remaining lines representing $D$-operators.}

\label{altgraph}

\end{figure}
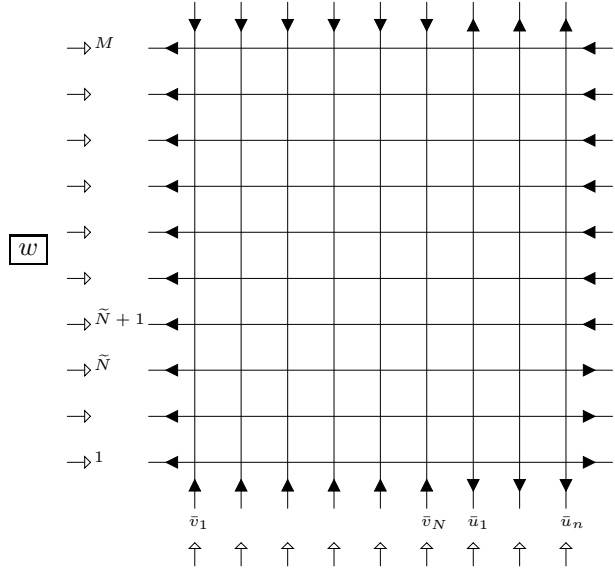

By equation (\ref{DD}) the $D$-operators in (\ref{preceq}) commute, proving that $S_n$ is symmetric in the variables $\{w_{\widetilde{N}+1},\ldots,w_M\}$.
} 
\end{conditions}

\begin{conditions}
{\rm
Inserting the set of states $\sum_{m > \widetilde{N}}
\sigma_m^{-} |\Downarrow_{\widetilde{N}/M}\rangle \langle \Downarrow_{\widetilde{N}/M}| \sigma_m^{+}$ after the first $C$-operator appearing in (\ref{sn-xxz}), we obtain the expansion
\begin{align}
&S_n\Big(\{u\}_n,\{v\}_N,\{w\}_M\Big) 
=
\label{exp100}
\\
&
\sum_{m>\widetilde{N}}
\langle \Downarrow_{\widetilde{N}/M}|
C(u_n,\{w\}_M)
\sigma_m^{-} 
|\Downarrow_{\widetilde{N}/M}\rangle
\langle \Downarrow_{\widetilde{N}/M} |
\sigma_m^{+} 
\prod_{i=1}^{n-1}
C(u_i,\{w\}_M)
\prod_{j=1}^{N}
B(v_j,\{w\}_M)
|\Uparrow_M\rangle
\nonumber
\end{align}
in which all dependence on $u_n$ appears in the first factor of the sum. We therefore wish to calculate 
$
\langle \Downarrow_{\widetilde{N}/M}|
C(u_n,\{w\}_M)
\sigma_m^{-} 
|\Downarrow_{\widetilde{N}/M}\rangle
$
for all $\widetilde{N} < m \leq M$, and do so by identifying it with the string of vertices shown below:

\begin{figure}[H]

\begin{center}
\begin{minipage}{4.3in}

\setlength{\unitlength}{0.0003cm}
\begin{picture}(20000,13000)(-9000,-9000)

\path(-2000,2000)(20000,2000)
\put(-3250,2000){\tiny$u_n$}
\path(-4500,2000)(-3500,2000)
\whiten\path(-3750,2250)(-3750,1750)(-3500,2000)(-3750,2250)
\blacken\path(-750,2250)(-750,1750)(-1250,2000)(-750,2250)
\blacken\path(18750,2250)(18750,1750)(19250,2000)(18750,2250)


\path(0,000)(0,4000)
\put(-250,-1000){\tiny$1$}
\path(0,-2500)(0,-1500)
\whiten\path(-250,-1750)(250,-1750)(0,-1500)(-250,-1750)
\blacken\path(-250,1250)(250,1250)(0,750)(-250,1250)
\blacken\path(-250,3250)(250,3250)(0,2750)(-250,3250)

\path(2000,000)(2000,4000)
\path(2000,-2500)(2000,-1500)
\whiten\path(1750,-1750)(2250,-1750)(2000,-1500)(1750,-1750)
\blacken\path(1750,1250)(2250,1250)(2000,750)(1750,1250)
\blacken\path(1750,3250)(2250,3250)(2000,2750)(1750,3250)

\path(4000,000)(4000,4000)
\put(3750,-1000){\tiny$\widetilde{N}$}
\path(4000,-2500)(4000,-1500)
\whiten\path(3750,-1750)(4250,-1750)(4000,-1500)(3750,-1750)
\blacken\path(3750,1250)(4250,1250)(4000,750)(3750,1250)
\blacken\path(3750,3250)(4250,3250)(4000,2750)(3750,3250)

\path(6000,000)(6000,4000)
\put(5750,-1000){\tiny$\widetilde{N}+1$}
\path(6000,-2500)(6000,-1500)
\whiten\path(5750,-1750)(6250,-1750)(6000,-1500)(5750,-1750)
\blacken\path(5750,750)(6250,750)(6000,1250)(5750,750)
\blacken\path(5750,2750)(6250,2750)(6000,3250)(5750,2750)

\path(8000,000)(8000,4000)
\path(8000,-2500)(8000,-1500)
\whiten\path(7750,-1750)(8250,-1750)(8000,-1500)(7750,-1750)
\blacken\path(7750,750)(8250,750)(8000,1250)(7750,750)
\blacken\path(7750,2750)(8250,2750)(8000,3250)(7750,2750)

\path(10000,000)(10000,4000)
\path(10000,-2500)(10000,-1500)
\whiten\path(9750,-1750)(10250,-1750)(10000,-1500)(9750,-1750)
\blacken\path(9750,750)(10250,750)(10000,1250)(9750,750)
\blacken\path(9750,2750)(10250,2750)(10000,3250)(9750,2750)

\path(12000,000)(12000,4000)
\put(11750,-1000){\tiny$m$}
\path(12000,-2500)(12000,-1500)
\whiten\path(11750,-1750)(12250,-1750)(12000,-1500)(11750,-1750)
\blacken\path(11750,750)(12250,750)(12000,1250)(11750,750)
\blacken\path(11750,3250)(12250,3250)(12000,2750)(11750,3250)

\path(14000,000)(14000,4000)
\path(14000,-2500)(14000,-1500)
\whiten\path(13750,-1750)(14250,-1750)(14000,-1500)(13750,-1750)
\blacken\path(13750,750)(14250,750)(14000,1250)(13750,750)
\blacken\path(13750,2750)(14250,2750)(14000,3250)(13750,2750)

\path(16000,000)(16000,4000)
\path(16000,-2500)(16000,-1500)
\whiten\path(15750,-1750)(16250,-1750)(16000,-1500)(15750,-1750)
\blacken\path(15750,750)(16250,750)(16000,1250)(15750,750)
\blacken\path(15750,2750)(16250,2750)(16000,3250)(15750,2750)

\path(18000,000)(18000,4000)
\put(17750,-1000){\tiny$M$}
\path(18000,-2500)(18000,-1500)
\whiten\path(17750,-1750)(18250,-1750)(18000,-1500)(17750,-1750)
\blacken\path(17750,750)(18250,750)(18000,1250)(17750,750)
\blacken\path(17750,2750)(18250,2750)(18000,3250)(17750,2750)


\put(20000,-1000){\fbox{$w$}}

\put(-7000,-6250){$=$}

\path(-2000,-6000)(20000,-6000)
\put(-3250,-6000){\tiny$u_n$}
\path(-4500,-6000)(-3500,-6000)
\whiten\path(-3750,-5750)(-3750,-6250)(-3500,-6000)(-3750,-5750)
\blacken\path(-750,-5750)(-750,-6250)(-1250,-6000)(-750,-5750)
\blacken\path(1250,-5750)(1250,-6250)(750,-6000)(1250,-5750)
\blacken\path(3250,-5750)(3250,-6250)(2750,-6000)(3250,-5750)
\blacken\path(5250,-5750)(5250,-6250)(4750,-6000)(5250,-5750)
\blacken\path(7250,-5750)(7250,-6250)(6750,-6000)(7250,-5750)
\blacken\path(9250,-5750)(9250,-6250)(8750,-6000)(9250,-5750)
\blacken\path(11250,-5750)(11250,-6250)(10750,-6000)(11250,-5750)
\blacken\path(12750,-5750)(12750,-6250)(13250,-6000)(12750,-5750)
\blacken\path(14750,-5750)(14750,-6250)(15250,-6000)(14750,-5750)
\blacken\path(16750,-5750)(16750,-6250)(17250,-6000)(16750,-5750)
\blacken\path(18750,-5750)(18750,-6250)(19250,-6000)(18750,-5750)


\path(0,-8000)(0,-4000)
\put(-250,-9000){\tiny$1$}
\path(0,-10500)(0,-9500)
\whiten\path(-250,-9750)(250,-9750)(0,-9500)(-250,-9750)
\blacken\path(-250,-6750)(250,-6750)(0,-7250)(-250,-6750)
\blacken\path(-250,-4750)(250,-4750)(0,-5250)(-250,-4750)

\path(2000,-8000)(2000,-4000)
\path(2000,-10500)(2000,-9500)
\whiten\path(1750,-9750)(2250,-9750)(2000,-9500)(1750,-9750)
\blacken\path(1750,-6750)(2250,-6750)(2000,-7250)(1750,-6750)
\blacken\path(1750,-4750)(2250,-4750)(2000,-5250)(1750,-4750)

\path(4000,-8000)(4000,-4000)
\put(3750,-9000){\tiny$\widetilde{N}$}
\path(4000,-10500)(4000,-9500)
\whiten\path(3750,-9750)(4250,-9750)(4000,-9500)(3750,-9750)
\blacken\path(3750,-6750)(4250,-6750)(4000,-7250)(3750,-6750)
\blacken\path(3750,-4750)(4250,-4750)(4000,-5250)(3750,-4750)

\path(6000,-8000)(6000,-4000)
\put(5750,-9000){\tiny$\widetilde{N}+1$}
\path(6000,-10500)(6000,-9500)
\whiten\path(5750,-9750)(6250,-9750)(6000,-9500)(5750,-9750)
\blacken\path(5750,-7250)(6250,-7250)(6000,-6750)(5750,-7250)
\blacken\path(5750,-5250)(6250,-5250)(6000,-4750)(5750,-5250)

\path(8000,-8000)(8000,-4000)
\path(8000,-10500)(8000,-9500)
\whiten\path(7750,-9750)(8250,-9750)(8000,-9500)(7750,-9750)
\blacken\path(7750,-7250)(8250,-7250)(8000,-6750)(7750,-7250)
\blacken\path(7750,-5250)(8250,-5250)(8000,-4750)(7750,-5250)

\path(10000,-8000)(10000,-4000)
\path(10000,-10500)(10000,-9500)
\whiten\path(9750,-9750)(10250,-9750)(10000,-9500)(9750,-9750)
\blacken\path(9750,-7250)(10250,-7250)(10000,-6750)(9750,-7250)
\blacken\path(9750,-5250)(10250,-5250)(10000,-4750)(9750,-5250)

\path(12000,-8000)(12000,-4000)
\put(11750,-9000){\tiny$m$}
\path(12000,-10500)(12000,-9500)
\whiten\path(11750,-9750)(12250,-9750)(12000,-9500)(11750,-9750)
\blacken\path(11750,-7250)(12250,-7250)(12000,-6750)(11750,-7250)
\blacken\path(11750,-4750)(12250,-4750)(12000,-5250)(11750,-4750)

\path(14000,-8000)(14000,-4000)
\path(14000,-10500)(14000,-9500)
\whiten\path(13750,-9750)(14250,-9750)(14000,-9500)(13750,-9750)
\blacken\path(13750,-7250)(14250,-7250)(14000,-6750)(13750,-7250)
\blacken\path(13750,-5250)(14250,-5250)(14000,-4750)(13750,-5250)

\path(16000,-8000)(16000,-4000)
\path(16000,-10500)(16000,-9500)
\whiten\path(15750,-9750)(16250,-9750)(16000,-9500)(15750,-9750)
\blacken\path(15750,-7250)(16250,-7250)(16000,-6750)(15750,-7250)
\blacken\path(15750,-5250)(16250,-5250)(16000,-4750)(15750,-5250)

\path(18000,-8000)(18000,-4000)
\put(17750,-9000){\tiny$M$}
\path(18000,-10500)(18000,-9500)
\whiten\path(17750,-9750)(18250,-9750)(18000,-9500)(17750,-9750)
\blacken\path(17750,-7250)(18250,-7250)(18000,-6750)(17750,-7250)
\blacken\path(17750,-5250)(18250,-5250)(18000,-4750)(17750,-5250)


\put(20000,-9000){\fbox{$w$}}

\end{picture}

\end{minipage}
\end{center}

\caption[Peeling away the bottom row of $S_n$]{Peeling away the bottom row of $S_n$. The upper diagram represents $\langle \Downarrow_{\widetilde{N}/M}| C(u_n,\{w\}_M) \sigma_m^{-} |\Downarrow_{\widetilde{N}/M}\rangle$, with the internal black arrows being summed over all configurations. The lower diagram represents the only surviving configuration.}

\label{svert100}

\end{figure}
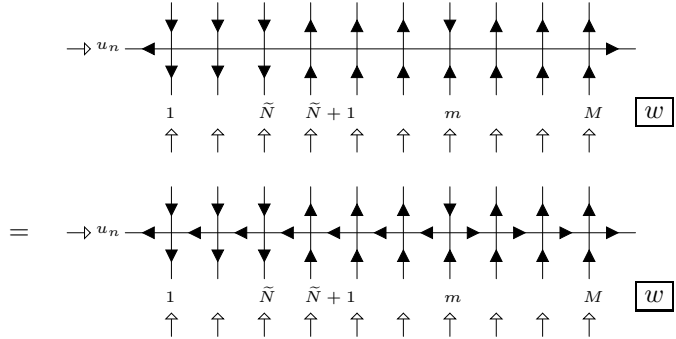

Replacing each vertex in figure \ref{svert100} with its corresponding trigonometric weight, we find that
\begin{align}
&
\langle \Downarrow_{\widetilde{N}/M}|
C(u_n,\{w\}_M)
\sigma_m^{-} 
|\Downarrow_{\widetilde{N}/M}\rangle
=
\label{factor100}
\prod_{i=1}^{\widetilde{N}}
[u_n - w_i +\gamma]
\prod_{\widetilde{N} < i < m}
[u_n -w_i]
[\gamma]
\prod_{m < i \leq M}
[u_n -w_i +\gamma].
\end{align}
Substituting (\ref{factor100}) into the expansion (\ref{exp100}) gives
\begin{align}
S_n
&=
[\gamma] \prod_{i=1}^{\widetilde{N}} [u_n-w_i+\gamma]
\sum_{m > \widetilde{N}}
\prod_{\widetilde{N} < i < m}
[u_n-w_i]
\prod_{m < i \leq M}
[u_n-w_i+\gamma]
\label{snexp-xxz}
\\
&
\times
\langle \Downarrow_{\widetilde{N}/M}|
\sigma_m^{+}
\prod_{i=1}^{n-1}
C(u_i,\{w\}_M)
\prod_{j=1}^{N}
B(v_j,\{w\}_M)
|\Uparrow_M\rangle.
\nonumber
\end{align}
From the last expression it is apparent that $S_n$ is a trigonometric polynomial of degree $M-1$ in $u_n$. Furthermore, $\widetilde{N}$ of the zeros of this polynomial are contained in the factor $\prod_{i=1}^{\widetilde{N}} [u_n-w_i+\gamma]$. 
}
\end{conditions}

\begin{conditions}
{\rm
Setting $u_n = w_{\widetilde{N}+1}$ in equation (\ref{snexp-xxz}), all terms in the sum collapse to zero except the term corresponding to $m = \widetilde{N}+1$ and we obtain
\begin{align}
S_n
\Big|_{u_n = w_{\widetilde{N}+1}}
&=
\prod_{i = 1}^{M}
[w_{\widetilde{N}+1}-w_i+\gamma]
\langle \Downarrow_{\widetilde{N}+1/M}|
\prod_{i=1}^{n-1} 
C(u_{i},\{w\}_M)
\prod_{j=1}^{N}
B(v_j,\{w\}_M)
|\Uparrow_M\rangle,
\label{recproof}
\\
&=
\prod_{i=1}^{M}
[w_{\widetilde{N}+1}-w_i+\gamma]
S_{n-1} \Big( \{u\}_{n-1},\{v\}_N,\{w\}_M \Big),
\nonumber
\end{align}
where we have performed the trivial rearrangements $[\gamma] = [w_{\widetilde{N}+1}-w_{\widetilde{N}+1}+\gamma]$ as well as $\langle \Downarrow_{\widetilde{N}/M}| \sigma_{\widetilde{N}+1}^{+} = \langle \Downarrow_{\widetilde{N}+1/M}|$ to produce the first line of (\ref{recproof}), while the second line follows directly from the definition of $S_{n-1}$. Hence we have proved the recursion relation (\ref{sNrec0-xxz}). The diagrammatic interpretation of this identity is given below.

\begin{figure}[H]

\begin{center}
\begin{minipage}{4.3in}

\setlength{\unitlength}{0.0003cm}
\begin{picture}(20000,21000)(-9000,-9000)

\put(-7000,-4000){\fbox{$u$}}


\path(-2000,-6000)(20000,-6000)
\put(-3250,-6000){\tiny$n$}
\path(-4500,-6000)(-3500,-6000)
\whiten\path(-3750,-5750)(-3750,-6250)(-3500,-6000)(-3750,-5750)
\blacken\path(-750,-5750)(-750,-6250)(-1250,-6000)(-750,-5750)
\blacken\path(1250,-5750)(1250,-6250)(750,-6000)(1250,-5750)
\blacken\path(3250,-5750)(3250,-6250)(2750,-6000)(3250,-5750)
\blacken\path(5250,-5750)(5250,-6250)(4750,-6000)(5250,-5750)
\blacken\path(6750,-5750)(6750,-6250)(7250,-6000)(6750,-5750)
\blacken\path(8750,-5750)(8750,-6250)(9250,-6000)(8750,-5750)
\blacken\path(10750,-5750)(10750,-6250)(11250,-6000)(10750,-5750)
\blacken\path(12750,-5750)(12750,-6250)(13250,-6000)(12750,-5750)
\blacken\path(14750,-5750)(14750,-6250)(15250,-6000)(14750,-5750)
\blacken\path(16750,-5750)(16750,-6250)(17250,-6000)(16750,-5750)
\blacken\path(18750,-5750)(18750,-6250)(19250,-6000)(18750,-5750)

\path(-2000,-4000)(20000,-4000)
\path(-4500,-4000)(-3500,-4000)
\whiten\path(-3750,-3750)(-3750,-4250)(-3500,-4000)(-3750,-3750)
\blacken\path(-750,-3750)(-750,-4250)(-1250,-4000)(-750,-3750)
\blacken\path(18750,-3750)(18750,-4250)(19250,-4000)(18750,-3750)

\path(-2000,-2000)(20000,-2000)
\put(-3250,-2000){\tiny$1$}
\path(-4500,-2000)(-3500,-2000)
\whiten\path(-3750,-1750)(-3750,-2250)(-3500,-2000)(-3750,-1750)
\blacken\path(-750,-1750)(-750,-2250)(-1250,-2000)(-750,-1750)
\blacken\path(18750,-1750)(18750,-2250)(19250,-2000)(18750,-1750)


\put(-7000,5000){\fbox{$v$}}


\path(-2000,0)(20000,0)
\put(-3250,0){\tiny$N$}
\path(-4500,0)(-3500,0)
\whiten\path(-3750,250)(-3750,-250)(-3500,0)(-3750,250)
\blacken\path(-1250,250)(-1250,-250)(-750,0)(-1250,250)
\blacken\path(19250,250)(19250,-250)(18750,0)(19250,250)

\path(-2000,2000)(20000,2000)
\path(-4500,2000)(-3500,2000)
\whiten\path(-3750,2250)(-3750,1750)(-3500,2000)(-3750,2250)
\blacken\path(-1250,2250)(-1250,1750)(-750,2000)(-1250,2250)
\blacken\path(19250,2250)(19250,1750)(18750,2000)(19250,2250)

\path(-2000,4000)(20000,4000)
\path(-4500,4000)(-3500,4000)
\whiten\path(-3750,4250)(-3750,3750)(-3500,4000)(-3750,4250)
\blacken\path(-1250,4250)(-1250,3750)(-750,4000)(-1250,4250)
\blacken\path(19250,4250)(19250,3750)(18750,4000)(19250,4250)

\path(-2000,6000)(20000,6000)
\path(-4500,6000)(-3500,6000)
\whiten\path(-3750,6250)(-3750,5750)(-3500,6000)(-3750,6250)
\blacken\path(-1250,6250)(-1250,5750)(-750,6000)(-1250,6250)
\blacken\path(19250,6250)(19250,5750)(18750,6000)(19250,6250)

\path(-2000,8000)(20000,8000)
\path(-4500,8000)(-3500,8000)
\whiten\path(-3750,8250)(-3750,7750)(-3500,8000)(-3750,8250)
\blacken\path(-1250,8250)(-1250,7750)(-750,8000)(-1250,8250)
\blacken\path(19250,8250)(19250,7750)(18750,8000)(19250,8250)

\path(-2000,10000)(20000,10000)
\put(-3250,10000){\tiny$1$}
\path(-4500,10000)(-3500,10000)
\whiten\path(-3750,10250)(-3750,9750)(-3500,10000)(-3750,10250)
\blacken\path(-1250,10250)(-1250,9750)(-750,10000)(-1250,10250)
\blacken\path(19250,10250)(19250,9750)(18750,10000)(19250,10250)


\put(20000,-9000){\fbox{$w$}}


\path(0,-8000)(0,12000)
\put(-250,-9000){\tiny$1$}
\path(0,-10250)(0,-9250)
\whiten\path(-250,-9500)(250,-9500)(0,-9250)(-250,-9500)
\blacken\path(-250,-6750)(250,-6750)(0,-7250)(-250,-6750)
\blacken\path(-250,-4750)(250,-4750)(0,-5250)(-250,-4750)
\blacken\path(-250,10750)(250,10750)(0,11250)(-250,10750)

\path(2000,-8000)(2000,12000)
\path(2000,-10250)(2000,-9250)
\whiten\path(1750,-9500)(2250,-9500)(2000,-9250)(1750,-9500)
\blacken\path(1750,-6750)(2250,-6750)(2000,-7250)(1750,-6750)
\blacken\path(1750,-4750)(2250,-4750)(2000,-5250)(1750,-4750)
\blacken\path(1750,10750)(2250,10750)(2000,11250)(1750,10750)

\path(4000,-8000)(4000,12000)
\put(3750,-9000){\tiny$\widetilde{N}$}
\path(4000,-10250)(4000,-9250)
\whiten\path(3750,-9500)(4250,-9500)(4000,-9250)(3750,-9500)
\blacken\path(3750,-6750)(4250,-6750)(4000,-7250)(3750,-6750)
\blacken\path(3750,-4750)(4250,-4750)(4000,-5250)(3750,-4750)
\blacken\path(3750,10750)(4250,10750)(4000,11250)(3750,10750)

\path(6000,-8000)(6000,12000)
\put(5750,-9000){\tiny$\widetilde{N}+1$}
\path(6000,-10250)(6000,-9250)
\whiten\path(5750,-9500)(6250,-9500)(6000,-9250)(5750,-9500)
\blacken\path(5750,-7250)(6250,-7250)(6000,-6750)(5750,-7250)
\blacken\path(5750,-4750)(6250,-4750)(6000,-5250)(5750,-4750)
\blacken\path(5750,10750)(6250,10750)(6000,11250)(5750,10750)

\path(8000,-8000)(8000,12000)
\path(8000,-10250)(8000,-9250)
\whiten\path(7750,-9500)(8250,-9500)(8000,-9250)(7750,-9500)
\blacken\path(7750,-7250)(8250,-7250)(8000,-6750)(7750,-7250)
\blacken\path(7750,-5250)(8250,-5250)(8000,-4750)(7750,-5250)
\blacken\path(7750,10750)(8250,10750)(8000,11250)(7750,10750)

\path(10000,-8000)(10000,12000)
\path(10000,-10250)(10000,-9250)
\whiten\path(9750,-9500)(10250,-9500)(10000,-9250)(9750,-9500)
\blacken\path(9750,-7250)(10250,-7250)(10000,-6750)(9750,-7250)
\blacken\path(9750,-5250)(10250,-5250)(10000,-4750)(9750,-5250)
\blacken\path(9750,10750)(10250,10750)(10000,11250)(9750,10750)

\path(12000,-8000)(12000,12000)
\path(12000,-10250)(12000,-9250)
\whiten\path(11750,-9500)(12250,-9500)(12000,-9250)(11750,-9500)
\blacken\path(11750,-7250)(12250,-7250)(12000,-6750)(11750,-7250)
\blacken\path(11750,-5250)(12250,-5250)(12000,-4750)(11750,-5250)
\blacken\path(11750,10750)(12250,10750)(12000,11250)(11750,10750)

\path(14000,-8000)(14000,12000)
\path(14000,-10250)(14000,-9250)
\whiten\path(13750,-9500)(14250,-9500)(14000,-9250)(13750,-9500)
\blacken\path(13750,-7250)(14250,-7250)(14000,-6750)(13750,-7250)
\blacken\path(13750,-5250)(14250,-5250)(14000,-4750)(13750,-5250)
\blacken\path(13750,10750)(14250,10750)(14000,11250)(13750,10750)

\path(16000,-8000)(16000,12000)
\path(16000,-10250)(16000,-9250)
\whiten\path(15750,-9500)(16250,-9500)(16000,-9250)(15750,-9500)
\blacken\path(15750,-7250)(16250,-7250)(16000,-6750)(15750,-7250)
\blacken\path(15750,-5250)(16250,-5250)(16000,-4750)(15750,-5250)
\blacken\path(15750,10750)(16250,10750)(16000,11250)(15750,10750)

\path(18000,-8000)(18000,12000)
\put(17750,-9000){\tiny$M$}
\path(18000,-10250)(18000,-9250)
\whiten\path(17750,-9500)(18250,-9500)(18000,-9250)(17750,-9500)
\blacken\path(17750,-7250)(18250,-7250)(18000,-6750)(17750,-7250)
\blacken\path(17750,-5250)(18250,-5250)(18000,-4750)(17750,-5250)
\blacken\path(17750,10750)(18250,10750)(18000,11250)(17750,10750)

\end{picture}

\end{minipage}
\end{center}

\caption[Freezing the last row of the $S_n$ lattice]{Freezing the last row of the $S_n$ lattice. In general, the vertex at the intersection of the $u_n$ and $w_{\widetilde{N}+1}$ lines can be of type $b_{-}(u_n,w_{\widetilde{N}+1})$ or $c_{-}(u_n,w_{\widetilde{N}+1})$. Setting $u_n = w_{\widetilde{N}+1}$ causes all configurations with a $b_{-}(u_n,w_{\widetilde{N}+1})$ vertex to vanish, and we are left with a frozen row of vertices as shown. This row of vertices produces the prefactor in (\ref{sNrec0-xxz}), whilst the remainder of the lattice represents $S_{n-1}$. }
\end{figure}
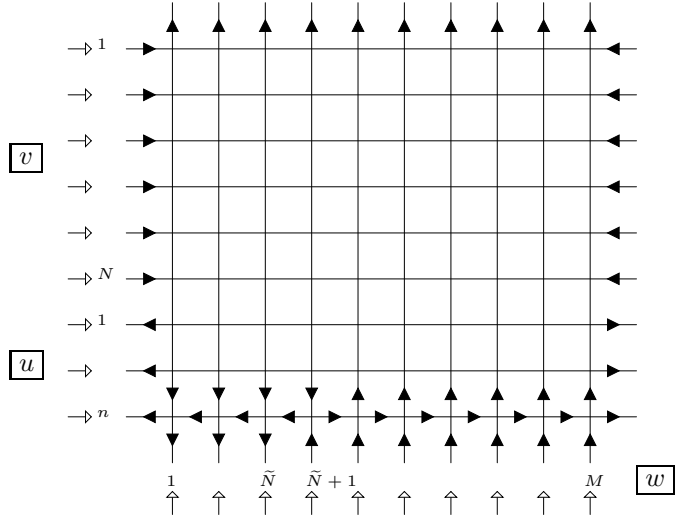

}
\end{conditions}

\begin{conditions}
{\rm From the definition of the state vectors (\ref{auxvec}) we obtain 
\begin{align}
S_{0} \Big(\{v\}_N,\{w\}_M\Big) 
=
\langle \Downarrow_{N/M}| \otimes \langle \Uparrow_N^{a}|
T\Big( \{v\}_{N},\{w\}_M \Big) 
|\Downarrow_N^{a}\rangle \otimes |\Uparrow_M\rangle,  
\end{align}
with $T(\{v\}_N,\{w\}_M)$ given by (\ref{doublemon}). Using lemma 4 and contracting on the quantum spaces $V_1,\ldots,V_M$ gives
\begin{align}
S_{0} \Big(\{v\}_N,\{w\}_M\Big) 
=
(-)^{(M+1)N}
\langle \Uparrow_N^{a} |
\prod_{i=1}^{N}
C(w_i,\{\bar{v}\}_N)
\prod_{j=N+1}^{M}
D(w_{j},\{\bar{v}\}_N)
|\Downarrow_N^{a}\rangle.
\end{align}
Now since $|\Downarrow_N^{a}\rangle$ is an eigenvector of the $D$-operators, as can be seen from equation (\ref{diagonal2}), we have
\begin{align}
S_{0} \Big(\{v\}_N,\{w\}_M\Big) 
=
(-)^{(M+1)N}
\prod_{i=1}^{N}
\prod_{j=N+1}^{M}
[w_j-\bar{v}_i+\gamma]
\langle \Uparrow_N^{a} |
\prod_{i=1}^{N}
C(w_i,\{\bar{v}\}_N) 
|\Downarrow_N^{a}\rangle
\end{align}
or equivalently, substituting $\bar{v}_i = v_i+\gamma$ for all $1 \leq i \leq N$ into the previous equation,
\begin{align}
S_{0} \Big(\{v\}_N,\{w\}_M\Big) 
=
\prod_{i=1}^{N}
\prod_{j=N+1}^{M}
[v_i-w_j]
\langle \Uparrow_N^{a} |
\prod_{i=1}^{N}
C(w_i,\{\bar{v}\}_N) 
|\Downarrow_N^{a}\rangle.
\label{cond4pf}
\end{align}
Comparing with the alternative expression (\ref{zequiv}) for the domain wall partition function, equation (\ref{cond4pf}) completes the proof of (\ref{cond4}). A graphical version of this identity is given below.

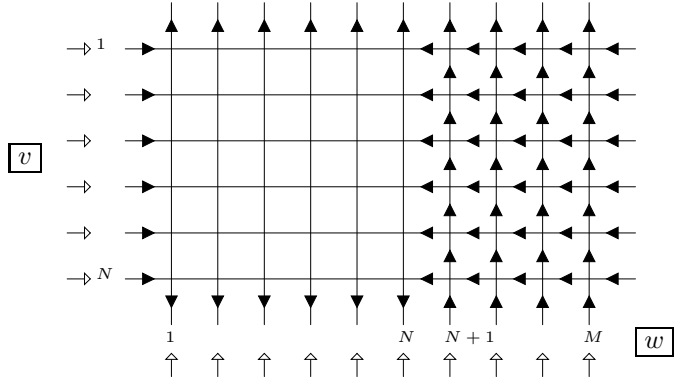
\begin{figure}[H]

\begin{center}
\begin{minipage}{4.3in}

\setlength{\unitlength}{0.0003cm}
\begin{picture}(20000,15000)(-9000,-3000)

\put(-7000,5000){\fbox{$v$}}


\path(-2000,0)(20000,0)
\put(-3250,0){\tiny$N$}
\path(-4500,0)(-3500,0)
\whiten\path(-3750,250)(-3750,-250)(-3500,0)(-3750,250)
\blacken\path(-1250,250)(-1250,-250)(-750,0)(-1250,250)
\blacken\path(11250,250)(11250,-250)(10750,0)(11250,250)
\blacken\path(13250,250)(13250,-250)(12750,0)(13250,250)
\blacken\path(15250,250)(15250,-250)(14750,0)(15250,250)
\blacken\path(17250,250)(17250,-250)(16750,0)(17250,250)
\blacken\path(19250,250)(19250,-250)(18750,0)(19250,250)

\path(-2000,2000)(20000,2000)
\path(-4500,2000)(-3500,2000)
\whiten\path(-3750,2250)(-3750,1750)(-3500,2000)(-3750,2250)
\blacken\path(-1250,2250)(-1250,1750)(-750,2000)(-1250,2250)
\blacken\path(11250,2250)(11250,1750)(10750,2000)(11250,2250)
\blacken\path(13250,2250)(13250,1750)(12750,2000)(13250,2250)
\blacken\path(15250,2250)(15250,1750)(14750,2000)(15250,2250)
\blacken\path(17250,2250)(17250,1750)(16750,2000)(17250,2250)
\blacken\path(19250,2250)(19250,1750)(18750,2000)(19250,2250)

\path(-2000,4000)(20000,4000)
\path(-4500,4000)(-3500,4000)
\whiten\path(-3750,4250)(-3750,3750)(-3500,4000)(-3750,4250)
\blacken\path(-1250,4250)(-1250,3750)(-750,4000)(-1250,4250)
\blacken\path(11250,4250)(11250,3750)(10750,4000)(11250,4250)
\blacken\path(13250,4250)(13250,3750)(12750,4000)(13250,4250)
\blacken\path(15250,4250)(15250,3750)(14750,4000)(15250,4250)
\blacken\path(17250,4250)(17250,3750)(16750,4000)(17250,4250)
\blacken\path(19250,4250)(19250,3750)(18750,4000)(19250,4250)

\path(-2000,6000)(20000,6000)
\path(-4500,6000)(-3500,6000)
\whiten\path(-3750,6250)(-3750,5750)(-3500,6000)(-3750,6250)
\blacken\path(-1250,6250)(-1250,5750)(-750,6000)(-1250,6250)
\blacken\path(11250,6250)(11250,5750)(10750,6000)(11250,6250)
\blacken\path(13250,6250)(13250,5750)(12750,6000)(13250,6250)
\blacken\path(15250,6250)(15250,5750)(14750,6000)(15250,6250)
\blacken\path(17250,6250)(17250,5750)(16750,6000)(17250,6250)
\blacken\path(19250,6250)(19250,5750)(18750,6000)(19250,6250)

\path(-2000,8000)(20000,8000)
\path(-4500,8000)(-3500,8000)
\whiten\path(-3750,8250)(-3750,7750)(-3500,8000)(-3750,8250)
\blacken\path(-1250,8250)(-1250,7750)(-750,8000)(-1250,8250)
\blacken\path(11250,8250)(11250,7750)(10750,8000)(11250,8250)
\blacken\path(13250,8250)(13250,7750)(12750,8000)(13250,8250)
\blacken\path(15250,8250)(15250,7750)(14750,8000)(15250,8250)
\blacken\path(17250,8250)(17250,7750)(16750,8000)(17250,8250)
\blacken\path(19250,8250)(19250,7750)(18750,8000)(19250,8250)

\path(-2000,10000)(20000,10000)
\put(-3250,10000){\tiny$1$}
\path(-4500,10000)(-3500,10000)
\whiten\path(-3750,10250)(-3750,9750)(-3500,10000)(-3750,10250)
\blacken\path(-1250,10250)(-1250,9750)(-750,10000)(-1250,10250)
\blacken\path(11250,10250)(11250,9750)(10750,10000)(11250,10250)
\blacken\path(13250,10250)(13250,9750)(12750,10000)(13250,10250)
\blacken\path(15250,10250)(15250,9750)(14750,10000)(15250,10250)
\blacken\path(17250,10250)(17250,9750)(16750,10000)(17250,10250)
\blacken\path(19250,10250)(19250,9750)(18750,10000)(19250,10250)


\path(0,-2000)(0,12000)
\put(-250,-2750){\tiny$1$}
\path(0,-4250)(0,-3250)
\whiten\path(-250,-3500)(250,-3500)(0,-3250)(-250,-3500)
\blacken\path(-250,-750)(250,-750)(0,-1250)(-250,-750)
\blacken\path(-250,10750)(250,10750)(0,11250)(-250,10750)

\path(2000,-2000)(2000,12000)
\path(2000,-4250)(2000,-3250)
\whiten\path(1750,-3500)(2250,-3500)(2000,-3250)(1750,-3500)
\blacken\path(1750,-750)(2250,-750)(2000,-1250)(1750,-750)
\blacken\path(1750,10750)(2250,10750)(2000,11250)(1750,10750)

\path(4000,-2000)(4000,12000)
\path(4000,-4250)(4000,-3250)
\whiten\path(3750,-3500)(4250,-3500)(4000,-3250)(3750,-3500)
\blacken\path(3750,-750)(4250,-750)(4000,-1250)(3750,-750)
\blacken\path(3750,10750)(4250,10750)(4000,11250)(3750,10750)

\path(6000,-2000)(6000,12000)
\path(6000,-4250)(6000,-3250)
\whiten\path(5750,-3500)(6250,-3500)(6000,-3250)(5750,-3500)
\blacken\path(5750,-750)(6250,-750)(6000,-1250)(5750,-750)
\blacken\path(5750,10750)(6250,10750)(6000,11250)(5750,10750)

\path(8000,-2000)(8000,12000)
\path(8000,-4250)(8000,-3250)
\whiten\path(7750,-3500)(8250,-3500)(8000,-3250)(7750,-3500)
\blacken\path(7750,-750)(8250,-750)(8000,-1250)(7750,-750)
\blacken\path(7750,10750)(8250,10750)(8000,11250)(7750,10750)

\path(10000,-2000)(10000,12000)
\put(9750,-2750){\tiny$N$}
\path(10000,-4250)(10000,-3250)
\whiten\path(9750,-3500)(10250,-3500)(10000,-3250)(9750,-3500)
\blacken\path(9750,-750)(10250,-750)(10000,-1250)(9750,-750)
\blacken\path(9750,10750)(10250,10750)(10000,11250)(9750,10750)

\path(12000,-2000)(12000,12000)
\put(11750,-2750){\tiny$N+1$}
\path(12000,-4250)(12000,-3250)
\whiten\path(11750,-3500)(12250,-3500)(12000,-3250)(11750,-3500)
\blacken\path(11750,-1250)(12250,-1250)(12000,-750)(11750,-1250)
\blacken\path(11750,750)(12250,750)(12000,1250)(11750,750)
\blacken\path(11750,2750)(12250,2750)(12000,3250)(11750,2750)
\blacken\path(11750,4750)(12250,4750)(12000,5250)(11750,4750)
\blacken\path(11750,6750)(12250,6750)(12000,7250)(11750,6750)
\blacken\path(11750,8750)(12250,8750)(12000,9250)(11750,8750)
\blacken\path(11750,10750)(12250,10750)(12000,11250)(11750,10750)

\path(14000,-2000)(14000,12000)
\path(14000,-4250)(14000,-3250)
\whiten\path(13750,-3500)(14250,-3500)(14000,-3250)(13750,-3500)
\blacken\path(13750,-1250)(14250,-1250)(14000,-750)(13750,-1250)
\blacken\path(13750,750)(14250,750)(14000,1250)(13750,750)
\blacken\path(13750,2750)(14250,2750)(14000,3250)(13750,2750)
\blacken\path(13750,4750)(14250,4750)(14000,5250)(13750,4750)
\blacken\path(13750,6750)(14250,6750)(14000,7250)(13750,6750)
\blacken\path(13750,8750)(14250,8750)(14000,9250)(13750,8750)
\blacken\path(13750,10750)(14250,10750)(14000,11250)(13750,10750)

\path(16000,-2000)(16000,12000)
\path(16000,-4250)(16000,-3250)
\whiten\path(15750,-3500)(16250,-3500)(16000,-3250)(15750,-3500)
\blacken\path(15750,-1250)(16250,-1250)(16000,-750)(15750,-1250)
\blacken\path(15750,750)(16250,750)(16000,1250)(15750,750)
\blacken\path(15750,2750)(16250,2750)(16000,3250)(15750,2750)
\blacken\path(15750,4750)(16250,4750)(16000,5250)(15750,4750)
\blacken\path(15750,6750)(16250,6750)(16000,7250)(15750,6750)
\blacken\path(15750,8750)(16250,8750)(16000,9250)(15750,8750)
\blacken\path(15750,10750)(16250,10750)(16000,11250)(15750,10750)

\path(18000,-2000)(18000,12000)
\put(17750,-2750){\tiny$M$}
\path(18000,-4250)(18000,-3250)
\whiten\path(17750,-3500)(18250,-3500)(18000,-3250)(17750,-3500)
\blacken\path(17750,-1250)(18250,-1250)(18000,-750)(17750,-1250)
\blacken\path(17750,750)(18250,750)(18000,1250)(17750,750)
\blacken\path(17750,2750)(18250,2750)(18000,3250)(17750,2750)
\blacken\path(17750,4750)(18250,4750)(18000,5250)(17750,4750)
\blacken\path(17750,6750)(18250,6750)(18000,7250)(17750,6750)
\blacken\path(17750,8750)(18250,8750)(18000,9250)(17750,8750)
\blacken\path(17750,10750)(18250,10750)(18000,11250)(17750,10750)


\put(20000,-3000){\fbox{$w$}}

\end{picture}

\end{minipage}
\end{center}

\caption[Equivalence between $S_0$ and $Z_N$]{Equivalence between $S_0$ and $Z_N$. The final $M-N$ columns of the $S_0$ lattice must assume the configuration shown. All other configurations vanish. The block of vertices thus obtained corresponds with the prefactor in (\ref{cond4}), whilst the remainder of the lattice represents $Z_N$.}
\end{figure}

}
\end{conditions}

\end{proof}

\subsection{Determinant expression for $S_n(\{u\}_n,\{v\}_N,\{w\}_M)$}

\begin{lemma}
{\rm 
Let us define the functions 
\begin{align}
&
f_i(w)
=
\frac{[\gamma]}{[v_i-w]}
\prod_{k\not=i}^{N} [v_k-w+\gamma],
\label{ffunct}
\\
&
g_i(u)
=
\label{gfunct}
\frac{[\gamma]}{[v_i-u]}
\left(
\prod_{k\not=i}^{N} [v_k-u+\gamma] \prod_{k=1}^{M} [u-w_k+\gamma]
-
\prod_{k\not=i}^{N} [v_k-u-\gamma] \prod_{k=1}^{M}[u-w_k]
\right).
\end{align}
Using these definitions, we construct the $N \times N$ matrix
\begin{align}
\mathcal{M}_n \Big(\{u\}_n,\{v\}_N,\{w\}_M\Big)
=
\label{M}
\left(
\begin{array}{cccccc}
f_1(w_1) & \cdots & f_1(w_{\widetilde{N}}) & 
g_1(u_n) & \cdots & g_1(u_1)
\\
\vdots & & \vdots & \vdots & & \vdots
\\
f_N(w_1) & \cdots & f_N(w_{\widetilde{N}}) &
g_N(u_n) & \cdots & g_N(u_1)
\end{array}
\right).
\end{align}
Assuming that the parameters $\{v\}_N$ satisfy the Bethe equations (\ref{bethe2}), we have
\begin{align}
S_n 
\label{sNcalc-xxz}
=
\frac{
\displaystyle{
\prod_{i=1}^{N} \prod_{j=1}^{M} [v_i-w_j]
\det\mathcal{M}_n \Big(\{u\}_n,\{v\}_N,\{w\}_M\Big)
}
}
{
\displaystyle{
\prod_{i=1}^{n} \prod_{j=1}^{\widetilde{N}}[u_i-w_j]
\prod_{1 \leq i < j \leq n} [u_i-u_j]
\prod_{1 \leq i < j \leq N} [v_i-v_j]
\prod_{1 \leq i < j \leq \widetilde{N}} [w_j-w_i]
}
}.
\end{align}
The expression (\ref{sNcalc-xxz}) for the intermediate Bethe scalar product $S_n$ originally appeared in appendix C of \cite{kmt}.

}
\end{lemma}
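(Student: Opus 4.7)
The plan is to establish (\ref{sNcalc-xxz}) by an Izergin--Korepin-style uniqueness argument: I would show that the right-hand side of (\ref{sNcalc-xxz}) satisfies the four conditions of Lemma \ref{sncond}, and then argue that these conditions pin down $S_n$ uniquely. Uniqueness itself proceeds by induction on $n$, in parallel with Lemma \ref{uniqueness}. The base case $n=0$ is given by condition {\bf 4} combined with the uniqueness of $Z_N$. For the inductive step, suppose $\breve{S}_{n-1}=S_{n-1}$; then for any $\breve{S}_n$ obeying conditions {\bf 1}--{\bf 4}, condition {\bf 3} fixes $\breve{S}_n$ at $u_n=w_{\widetilde{N}+1}$, and the symmetry in $\{w_{\widetilde{N}+1},\ldots,w_M\}$ from condition {\bf 1} extends this to $n$ distinct values at $u_n=w_j$ for $\widetilde{N}+1\le j\le M$. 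Combined with the $\widetilde{N}$ zeros of condition {\bf 2}, these are $M$ evaluations, pinning down $\breve{S}_n$ as a trigonometric polynomial of degree $M-1$ in $u_n$.

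It remains to verify the four conditions for the right-hand side of (\ref{sNcalc-xxz}). Condition {\bf 1} is immediate: $\{w_{\widetilde{N}+1},\ldots,w_M\}$ enter the matrix entries only through $\prod_{k=1}^{M}[u-w_k+\gamma]$ and $\prod_{k=1}^{M}[u-w_k]$ inside each $g_i(u)$, which are symmetric in $\{w\}_M$, while the overall prefactor $\prod_{i,j}[v_i-w_j]$ is similarly symmetric. Condition {\bf 4} (base case $n=0$) follows by pulling $\prod_{k=1}^N[v_k-w_j+\gamma]$ out of the $j$-th column of the $f$-matrix: the remaining determinant is precisely Izergin's expression (\ref{dwpf}) for $Z_N(\{v\}_N,\{w\}_N)$, and the surplus factor $\prod_{i=1}^N\prod_{j=N+1}^M[v_i-w_j]$ of (\ref{cond4}) is visible in the prefactor. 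Condition {\bf 3} reduces to the identity
\begin{align*}
g_i(w_{\widetilde{N}+1})
=
\prod_{k=1}^M[w_{\widetilde{N}+1}-w_k+\gamma]\; f_i(w_{\widetilde{N}+1}),
\end{align*}
which holds because $\prod_{k=1}^M[w_{\widetilde{N}+1}-w_k]=0$ kills the second term in $g_i$. Pulling this scalar out of column $\widetilde{N}+1$ turns $\mathcal{M}_n$ into $\mathcal{M}_{n-1}$ (with $\widetilde{N}\mapsto\widetilde{N}+1$), and a routine check on the prefactor yields the recursion (\ref{sNrec0-xxz}).

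Condition {\bf 2} I expect to be the main obstacle. The degree count in $u_n$ is mechanical once one knows $g_i(u_n)$ is a trigonometric polynomial of degree $N+M-2$; this is where the Bethe equations for $\{v\}_N$ enter, as they force the apparent pole of $g_i(u_n)$ at $u_n=v_i$ to cancel. Dividing by the $u_n$-dependent denominator factors $\prod_{j=1}^{\widetilde{N}}[u_n-w_j]\prod_{i=1}^{n-1}[u_i-u_n]$ of total degree $N-1$ leaves degree $M-1$, as required. The vanishing at $u_n=w_l-\gamma$ for $1\le l\le\widetilde{N}$ I would extract directly from the determinant: at such a specialisation the first product in $g_i(u_n)$ vanishes since $\prod_{k=1}^M[w_l-w_k]=0$, and a short computation shows
\begin{align*}
g_i(w_l-\gamma)
=
\frac{[\gamma]\prod_{k=1}^N[v_k-w_l]\prod_{k\neq l}^M[w_l-w_k-\gamma]}{\prod_{k=1}^N[v_k-w_l+\gamma]}\; f_i(w_l),
\end{align*}
with a constant of proportionality manifestly independent of $i$. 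Hence column $\widetilde{N}+1$ of $\mathcal{M}_n$ collapses onto column $l$, forcing $\det\mathcal{M}_n$ to vanish. With this column-proportionality in hand, all four conditions of Lemma \ref{sncond} are in force and the uniqueness argument closes the proof.
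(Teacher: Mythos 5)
Your proposal follows essentially the same route as the paper: verify the four conditions of Lemma \ref{sncond} for the right-hand side of (\ref{sNcalc-xxz}) and invoke uniqueness (which you, unlike the paper, actually spell out via an induction on $n$ using the $M$ evaluation points $u_n=w_j$, $\widetilde{N}+1\leq j\leq M$, and $u_n=w_j-\gamma$, $1\leq j\leq\widetilde{N}$ --- a welcome addition). The one place you are thinner than the paper is condition {\bf 2}: asserting that dividing the degree-$(M+N-2)$ numerator by the degree-$(N-1)$ denominator ``leaves degree $M-1$'' presupposes exact divisibility, which still requires checking that the determinant vanishes at $u_n=u_j$ (two equal columns) and at $u_n=w_j$ for $1\leq j\leq\widetilde{N}$ (two proportional columns, via the same identity $g_i(w_j)=\prod_{k=1}^{M}[w_j-w_k+\gamma]\,f_i(w_j)$ that you already state and use for condition {\bf 3}).
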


\begin{proof}
Firstly, one must show that $S_n$ is uniquely determined by the set of conditions in lemma \ref{sncond}. This is accomplished using similar arguments to those presented in lemma \ref{uniqueness}, and we shall assume this fact {\it a priori.} Hence it will be sufficient to show that the expression (\ref{sNcalc-xxz}) satisfies the list of properties given in lemma \ref{sncond}.

\setcounter{conditions}{0}
\begin{conditions}
{\rm
All dependence of the expression (\ref{sNcalc-xxz}) on the variables $\{w_{\widetilde{N}+1},\ldots,w_M\}$ occurs in the factor $\prod_{i=1}^{N} \prod_{j=1}^{M} [v_i-w_j]$ and in the functions $g_i(u_j)$ in the determinant. Clearly, these terms are invariant under the permutation $w_i \leftrightarrow w_j$ for all $i \not= j$. Hence the expression (\ref{sNcalc-xxz}) is symmetric in $\{w_{\widetilde{N}+1},\ldots,w_M\}$.
}
\end{conditions}

\begin{conditions} 
{\rm Consider the expression (\ref{gfunct}) for $g_i(u_n)$. Since the variables $\{v\}_N$ satisfy the Bethe equations (\ref{bethe2}), the numerator of $g_i(u_n)$ vanishes in the limit $u_n \rightarrow v_i$. It follows that the pole in (\ref{gfunct}) is removable, and therefore $g_i(u_n)$ is a trigonometric polynomial of degree $M+N-2$ in $u_n$. Using this fact, we see that (\ref{sNcalc-xxz}) is a quotient of trigonometric polynomials in $u_n$. The polynomial in the numerator has degree $M+N-2$, while the polynomial in the denominator has degree $N-1$. We must show that every zero in the denominator is cancelled by a zero in the numerator.

Setting $u_n = u_j$ for $1 \leq j \leq n-1$ causes two columns of the determinant to become equal, producing $n-1$ zeros in the numerator which cancel $n-1$ of the zeros in the denominator. Furthermore since 
\begin{align}
g_i(w_j)
&=
\frac{[\gamma]}{[v_i-w_j]}
\prod_{k \not= i}^{N} [v_k-w_j+\gamma]
\prod_{k = 1}^{M} [w_j-w_k+\gamma]
=
\prod_{k=1}^{M}[w_j-w_k+\gamma]
f_i(w_j),
\label{stuff}
\end{align}
it follows that by setting $u_n = w_j$ for $1 \leq j \leq \widetilde{N}$, two columns of the determinant are equal up to a multiplicative factor, producing $\widetilde{N}$ zeros in the numerator which cancel $\widetilde{N}$ of the zeros in the denominator. This proves that the expression (\ref{sNcalc-xxz}) is a trigonometric polynomial of degree $M-1$ in $u_n$. 

Finally, since
\begin{align}
g_i(w_j-\gamma)
&=
\frac{-[\gamma]}{[v_i-w_j+\gamma]}
\prod_{k \not= i}^{N} [v_k-w_j]
\prod_{k=1}^{M} [w_j-w_k-\gamma],
\\
&=
-
\prod_{k=1}^{M} [w_j-w_k-\gamma]
\prod_{k=1}^{N} \frac{[v_k-w_j]}{[v_k-w_j+\gamma]}
f_i(w_j),
\nonumber
\end{align}
we see that by setting $u_n = w_j -\gamma$ for all $1 \leq j \leq \widetilde{N}$, two columns of the determinant are equal up to a multiplicative factor, producing the $\widetilde{N}$ zeros which (\ref{sNcalc-xxz}) requires in order to satisfy property {\bf 2}. 
}
\end{conditions}

\begin{conditions}
{\rm Using equation (\ref{stuff}) and the definition of the matrix (\ref{M}), it is clear that 
\begin{align}
&
\det \mathcal{M}_n \Big( \{u\}_n,\{v\}_N,\{w\}_M  \Big)
\Big|_{u_n = w_{\widetilde{N}+1}}
=
\label{sNrec-xxz}
\\
&
\prod_{k=1}^{M}
[w_{\widetilde{N}+1}-w_k+\gamma]
\det \mathcal{M}_{n-1} \Big( \{u\}_{n-1}, \{v\}_N, \{w\}_M \Big).
\nonumber
\end{align}
Furthermore, we notice the trivial product identity
\begin{align}
&
\left.
\left(
\prod_{i=1}^{n} \prod_{j=1}^{\widetilde{N}} [u_i-w_j]
\prod_{1 \leq i<j \leq n} [u_i-u_j]
\prod_{1 \leq i< j \leq \widetilde{N}} [w_j -w_i]
\right)
\right|_{u_n = w_{\widetilde{N}+1}}
=
\label{sNrec2-xxz}
\\
&
\phantom{\Big((}
\prod_{i=1}^{n-1} \prod_{j=1}^{\widetilde{N}+1} [u_i-w_j]
\prod_{1 \leq i<j \leq n-1} [u_i-u_j]
\prod_{1 \leq i < j \leq \widetilde{N}+1}[w_j - w_i].
\nonumber
\end{align}
Combining the results (\ref{sNrec-xxz}) and (\ref{sNrec2-xxz}), we find that the expression (\ref{sNcalc-xxz}) satisfies the recursion relation (\ref{sNrec0-xxz}).
}
\end{conditions}

\begin{conditions}
{\rm
Taking the $n=0$ case of (\ref{sNcalc-xxz}) yields
\begin{align}
S_0\Big(
\{v\}_N,\{w\}_M
\Big)
=
\frac{\displaystyle{
\prod_{i=1}^{N}
\prod_{j=1}^{M}
[v_i-w_j]
\det\mathcal{M}_0
}}
{\displaystyle{
\prod_{1\leq i<j \leq N}
[v_i-v_j][w_j-w_i]
}},
\label{s0last}
\end{align}
with the matrix $\mathcal{M}_0$ given by
\begin{align}
\mathcal{M}_0
&=
\left(
\begin{array}{ccc}
\frac{[\gamma]}{[v_1-w_1]} \prod_{k\not=1} [v_k-w_1+\gamma]
&
\cdots
&
\frac{[\gamma]}{[v_1-w_N]} \prod_{k\not=1} [v_k-w_N+\gamma]
\\
\vdots
&
&
\vdots
\\
\frac{[\gamma]}{[v_N-w_1]} \prod_{k\not=N} [v_k-w_1+\gamma]
&
\cdots
&
\frac{[\gamma]}{[v_N-w_N]} \prod_{k\not=N} [v_k-w_N+\gamma]
\end{array}
\right).
\end{align}
Comparing (\ref{s0last}) with the determinant expression (\ref{dwpf}) for the domain wall partition function, we see that $S_0 = \prod_{i=1}^{N} \prod_{j=N+1}^{M} [v_i-w_j] Z_N$, as required.
}
\end{conditions}

\end{proof}

\subsection{Evaluation of $S_N(\{u\}_N,\{v\}_N,\{w\}_M)$}

Let us now consider the $n=N$ case of equation (\ref{sNcalc-xxz}) in more detail. For purely aesthetic purposes, we simultaneously reverse the order of the columns in the matrix $\mathcal{M}_N$ and the order of the variables in the Vandermonde $\prod_{1\leq i<j \leq N} [u_i-u_j]$. We also take the transpose of the matrix $\mathcal{M}_N$. The formula (\ref{sNcalc-xxz}) is invariant under these transformations, and we obtain 
\begin{align}
&
S_N\Big(\{u\}_N,\{v\}_N,\{w\}_M\Big)
=
\frac{\displaystyle{
[\gamma]^N
\prod_{i=1}^{N} \prod_{j=1}^{M}
[v_i-w_j]
}}
{\displaystyle{
\prod_{1\leq i < j \leq N}
[u_j-u_i][v_i-v_j]
}}
\times
\label{betscal}
\\
&
\det \left(
\frac{\displaystyle{
\prod_{k\not= j}^{N}[v_k-u_i-\gamma]
\prod_{k=1}^{M} [u_i-w_k]
-
\prod_{k\not=j}^{N} [v_k-u_i+\gamma]
\prod_{k=1}^{M} [u_i-w_k+\gamma]
}}
{[u_i-v_j]}
\right)_{1\leq i,j \leq N}.
\nonumber
\end{align}
The expression (\ref{betscal}) for the Bethe scalar product was discovered by Slavnov in \cite{sla}. The original proof required a recursion relation between scalar products of dimension $N$ and $N-1$, which can be found in section 3, chapter IX of \cite{kbi}. 

\section{Trigonometric Felderhof model}
\label{ff-fel}

In this section we devote our attention to the trigonometric Felderhof model. Most of the material that we present originally appeared in \cite{fwz1}.

\subsection{Hamiltonian $H$}

The Hamiltonian of the XX spin-$\frac{1}{2}$ chain in an external magnetic field $h$ is given by
\begin{align}
H
=
\frac{1}{2}
\sum_{m=1}^{M}
\Big(
\sigma_m^{x} \sigma_{m+1}^{x}
+
\sigma_m^{y} \sigma_{m+1}^{y}
+
2h \sigma_m^{z}
\Big),
\label{ham-tf}
\end{align}
assuming the usual periodicity of the Pauli matrices. Similarly to the anisotropy constant in the XXZ model, we will find it convenient to parametrize $h = \cosh(2p)$, where $p$ is a variable to be introduced below.

\subsection{$R$-matrix and Yang-Baxter equation}

At the free fermion point $\gamma = \frac{\pi i}{2}$, the $R$-matrix (\ref{Rmat1}) may be generalized to include extra variables, in such a way that the Yang-Baxter equation remains satisfied. This leads to the trigonometric limit of the model introduced by Felderhof in \cite{fel2}, and accordingly we call it the {\it trigonometric Felderhof model.} This model was also studied in \cite{da1}, as the first in a hierarchy of vertex models with increasing spin. The $R$-matrix for the trigonometric Felderhof model is given by
\begin{align}
R_{ab}(u,p,v,q)
=
\label{Rmat-tf}
\left(
\begin{array}{cccc}
a_{+}(u,p,v,q) & 0 & 0 & 0
\\
0 & b_{+}(u,p,v,q) & c_{+}(u,p,v,q) & 0
\\
0 & c_{-}(u,p,v,q) & b_{-}(u,p,v,q) & 0
\\
0 & 0 & 0 & a_{-}(u,p,v,q)
\end{array}
\right)_{ab},
\end{align}
where we have defined the functions
\begin{align}
&
a_{\pm}(u,p,v,q) = [\pm(u-v)+p+q],
\label{a-tf}
\\
&
b_{\pm}(u,p,v,q) = [u-v\pm(q-p)],
\label{b-tf}
\\
&
c_{\pm}(u,p,v,q) = [2p]^{\frac{1}{2}} [2q]^{\frac{1}{2}},
\label{c-tf}
\end{align}
with $[u] = \sinh u$ as usual.\footnote{The parametrization of \cite{da1} is recovered by multiplying all weights by $e^{u-v+p+q}$ and setting $e^{2p} = \alpha, e^{2q} = \beta$.} The $R$-matrix is an element of ${\rm End}(V_a \otimes V_b)$, and the variables $u,v$ are rapidities associated to the respective vector spaces $V_a,V_b$. The new features of this $R$-matrix are the variables $p,q$. These are called {\it external fields}, and are associated to the respective vector spaces $V_a,V_b$. We recover the free fermion point of the six-vertex model by setting $p=q=\frac{\pi i}{4}$.

The entries of the $R$-matrix (\ref{Rmat-tf}) admit the same graphical representation as those of the $R$-matrix (\ref{Rmat1}). The only difference is that each vertex line now accommodates a rapidity variable and an external field. Hence we identify the functions (\ref{a-tf})--(\ref{c-tf}) with the vertices shown below.

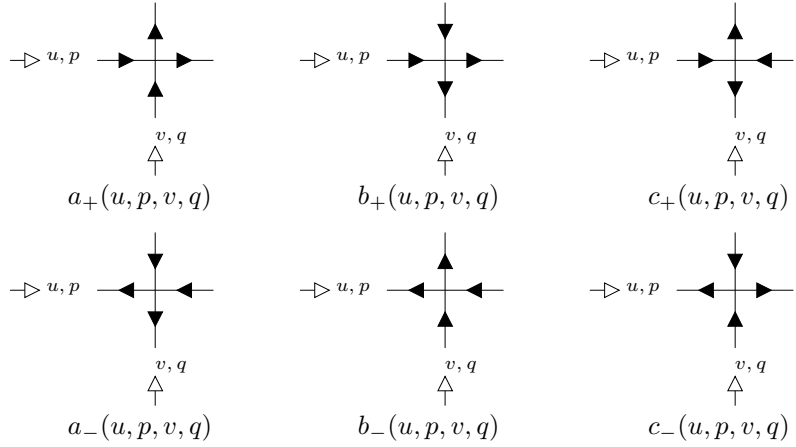
\begin{figure}[H]
\begin{center}
\begin{minipage}{4.3in}

\setlength{\unitlength}{0.00038cm}
\begin{picture}(20000,14000)(-4500,-12500)

\path(-2000,0000)(2000,0000)
\blacken\path(-1250,250)(-1250,-250)(-750,0)(-1250,250)
\blacken\path(750,250)(750,-250)(1250,0)(750,250)
\put(-3750,0){\scriptsize{$u,p$}}
\path(-5000,0)(-4000,0)
\whiten\path(-4500,250)(-4500,-250)(-4000,0)(-4500,250)
\path(0000,-2000)(0000,2000)
\blacken\path(-250,-1250)(250,-1250)(0,-750)(-250,-1250)
\blacken\path(-250,750)(250,750)(0,1250)(-250,750)
\put(-3000,-5000){$a_{+}(u,p,v,q)$}
\put(0,-2750){\scriptsize{$v,q$}}
\path(0,-4000)(0,-3000)
\whiten\path(-250,-3500)(250,-3500)(0,-3000)(-250,-3500)

\path(8000,0000)(12000,0000)
\blacken\path(8750,250)(8750,-250)(9250,0)(8750,250)
\blacken\path(10750,250)(10750,-250)(11250,0)(10750,250)
\put(6250,0){\scriptsize{$u,p$}}
\path(5000,0)(6000,0)
\whiten\path(5500,250)(5500,-250)(6000,0)(5500,250)
\path(10000,-2000)(10000,2000)
\blacken\path(9750,-750)(10250,-750)(10000,-1250)(9750,-750)
\blacken\path(9750,1250)(10250,1250)(10000,750)(9750,1250)
\put(7000,-5000){$b_{+}(u,p,v,q)$}
\put(10000,-2750){\scriptsize{$v,q$}}
\path(10000,-4000)(10000,-3000)
\whiten\path(9750,-3500)(10250,-3500)(10000,-3000)(9750,-3500)

\path(18000,0000)(22000,0000)
\blacken\path(18750,250)(18750,-250)(19250,0)(18750,250)
\blacken\path(21250,250)(21250,-250)(20750,0)(21250,250)
\put(16250,0){\scriptsize{$u,p$}}
\path(15000,0)(16000,0)
\whiten\path(15500,250)(15500,-250)(16000,0)(15500,250)
\path(20000,-2000)(20000,2000)
\blacken\path(19750,-750)(20250,-750)(20000,-1250)(19750,-750)
\blacken\path(19750,750)(20250,750)(20000,1250)(19750,750)
\put(17000,-5000){$c_{+}(u,p,v,q)$}
\put(20000,-2750){\scriptsize{$v,q$}}
\path(20000,-4000)(20000,-3000)
\whiten\path(19750,-3500)(20250,-3500)(20000,-3000)(19750,-3500)

\path(-2000,-8000)(2000,-8000)
\blacken\path(-750,-7750)(-750,-8250)(-1250,-8000)(-750,-7750)
\blacken\path(1250,-7750)(1250,-8250)(750,-8000)(1250,-7750)
\put(-3750,-8000){\scriptsize{$u,p$}}
\path(-5000,-8000)(-4000,-8000)
\whiten\path(-4500,-7750)(-4500,-8250)(-4000,-8000)(-4500,-7750)
\path(0000,-10000)(0000,-6000)
\blacken\path(-250,-8750)(250,-8750)(0,-9250)(-250,-8750)
\blacken\path(-250,-6750)(250,-6750)(0,-7250)(-250,-6750)
\put(-3000,-13000){$a_{-}(u,p,v,q)$}
\put(0,-10750){\scriptsize{$v,q$}}
\path(0,-12000)(0,-11000)
\whiten\path(-250,-11500)(250,-11500)(0,-11000)(-250,-11500)

\path(8000,-8000)(12000,-8000)
\blacken\path(9250,-7750)(9250,-8250)(8750,-8000)(9250,-7750)
\blacken\path(11250,-7750)(11250,-8250)(10750,-8000)(11250,-7750)
\put(6250,-8000){\scriptsize{$u,p$}}
\path(5000,-8000)(6000,-8000)
\whiten\path(5500,-7750)(5500,-8250)(6000,-8000)(5500,-7750)
\path(10000,-10000)(10000,-6000)
\blacken\path(9750,-9250)(10250,-9250)(10000,-8750)(9750,-9250)
\blacken\path(9750,-7250)(10250,-7250)(10000,-6750)(9750,-7250)
\put(7000,-13000){$b_{-}(u,p,v,q)$}
\put(10000,-10750){\scriptsize{$v,q$}}
\path(10000,-12000)(10000,-11000)
\whiten\path(9750,-11500)(10250,-11500)(10000,-11000)(9750,-11500)

\path(18000,-8000)(22000,-8000)
\blacken\path(19250,-7750)(19250,-8250)(18750,-8000)(19250,-7750)
\blacken\path(20750,-7750)(20750,-8250)(21250,-8000)(20750,-7750)
\put(16250,-8000){\scriptsize{$u,p$}}
\path(15000,-8000)(16000,-8000)
\whiten\path(15500,-7750)(15500,-8250)(16000,-8000)(15500,-7750)
\path(20000,-10000)(20000,-6000)
\blacken\path(19750,-9250)(20250,-9250)(20000,-8750)(19750,-9250)
\blacken\path(19750,-6750)(20250,-6750)(20000,-7250)(19750,-6750)
\put(17000,-13000){$c_{-}(u,p,v,q)$}
\put(20000,-10750){\scriptsize{$v,q$}}
\path(20000,-12000)(20000,-11000)
\whiten\path(19750,-11500)(20250,-11500)(20000,-11000)(19750,-11500)

\end{picture}

\end{minipage}
\end{center}

\caption[Six vertices of the trigonometric Felderhof model]{Six vertices of the trigonometric Felderhof model.}
\label{verttf} 
\end{figure}

With the following result we assert that the Yang-Baxter equation continues to hold, even in the presence of the external fields.

\begin{lemma}
{\rm 
The $R$-matrix (\ref{Rmat-tf}) obeys the Yang-Baxter equation
\begin{align}
R_{ab}(u,p,v,q)
R_{ac}(u,p,w,r)
R_{bc}(v,q,w,r)
=
\label{yb-tf}
R_{bc}(v,q,w,r)
R_{ac}(u,p,w,r)
R_{ab}(u,p,v,q).
\end{align}
This is an identity in ${\rm End}(V_a \otimes V_b \otimes V_c)$, true for all $u,v,w$ and $p,q,r$.
}
\end{lemma}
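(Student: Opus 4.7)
The plan is to exploit the ice-rule structure of the $R$-matrix to block-diagonalize the Yang-Baxter equation and reduce it to a small set of scalar identities in the Boltzmann weights, which are then verified using hyperbolic addition formulas together with the free-fermion condition enjoyed by the Felderhof weights.

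First I would observe that every vertex in figure \ref{verttf} preserves the number of up-pointing arrows, so $R_{ab}(u,p,v,q)$ commutes with $\sigma_a^z + \sigma_b^z$. Consequently both sides of (\ref{yb-tf}) preserve the total $z$-component of spin in $V_a \otimes V_b \otimes V_c$, and the equation decomposes into blocks of dimensions $1, 3, 3, 1$. The two one-dimensional sectors, spanned by $\uparrow_a \otimes \uparrow_b \otimes \uparrow_c$ and $\downarrow_a \otimes \downarrow_b \otimes \downarrow_c$, are trivial: both sides reduce to the product $a_{\pm}(u,p,v,q) a_{\pm}(u,p,w,r) a_{\pm}(v,q,w,r)$, and commutativity of scalar multiplication finishes the case.

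The real content lives in the two three-dimensional sectors. In the $S^z = +1$ sector, spanned by the three basis vectors with a single down-arrow, each side of (\ref{yb-tf}) is a $3 \times 3$ matrix whose entries are cubic expressions in the weights $a_\pm, b_\pm, c_\pm$ evaluated at the three pairs $(u,p,v,q)$, $(u,p,w,r)$, $(v,q,w,r)$. I would enumerate both sides graphically via the standard hexagon-move picture, summing over the internal arrow on each of the three edges common to the two triangles; the ice rule restricts the sum to at most two terms per matrix entry. An identical argument, obtained by the $\uparrow \leftrightarrow \downarrow$ symmetry of the weights, handles the $S^z = -1$ sector.

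The non-trivial scalar identities that emerge are all consequences of the free-fermion condition
\begin{align*}
a_+(u,p,v,q) a_-(u,p,v,q) + b_+(u,p,v,q) b_-(u,p,v,q) = c_+(u,p,v,q) c_-(u,p,v,q),
\end{align*}
which for the Felderhof weights (\ref{a-tf})--(\ref{c-tf}) follows from the telescoping computation
\begin{align*}
[p{+}q{+}(u{-}v)][p{+}q{-}(u{-}v)] + [(u{-}v){+}(q{-}p)][(u{-}v){-}(q{-}p)]
= [p{+}q]^2 - [q{-}p]^2 = [2p][2q],
\end{align*}
combined with the addition formula $[x+y] = [x]\cosh y + \cosh x\, [y]$ applied to rearrange differences of rapidities and external fields. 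The main obstacle will be organizational rather than conceptual: the weights depend on three pairs $(u,p),(v,q),(w,r)$ in an asymmetric way that mixes rapidity differences with external-field sums, and one must match each vertex in the hexagon with the correct argument ordering. Once the bookkeeping is carried out carefully for one off-diagonal entry in the $S^z = +1$ block, the remaining scalar identities are all of the same shape and follow from the same two ingredients, completing the verification of (\ref{yb-tf}).
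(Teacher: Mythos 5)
The paper states this lemma without proof (only its graphical form, figure \ref{ybtf}, is displayed), so your proposal has to stand on its own. The skeleton is sound: conservation of arrow number block-diagonalizes (\ref{yb-tf}) into sectors of dimensions $1,3,3,1$, the two singlet sectors are trivial, and what remains is a finite list of trilinear identities among the weights of the three vertices. The gap is in your identification of what closes those identities. The free-fermion relation $a_+a_-+b_+b_-=c_+c_-$ is a \emph{bilinear} identity in the weights of a \emph{single} vertex, whereas the component identities of (\ref{yb-tf}) are \emph{trilinear} and mix the three vertices $(u,p,v,q)$, $(u,p,w,r)$, $(v,q,w,r)$; the former cannot imply the latter. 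Concretely, the $(\downarrow\uparrow\uparrow)\leftarrow(\uparrow\uparrow\downarrow)$ entry of the one-down-arrow block reads
\[
b_-^{(1)}\,c_-^{(2)}\,b_+^{(3)} \;+\; c_-^{(1)}\,a_+^{(2)}\,c_-^{(3)} \;=\; a_+^{(1)}\,c_-^{(2)}\,a_+^{(3)},
\]
where the superscripts label the three vertices in the order above. Using the multiplicative splitting $c^{(1)}c^{(3)}=[2q]\,c^{(2)}$ (this is where the square-root form (\ref{c-tf}) of the $c$-weights actually enters) and setting $A=u-v+p+q$, $B=v-w+q+r$, this becomes
\[
[A-2q]\,[B-2q] \;+\; [2q]\,[A+B-2q] \;=\; [A]\,[B],
\]
a universal hyperbolic identity verified by product-to-sum, valid for all $A,B,q$ and making no use of the free-fermion condition. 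All the non-trivial components reduce, after pulling out $c$-weights via $c^{(1)}c^{(2)}=[2p]c^{(3)}$, $c^{(1)}c^{(3)}=[2q]c^{(2)}$, $c^{(2)}c^{(3)}=[2r]c^{(1)}$, to three-term identities of exactly this shape. So the proof you need consists of exhibiting and checking these, not of establishing $a_+a_-+b_+b_-=c_+c_-$ (which is a consequence of the parametrization, relevant only to the converse question of which weights admit such a field-dependent solution).

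A second, smaller defect: the $S^z=-1$ block does not follow from the $S^z=+1$ block by ``the $\uparrow\leftrightarrow\downarrow$ symmetry of the weights,'' because the weights are not symmetric under that flip ($a_+\neq a_-$, $b_+\neq b_-$ in general). Conjugating both sides of (\ref{yb-tf}) by $\sigma^x_a\sigma^x_b\sigma^x_c$ exchanges the $\pm$ labels of every weight, and this exchange is not realized by a substitution of the original parameters $(u,p,v,q,w,r)$. You should either supply the substitution that does realize it or simply check the second block directly; doing so, one finds it reduces to the same three-term identity as above, so the conclusion survives, but the shortcut as stated is not justified.
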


\begin{figure}[H]
\begin{center}
\begin{minipage}{4.3in}

\setlength{\unitlength}{0.00038cm}
\begin{picture}(20000,9500)(-2000,-3500)

\path(0,0)(5000,5000)
\put(1250,1250){\circle*{200}}
\put(1250,650){\tiny{$i_b$}}
\put(5000,5000){\circle*{200}}
\put(5000,4400){\tiny{$k_b$}}
\put(8750,5000){\circle*{200}}
\put(8750,4400){\tiny{$j_b$}}
\path(-3000,0)(-2000,0)
\whiten\path(-2250,250)(-2250,-250)(-2000,0)(-2250,250)
\put(-1750,0){\scriptsize{$v,q$}}
\path(5000,0)(0,5000)
\put(1250,3750){\circle*{200}}
\put(1250,4000){\tiny{$i_a$}}
\put(5000,0){\circle*{200}}
\put(5000,-600){\tiny{$k_a$}}
\put(8750,0){\circle*{200}}
\put(8750,-600){\tiny{$j_a$}}
\path(-3000,5000)(-2000,5000)
\whiten\path(-2250,5250)(-2250,4750)(-2000,5000)(-2250,5250)
\put(-1750,5000){\scriptsize{$u,p$}}
\path(5000,5000)(10000,5000)
\path(5000,0)(10000,0)
\path(7500,-2500)(7500,7500)
\put(7500,-1250){\circle*{200}}
\put(7750,-1250){\tiny{$i_c$}}
\put(7500,2500){\circle*{200}}
\put(7750,2500){\tiny{$k_c$}}
\put(7500,6250){\circle*{200}}
\put(7750,6250){\tiny{$j_c$}}
\put(7500,-3250){\scriptsize{$w,r$}}
\path(7500,-4500)(7500,-3500)
\whiten\path(7250,-3750)(7750,-3750)(7500,-3500)(7250,-3750)


\put(11000,2300){$=$}

\path(12000,5000)(13000,5000)
\whiten\path(12750,5250)(12750,4750)(13000,5000)(12750,5250)
\put(13250,5000){\scriptsize{$u,p$}}
\path(15000,5000)(20000,5000)
\put(16250,5000){\circle*{200}}
\put(16250,4400){\tiny{$i_a$}}
\path(20000,5000)(25000,0)
\put(20000,5000){\circle*{200}}
\put(20000,5250){\tiny{$k_a$}}
\put(23750,1250){\circle*{200}}
\put(23750,1600){\tiny{$j_a$}}
\path(12000,0)(13000,0)
\whiten\path(12750,250)(12750,-250)(13000,0)(12750,250)
\put(13250,0){\scriptsize{$v,q$}}
\path(15000,0)(20000,0)
\put(16250,0){\circle*{200}}
\put(16250,-600){\tiny{$i_b$}}
\path(20000,0)(25000,5000)
\put(20000,0){\circle*{200}}
\put(20000,-600){\tiny{$k_b$}}
\put(23750,3750){\circle*{200}}
\put(23750,3250){\tiny{$j_b$}}
\path(17500,-4500)(17500,-3500)
\whiten\path(17250,-3750)(17750,-3750)(17500,-3500)(17250,-3750)
\put(17500,-3250){\scriptsize{$w,r$}}
\path(17500,-2500)(17500,7500)
\put(17500,-1250){\circle*{200}}
\put(17750,-1250){\tiny{$i_c$}}
\put(17500,2500){\circle*{200}}
\put(17750,2500){\tiny{$k_c$}}
\put(17500,6250){\circle*{200}}
\put(17750,6250){\tiny{$j_c$}}

\end{picture}

\end{minipage}
\end{center}

\caption[Yang-Baxter equation for the trigonometric Felderhof model]{Yang-Baxter equation for the trigonometric Felderhof model. Each index $i_a,i_b,i_c,j_a,j_b,j_c$ represents a black arrow that is fixed on both sides of the equation, while the indices $k_a,k_b,k_c$ are summed over all possible arrow configurations.}
\label{ybtf}
\end{figure}
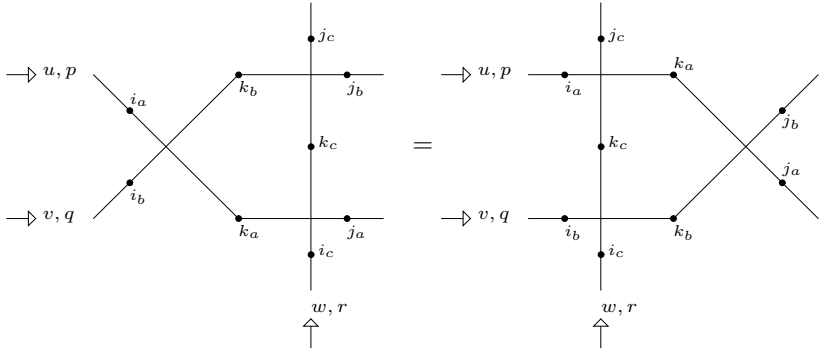

\subsection{Monodromy matrix and intertwining equation}

The monodromy matrix is an ordered product of $R$-matrices, given by
\begin{align}
T_a(u,p,\{w,r\}_M)
=
R_{a1}(u,p,w_1,r_1)
\ldots
R_{aM}(u,p,w_M,r_M),
\end{align}
with the multiplication taken in the space ${\rm End}(V_a)$. We write the contribution from the space ${\rm End}(V_a)$ explicitly, by defining
\begin{align}
T_a(u,p,\{w,r\}_M)
=
\left(
\begin{array}{cc}
A(u,p,\{w,r\}_M) & B(u,p,\{w,r\}_M)
\\
C(u,p,\{w,r\}_M) & D(u,p,\{w,r\}_M)
\end{array}
\right)_a,
\label{tf-mon}
\end{align}
where the matrix entries are all operators acting in $V_1 \otimes \cdots \otimes V_M$. For a graphical representation of these operators, we refer the reader to figure \ref{6vT}. The correspondence is exactly the same, except that the rapidity $u$ is now accompanied by the external field $p$, and each $w_j$ by an $r_j$. Due to the Yang-Baxter equation (\ref{yb-tf}), we obtain the intertwining equation
\begin{align}
&
R_{ab}(u,p,v,q) T_a(u,p,\{w,r\}_M) T_b(v,q,\{w,r\}_M)
\label{tf-int}
=
\\
&
T_b(v,q,\{w,r\}_M) T_a(u,p,\{w,r\}_M) R_{ab}(u,p,v,q).
\nonumber
\end{align}
As usual, this leads to sixteen commutation relations amongst the entries of the monodromy matrix (\ref{tf-mon}). Of these commutation relations, two have particular significance in our later calculations. They are given by
\begin{align}
& [u-v+p+q] B(u,p,\{w,r\}_M) B(v,q,\{w,r\}_M) 
=
\label{bb-tf}
\\
& [v-u+p+q] B(v,q,\{w,r\}_M)B(u,p,\{w,r\}_M),
\nonumber
\\
\nonumber
\\
& [v-u+p+q] C(u,p,\{w,r\}_M) C(v,q,\{w,r\}_M)
=
\label{cc-tf}
\\
& [u-v+p+q] C(v,q,\{w,r\}_M)C(u,p,\{w,r\}_M).
\nonumber
\end{align}

\subsection{Recovering $H$ from the transfer matrix}

Let 
\begin{align}
t(u,p,\{w,r\}_M)
=
A(u,p,\{w,r\}_M)
+
D(u,p,\{w,r\}_M)
\end{align}
denote the transfer matrix of the trigonometric Felderhof model. The Hamiltonian (\ref{ham-tf}) is recovered via the formula
\begin{align}
H
=
[2p]
\frac{\partial}{\partial u} \log t(u,p)
\Big|_{u=0},
\quad{\rm where}\quad
t(u,p) 
=
t(u,p,\{w,r\}_M)
\Big|_{\substack{w_1=\cdots = w_M = 0 \\ r_1 = \cdots = r_M = p}}.
\end{align}
Therefore all eigenvectors of $t(u,p,\{w,r\}_M)$ are also eigenvectors of $H$. Once again, the problem of diagonalizing $H$ is reduced to finding vectors $|\Psi\rangle \in V$ satisfying
\begin{align}
\Big(
A(u,p,\{w,r\}_M) + D(u,p,\{w,r\}_M)
\Big)
|\Psi\rangle
=
\tau_{\Psi}(u,p,\{w,r\}_M) 
|\Psi\rangle
\end{align}
for some constants $\tau_{\Psi}(u,p,\{w,r\}_M)$.

\subsection{Bethe Ansatz for the eigenvectors}

We construct the Bethe eigenvectors of the trigonometric Felderhof model in an analogous fashion to in \S \ref{s-betheans}. The eigenvectors $|\Psi\rangle \in V$ and $\langle \Psi | \in V^{*}$ of $t(u,p,\{w,r\}_M)$ are given by the Ans\"atze
\begin{align}
|\Psi\rangle
&=
B(v_1,q_1,\{w,r\}_M)
\ldots
B(v_N,q_N,\{w,r\}_M)
|\Uparrow_M\rangle,
\label{tf-bethe3}
\\
\langle \Psi |
&=
\langle \Uparrow_M |
C(v_N,q_N,\{w,r\}_M)
\ldots
C(v_1,q_1,\{w,r\}_M),
\label{tf-bethe4}
\end{align}
with $N \leq M$ in both cases. To show that these are genuine eigenvectors, it is necessary to act on them with $t(u,p,\{w,r\}_M)$ using the commutation relations contained in (\ref{tf-int}). We find that (\ref{tf-bethe3}) and (\ref{tf-bethe4}) are eigenstates of the transfer matrix if and only if 
\begin{align}
(-)^N
\prod_{k=1}^{M}
[v_j+q_j-w_k+r_k]
+
\prod_{k=1}^{M}
[v_j+q_j-w_k-r_k]
=
0
\label{tf-bethe2}
\end{align}
for all $1\leq j \leq N$. The identities (\ref{tf-bethe2}) constitute the Bethe equations for the trigonometric Felderhof model and in contrast to those of the XXZ chain (\ref{bethe2}), the dependence on $\{v_1,\ldots, v_N\}$ is decoupled. This is a consequence of the free fermionic nature of the model being studied. Hence we may interpret $\{v_1+q_1,\ldots,v_N+q_N\}$ as being roots, in $x$, of the single equation
\begin{align}
(-)^N
\prod_{k=1}^{M}
[x-w_k+r_k]
+
\prod_{k=1}^{M}
[x-w_k-r_k]
=
0.
\label{tf-bethe5}
\end{align}

\section{Domain wall partition function $Z_N\Big(\{v,q\}_N,\{w,r\}_N\Big)$}

\subsection{Definition of $Z_N(\{v,q\}_N,\{w,r\}_N)$}

The domain wall partition function of the trigonometric Felderhof model has the algebraic definition
\begin{align}
Z_N\Big(
\{v,q\}_N,\{w,r\}_N
\Big)
=
\langle \Downarrow_N|
\lprod_{j=1}^{N}
B(v_j,q_j,\{w,r\}_N)
|\Uparrow_N\rangle.
\label{pf-tf}
\end{align}
This naturally extends the definition of the domain wall partition function (\ref{pf-xxz}) to a model containing external fields. Notice that we must define an ordering of the $B$-operators in (\ref{pf-tf}), since by (\ref{bb-tf}) they do not commute. 

Similarly to before, we represent the domain wall partition function by an $N \times N$ lattice, as shown in figure \ref{da-part}.

\begin{figure}[H]

\begin{center}
\begin{minipage}{4.3in}

\setlength{\unitlength}{0.00038cm}
\begin{picture}(20000,12500)(-10000,-3000)

\path(-2000,0)(10000,0)
\put(-4250,0){\tiny$v_N,q_N$}
\path(-5500,0)(-4500,0)
\whiten\path(-4750,250)(-4750,-250)(-4500,0)(-4750,250)
\blacken\path(-1250,250)(-1250,-250)(-750,0)(-1250,250)
\blacken\path(9250,250)(9250,-250)(8750,0)(9250,250)

\path(-2000,2000)(10000,2000)
\path(-5500,2000)(-4500,2000)
\whiten\path(-4750,2250)(-4750,1750)(-4500,2000)(-4750,2250)
\blacken\path(-1250,2250)(-1250,1750)(-750,2000)(-1250,2250)
\blacken\path(9250,2250)(9250,1750)(8750,2000)(9250,2250)

\path(-2000,4000)(10000,4000)
\path(-5500,4000)(-4500,4000)
\whiten\path(-4750,4250)(-4750,3750)(-4500,4000)(-4750,4250)
\blacken\path(-1250,4250)(-1250,3750)(-750,4000)(-1250,4250)
\blacken\path(9250,4250)(9250,3750)(8750,4000)(9250,4250)

\path(-2000,6000)(10000,6000)
\path(-5500,6000)(-4500,6000)
\whiten\path(-4750,6250)(-4750,5750)(-4500,6000)(-4750,6250)
\blacken\path(-1250,6250)(-1250,5750)(-750,6000)(-1250,6250)
\blacken\path(9250,6250)(9250,5750)(8750,6000)(9250,6250)

\path(-2000,8000)(10000,8000)
\put(-4250,8000){\tiny$v_1,q_1$}
\path(-5500,8000)(-4500,8000)
\whiten\path(-4750,8250)(-4750,7750)(-4500,8000)(-4750,8250)
\blacken\path(-1250,8250)(-1250,7750)(-750,8000)(-1250,8250)
\blacken\path(9250,8250)(9250,7750)(8750,8000)(9250,8250)


\path(0,-2000)(0,10000)
\put(-500,-2750){\tiny$w_1,r_1$}
\path(0,-4000)(0,-3000)
\whiten\path(-250,-3250)(250,-3250)(0,-3000)(-250,-3250)
\blacken\path(-250,-750)(250,-750)(0,-1250)(-250,-750)
\blacken\path(-250,8750)(250,8750)(0,9250)(-250,8750)

\path(2000,-2000)(2000,10000)
\path(2000,-4000)(2000,-3000)
\whiten\path(1750,-3250)(2250,-3250)(2000,-3000)(1750,-3250)
\blacken\path(1750,-750)(2250,-750)(2000,-1250)(1750,-750)
\blacken\path(1750,8750)(2250,8750)(2000,9250)(1750,8750)

\path(4000,-2000)(4000,10000)
\path(4000,-4000)(4000,-3000)
\whiten\path(3750,-3250)(4250,-3250)(4000,-3000)(3750,-3250)
\blacken\path(3750,-750)(4250,-750)(4000,-1250)(3750,-750)
\blacken\path(3750,8750)(4250,8750)(4000,9250)(3750,8750)

\path(6000,-2000)(6000,10000)
\path(6000,-4000)(6000,-3000)
\whiten\path(5750,-3250)(6250,-3250)(6000,-3000)(5750,-3250)
\blacken\path(5750,-750)(6250,-750)(6000,-1250)(5750,-750)
\blacken\path(5750,8750)(6250,8750)(6000,9250)(5750,8750)

\path(8000,-2000)(8000,10000)
\put(7500,-2750){\tiny$w_N,r_N$}
\path(8000,-4000)(8000,-3000)
\whiten\path(7750,-3250)(8250,-3250)(8000,-3000)(7750,-3250)
\blacken\path(7750,-750)(8250,-750)(8000,-1250)(7750,-750)
\blacken\path(7750,8750)(8250,8750)(8000,9250)(7750,8750)

\end{picture}

\end{minipage}
\end{center}

\caption[Domain wall partition function of the trigonometric Felderhof model]{Domain wall partition function of the trigonometric Felderhof model. The top row of arrows corresponds with the state vector $|\Uparrow_N\rangle$. The bottom row of arrows corresponds with the dual state vector $\langle \Downarrow_N|$. Each horizontal lattice line corresponds to multiplication by a $B(v_j,q_j,\{w,r\}_N)$ operator. Notice that the ordering of these lattice lines respects the ordering of $B$-operators defined in (\ref{pf-tf}).} 

\label{da-part}
\end{figure}
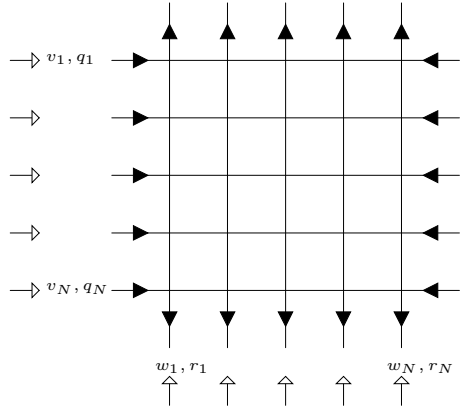

\subsection{Conditions on $Z_N( \{v,q\}_N, \{w,r\}_N )$}

We progress towards calculating the domain wall partition function (\ref{pf-tf}). The procedure begins with the following result from \cite{fwz1}, which establishes a set of Korepin-type conditions on $Z_N(\{v,q\}_N,\{w,r\}_N)$. 

\begin{lemma}
{\rm We adopt the shorthand $Z_N = Z_N(\{v,q\}_N,\{w,r\}_N)$. For all $N \geq 2$ we claim that

\setcounter{conditions}{0}
\begin{conditions}
{\rm 
$Z_N$ is a trigonometric polynomial of degree $N-1$ in the rapidity variable $v_N$.
}
\end{conditions}

\begin{conditions}
{\rm
$Z_N$ has zeros at the points $v_N = v_j+q_j+q_N$, for all $1 \leq j \leq N-1$.
}
\end{conditions}

\begin{conditions}
{\rm 
Setting $v_N=w_N+q_N+r_N$, $Z_N$ satisfies the recursion relation
\begin{align}
Z_N \Big|_{v_N = w_N+q_N+r_N}
=
[2q_N]^{\frac{1}{2}} [2r_N]^{\frac{1}{2}}
\label{pfrec1-tf}
\prod_{j=1}^{N-1}
[w_N-w_j+r_j+r_N]
[v_j-w_N+q_j-r_N]
Z_{N-1},
\end{align}
where $Z_{N-1}$ is the domain wall partition function on a square lattice of size $N-1$.
}
\end{conditions}

In addition, we have the supplementary condition

\begin{conditions}
{\rm 
The partition function on the $1\times 1$ lattice is given by
$
Z_1
=
[2q_1]^{\frac{1}{2}}
[2r_1]^{\frac{1}{2}}
$.
}
\end{conditions}
}
\end{lemma}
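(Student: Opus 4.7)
The plan is to parallel the four-condition analysis of lemma 6, adapting each step to the presence of the external fields $\{q_j,r_j\}$ in the Felderhof model. Conditions \textbf{1} and \textbf{4} reduce to a one-row and a single-vertex lattice computation respectively, condition \textbf{2} follows from the $BB$ commutation relation (\ref{bb-tf}) evaluated at a coincident point, and condition \textbf{3} requires a freezing argument analogous to the proof of (\ref{recursion}) but now driven by the vanishing of a single $a_{-}$ weight at the boundary of the lattice.

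For condition \textbf{1}, I would insert the resolution of identity $\sum_{n=1}^{N}\sigma_n^{+}|\Downarrow_N\rangle\langle\Downarrow_N|\sigma_n^{-}$ directly to the right of the leftmost operator $B(v_N,q_N,\{w,r\}_N)$ in (\ref{pf-tf}), reducing $Z_N$ to a sum of $N$ terms in which all dependence on $v_N$ is confined to the one-row amplitude $\langle\Downarrow_N|B(v_N,q_N)\sigma_n^{+}|\Downarrow_N\rangle$. Under the induced boundary conditions, this amplitude has a unique surviving lattice configuration: $b_{+}$-vertices at columns $k<n$, a single $c_{+}$-vertex at column $n$, and $a_{-}$-vertices at columns $k>n$. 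Reading off the weights from (\ref{a-tf})--(\ref{c-tf}) shows that each term is a product of $N-1$ trigonometric polynomials of degree one in $v_N$, proving the claimed degree bound for $Z_N$.

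For condition \textbf{2}, the key observation is that specialising $v_N = v_j+q_j+q_N$ inside (\ref{bb-tf}) yields $[2(q_j+q_N)]\,B(v_N,q_N)B(v_j,q_j) = 0$, hence $B(v_N,q_N)B(v_j,q_j) = 0$ as an operator identity. To apply this inside (\ref{pf-tf}), I would commute $B(v_N,q_N)$ past each of $B(v_{N-1},q_{N-1}),\ldots,B(v_{j+1},q_{j+1})$ using (\ref{bb-tf}), collecting scalar ratios whose denominators $[v_N-v_k+q_N+q_k]$ remain nonzero at the specialisation. The rearranged product contains the vanishing factor $B(v_N,q_N)B(v_j,q_j)$ and therefore equals zero.

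Condition \textbf{3} is the principal technical step and is where I expect the main obstacle to lie. Setting $v_N = w_N+q_N+r_N$ makes $a_{-}(v_N,q_N,w_N,r_N) = [0]$ vanish. The bottom-right vertex of the lattice in figure \ref{da-part} admits, under the DWBC, only the vertex types $a_{-}$ and $c_{+}$; since the former now vanishes, this vertex is forced to be $c_{+}$ with weight $[2q_N]^{\frac{1}{2}}[2r_N]^{\frac{1}{2}}$. The $c_{+}$ configuration pins the horizontal edge to its left to be right-pointing and the vertical edge above it to be up-pointing, and propagating these constraints successively leftward along the bottom row forces every remaining vertex of that row to be $b_{+}$ (with combined contribution $\prod_{j=1}^{N-1}[w_N-w_j+r_j+r_N]$), while propagating them upward along the rightmost column forces every remaining vertex of that column to be $b_{-}$ (with combined contribution $\prod_{j=1}^{N-1}[v_j-w_N+q_j-r_N]$). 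The remaining top-left $(N-1)\times(N-1)$ sub-lattice inherits DWBC and represents $Z_{N-1}$, producing the recursion (\ref{pfrec1-tf}) exactly. Condition \textbf{4} is then immediate: the $N=1$ lattice consists of a single $c_{+}$ vertex of weight $[2q_1]^{\frac{1}{2}}[2r_1]^{\frac{1}{2}}$.
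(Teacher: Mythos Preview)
Your proposal is correct and follows essentially the same strategy as the paper: conditions \textbf{1} and \textbf{4} are treated identically, and condition \textbf{2} is the same $BB$-commutation argument (the paper commutes $B(v_N,q_N)$ through all $N-1$ operators and reads the zeros from the global prefactor in (\ref{reorder}), whereas you commute only as far as $B(v_j,q_j)$ and invoke the pointwise vanishing $B(v_N,q_N)B(v_j,q_j)=0$; both are the same mechanism). For condition \textbf{3} the paper proceeds in two stages --- first collapsing the row expansion (\ref{pfexp-tf}) to the $n=N$ term, then separately freezing the rightmost column --- while your corner-freezing argument achieves the same outcome in a single pass; the content is identical.
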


\begin{proof}

\setcounter{conditions}{0}
\begin{conditions}
{\rm 
By inserting the set of states $\sum_{n=1}^{N} \sigma_n^{+} |\Downarrow_N\rangle \langle \Downarrow_N| \sigma_n^{-}$ after the first $B$-operator appearing in (\ref{pf-tf}), we obtain the expansion
\begin{align}
Z_N\Big(\{v,q\}_N,\{w,r\}_N\Big)
&=
\sum_{n=1}^{N}
\langle \Downarrow_N|
B(v_N,q_N,\{w,r\}_N)
\sigma_n^{+}
|\Downarrow_N\rangle
\label{tf-peel2}
\\
&
\times
\langle \Downarrow_N|
\sigma_n^{-}
\lprod_{j=1}^{N-1}
B(v_j,q_j,\{w,r\}_N)
|\Uparrow_N\rangle,
\nonumber
\end{align}
in which all dependence on $v_N$ appears in the first factor within the sum. Hence we shall calculate $\langle \Downarrow_N| B(v_N,q_N,\{w,r\}_N) \sigma_n^{+} |\Downarrow_N\rangle$ for all $1\leq n \leq N$, as shown below:

\begin{figure}[H]

\begin{center}
\begin{minipage}{4.3in}

\setlength{\unitlength}{0.00035cm}
\begin{picture}(20000,6000)(-2500,-3000)

\path(-2000,0)(10000,0)
\put(-3250,400){\tiny$v_N$}
\put(-3250,-400){\tiny$q_N$}
\path(-4500,0)(-3500,0)
\whiten\path(-3750,250)(-3750,-250)(-3500,0)(-3750,250)
\blacken\path(-1250,250)(-1250,-250)(-750,0)(-1250,250)
\blacken\path(9250,250)(9250,-250)(8750,0)(9250,250)


\path(0,-2000)(0,2000)
\put(-1000,-2750){\tiny$w_1,r_1$}
\path(0,-4200)(0,-3200)
\whiten\path(-250,-3450)(250,-3450)(0,-3200)(-250,-3450)
\blacken\path(-250,-750)(250,-750)(0,-1250)(-250,-750)
\blacken\path(-250,1250)(250,1250)(0,750)(-250,1250)

\path(2000,-2000)(2000,2000)
\path(2000,-4200)(2000,-3200)
\whiten\path(1750,-3450)(2250,-3450)(2000,-3200)(1750,-3450)
\blacken\path(1750,-750)(2250,-750)(2000,-1250)(1750,-750)
\blacken\path(1750,1250)(2250,1250)(2000,750)(1750,1250)

\path(4000,-2000)(4000,2000)
\put(3000,-2750){\tiny$w_n,r_n$}
\path(4000,-4200)(4000,-3200)
\whiten\path(3750,-3450)(4250,-3450)(4000,-3200)(3750,-3450)
\blacken\path(3750,-750)(4250,-750)(4000,-1250)(3750,-750)
\blacken\path(3750,750)(4250,750)(4000,1250)(3750,750)

\path(6000,-2000)(6000,2000)
\path(6000,-4200)(6000,-3200)
\whiten\path(5750,-3450)(6250,-3450)(6000,-3200)(5750,-3450)
\blacken\path(5750,-750)(6250,-750)(6000,-1250)(5750,-750)
\blacken\path(5750,1250)(6250,1250)(6000,750)(5750,1250)

\path(8000,-2000)(8000,2000)
\put(7000,-2750){\tiny$w_N,r_N$}
\path(8000,-4200)(8000,-3200)
\whiten\path(7750,-3450)(8250,-3450)(8000,-3200)(7750,-3450)
\blacken\path(7750,-750)(8250,-750)(8000,-1250)(7750,-750)
\blacken\path(7750,1250)(8250,1250)(8000,750)(7750,1250)


\put(12000,-250){$=$}


\path(18000,0)(30000,0)
\put(16750,400){\tiny$v_N$}
\put(16750,-400){\tiny$q_N$}
\path(15500,0)(16500,0)
\whiten\path(16250,250)(16250,-250)(16500,0)(16250,250)
\blacken\path(18750,250)(18750,-250)(19250,0)(18750,250)
\blacken\path(20750,250)(20750,-250)(21250,0)(20750,250)
\blacken\path(22750,250)(22750,-250)(23250,0)(22750,250)
\blacken\path(25250,250)(25250,-250)(24750,0)(25250,250)
\blacken\path(27250,250)(27250,-250)(26750,0)(27250,250)
\blacken\path(29250,250)(29250,-250)(28750,0)(29250,250)


\path(20000,-2000)(20000,2000)
\put(19000,-2750){\tiny$w_1,r_1$}
\path(20000,-4200)(20000,-3200)
\whiten\path(19750,-3450)(20250,-3450)(20000,-3200)(19750,-3450)
\blacken\path(19750,-750)(20250,-750)(20000,-1250)(19750,-750)
\blacken\path(19750,1250)(20250,1250)(20000,750)(19750,1250)

\path(22000,-2000)(22000,2000)
\path(22000,-4200)(22000,-3200)
\whiten\path(21750,-3450)(22250,-3450)(22000,-3200)(21750,-3450)
\blacken\path(21750,-750)(22250,-750)(22000,-1250)(21750,-750)
\blacken\path(21750,1250)(22250,1250)(22000,750)(21750,1250)

\path(24000,-2000)(24000,2000)
\put(23000,-2750){\tiny$w_n,r_n$}
\path(24000,-4200)(24000,-3200)
\whiten\path(23750,-3450)(24250,-3450)(24000,-3200)(23750,-3450)
\blacken\path(23750,-750)(24250,-750)(24000,-1250)(23750,-750)
\blacken\path(23750,750)(24250,750)(24000,1250)(23750,750)

\path(26000,-2000)(26000,2000)
\path(26000,-4200)(26000,-3200)
\whiten\path(25750,-3450)(26250,-3450)(26000,-3200)(25750,-3450)
\blacken\path(25750,-750)(26250,-750)(26000,-1250)(25750,-750)
\blacken\path(25750,1250)(26250,1250)(26000,750)(25750,1250)

\path(28000,-2000)(28000,2000)
\put(27000,-2750){\tiny$w_N,r_N$}
\path(28000,-4200)(28000,-3200)
\whiten\path(27750,-3450)(28250,-3450)(28000,-3200)(27750,-3450)
\blacken\path(27750,-750)(28250,-750)(28000,-1250)(27750,-750)
\blacken\path(27750,1250)(28250,1250)(28000,750)(27750,1250)

\end{picture}

\end{minipage}
\end{center}

\caption[Peeling away the bottom row of the trigonometric Felderhof partition function]{Peeling away the bottom row of the trigonometric Felderhof partition function. The diagram on the left hand side represents the quantity $\langle \Downarrow_N |B(v_N,q_N,\{w,r\}_N) \sigma_n^{+} |\Downarrow_N \rangle$, with the internal black arrows being summed over all configurations. The diagram on the right represents the only surviving configuration.}

\label{tf-zequiv3}
\end{figure}

The right hand side of figure \ref{tf-zequiv3} represents a product of vertices. Replacing each vertex with its corresponding trigonometric weight (see figure 14), we have
\begin{align}
&
\langle \Downarrow_N| B(v_N,q_N,\{w,r\}_N)
\sigma_n^{+} |\Downarrow_N\rangle
=
\label{tf-peel}
\\
&
[2q_N]^{\frac{1}{2}} [2r_n]^{\frac{1}{2}}
\prod_{1 \leq j < n} [v_N-w_j+r_j-q_N]
\prod_{n < j \leq N} [w_j-v_N+q_N+r_j].
\nonumber
\end{align}
Substituting (\ref{tf-peel}) into the expansion (\ref{tf-peel2}) gives
\begin{align}
Z_N
&=
\sum_{n=1}^{N}
[2q_N]^{\frac{1}{2}} [2r_n]^{\frac{1}{2}}
\prod_{1\leq j <n} [v_N-w_j+r_j-q_N]
\prod_{n < j \leq N} [w_j-v_N+q_N+r_j]
\label{pfexp-tf}
\\
&
\times
\langle \Downarrow_N| \sigma_n^{-}
\lprod_{j=1}^{N-1} B(v_j,q_j,\{w,r\}_N)
|\Uparrow_N\rangle.
\nonumber
\end{align}
From (\ref{pfexp-tf}) we see that every term in $Z_N(\{v,q\}_N,\{w,r\}_N)$ contains a product of $N-1$ trigonometric functions with argument $v_N$. Thus $Z_N(\{v,q\}_N,\{w,r\}_N)$ is a trigonometric polynomial of degree $N-1$ in the variable $v_N$.
}
\end{conditions}

\begin{conditions}
{\rm 
We multiply the partition function (\ref{pf-tf}) by $\prod_{j=1}^{N-1}[v_N-v_j+q_j+q_N]$ and repeatedly use the commutation relation
\begin{align}
&
[v_N-v_j+q_j+q_N]
B(v_N,q_N,\{w,r\}_N)
B(v_j,q_j,\{w,r\}_N)
=
\\
&
[v_j-v_N+q_j+q_N]
B(v_j,q_j,\{w,r\}_N)
B(v_N,q_N,\{w,r\}_N),
\nonumber
\end{align}
which is a rewriting of (\ref{bb-tf}), to change the order of the $B$-operators. We obtain 
\begin{align}
&
\prod_{j=1}^{N-1}
[v_N-v_j+q_j+q_N]
Z_N\Big( \{v,q\}_N,\{w,r\}_N\Big)
=
\label{reorder}
\\
&
\prod_{j=1}^{N-1}
[v_j-v_N+q_j+q_N]
\langle \Downarrow_N|
\lprod_{j=1}^{N-1} B(v_j,q_j,\{w,r\}_N)
B(v_N,q_N,\{w,r\}_N)
|\Uparrow_N\rangle.
\nonumber
\end{align}
Graphically, we depict (\ref{reorder}) with the following diagrams:

\begin{figure}[H]

\begin{center}
\begin{minipage}{4.3in}

\setlength{\unitlength}{0.00030cm}
\begin{picture}(20000,14000)(-3000,-4000)

\path(-2000,0)(10000,0)
\put(-3250,350){\scriptsize{$v_N$}}
\put(-3250,-350){\scriptsize{$q_N$}}
\path(-4750,0)(-3750,0)
\whiten\path(-4000,250)(-4000,-250)(-3750,0)(-4000,250)
\blacken\path(-1250,250)(-1250,-250)(-750,0)(-1250,250)

\path(-2000,2000)(12000,2000)
\path(-4750,2000)(-3750,2000)
\whiten\path(-4000,2250)(-4000,1750)(-3750,2000)(-4000,2250)
\blacken\path(-1250,2250)(-1250,1750)(-750,2000)(-1250,2250)
\blacken\path(9250,2250)(9250,1750)(8750,2000)(9250,2250)
\blacken\path(11250,2250)(11250,1750)(10750,2000)(11250,2250)

\path(-2000,4000)(12000,4000)
\path(-4750,4000)(-3750,4000)
\whiten\path(-4000,4250)(-4000,3750)(-3750,4000)(-4000,4250)
\blacken\path(-1250,4250)(-1250,3750)(-750,4000)(-1250,4250)
\blacken\path(9250,4250)(9250,3750)(8750,4000)(9250,4250)
\blacken\path(11250,4250)(11250,3750)(10750,4000)(11250,4250)

\path(-2000,6000)(12000,6000)
\path(-4750,6000)(-3750,6000)
\whiten\path(-4000,6250)(-4000,5750)(-3750,6000)(-4000,6250)
\blacken\path(-1250,6250)(-1250,5750)(-750,6000)(-1250,6250)
\blacken\path(9250,6250)(9250,5750)(8750,6000)(9250,6250)
\blacken\path(11250,6250)(11250,5750)(10750,6000)(11250,6250)

\path(-2000,8000)(12000,8000)
\put(-3250,8350){\scriptsize{$v_1$}}
\put(-3250,7650){\scriptsize{$q_1$}}
\path(-4750,8000)(-3750,8000)
\whiten\path(-4000,8250)(-4000,7750)(-3750,8000)(-4000,8250)
\blacken\path(-1250,8250)(-1250,7750)(-750,8000)(-1250,8250)
\blacken\path(9250,8250)(9250,7750)(8750,8000)(9250,8250)
\blacken\path(11250,8250)(11250,7750)(10750,8000)(11250,8250)


\path(0,-2000)(0,10000)
\put(0,-2650){\scriptsize{$w_1$}}
\put(0,-3350){\scriptsize{$r_1$}}
\path(0,-4750)(0,-3750)
\whiten\path(-250,-4000)(250,-4000)(0,-3750)(-250,-4000)
\blacken\path(-250,-750)(250,-750)(0,-1250)(-250,-750)
\blacken\path(-250,8750)(250,8750)(0,9250)(-250,8750)

\path(2000,-2000)(2000,10000)
\path(2000,-4750)(2000,-3750)
\whiten\path(1750,-4000)(2250,-4000)(2000,-3750)(1750,-4000)
\blacken\path(1750,-750)(2250,-750)(2000,-1250)(1750,-750)
\blacken\path(1750,8750)(2250,8750)(2000,9250)(1750,8750)

\path(4000,-2000)(4000,10000)
\path(4000,-4750)(4000,-3750)
\whiten\path(3750,-4000)(4250,-4000)(4000,-3750)(3750,-4000)
\blacken\path(3750,-750)(4250,-750)(4000,-1250)(3750,-750)
\blacken\path(3750,8750)(4250,8750)(4000,9250)(3750,8750)

\path(6000,-2000)(6000,10000)
\path(6000,-4750)(6000,-3750)
\whiten\path(5750,-4000)(6250,-4000)(6000,-3750)(5750,-4000)
\blacken\path(5750,-750)(6250,-750)(6000,-1250)(5750,-750)
\blacken\path(5750,8750)(6250,8750)(6000,9250)(5750,8750)

\path(8000,-2000)(8000,10000)
\put(8000,-2650){\scriptsize{$w_N$}}
\put(8000,-3350){\scriptsize{$r_N$}}
\path(8000,-4750)(8000,-3750)
\whiten\path(7750,-4000)(8250,-4000)(8000,-3750)(7750,-4000)
\blacken\path(7750,-750)(8250,-750)(8000,-1250)(7750,-750)
\blacken\path(7750,8750)(8250,8750)(8000,9250)(7750,8750)

\path(10000,0)(10000,10000)
\blacken\path(9750,1250)(10250,1250)(10000,750)(9750,1250)
\blacken\path(9750,3250)(10250,3250)(10000,2750)(9750,3250)
\blacken\path(9750,5250)(10250,5250)(10000,4750)(9750,5250)
\blacken\path(9750,7250)(10250,7250)(10000,6750)(9750,7250)
\blacken\path(9750,9250)(10250,9250)(10000,8750)(9750,9250)


\put(14000,3750){$=$}


\path(20000,0)(34000,0)
\path(17250,0)(18250,0)
\whiten\path(18000,250)(18000,-250)(18250,0)(18000,250)
\blacken\path(20750,250)(20750,-250)(21250,0)(20750,250)
\blacken\path(22750,250)(22750,-250)(23250,0)(22750,250)
\blacken\path(33250,250)(33250,-250)(32750,0)(33250,250)

\path(20000,2000)(34000,2000)
\path(17250,2000)(18250,2000)
\whiten\path(18000,2250)(18000,1750)(18250,2000)(18000,2250)
\blacken\path(20750,2250)(20750,1750)(21250,2000)(20750,2250)
\blacken\path(22750,2250)(22750,1750)(23250,2000)(22750,2250)
\blacken\path(33250,2250)(33250,1750)(32750,2000)(33250,2250)

\path(20000,4000)(34000,4000)
\path(17250,4000)(18250,4000)
\whiten\path(18000,4250)(18000,3750)(18250,4000)(18000,4250)
\blacken\path(20750,4250)(20750,3750)(21250,4000)(20750,4250)
\blacken\path(22750,4250)(22750,3750)(23250,4000)(22750,4250)
\blacken\path(33250,4250)(33250,3750)(32750,4000)(33250,4250)

\path(20000,6000)(34000,6000)
\put(18750,6350){\scriptsize$v_1$}
\put(18750,5650){\scriptsize$q_1$}
\path(17250,6000)(18250,6000)
\whiten\path(18000,6250)(18000,5750)(18250,6000)(18000,6250)
\blacken\path(20750,6250)(20750,5750)(21250,6000)(20750,6250)
\blacken\path(22750,6250)(22750,5750)(23250,6000)(22750,6250)
\blacken\path(33250,6250)(33250,5750)(32750,6000)(33250,6250)

\path(22000,8000)(34000,8000)
\blacken\path(33250,8250)(33250,7750)(32750,8000)(33250,8250)


\path(22000,-2000)(22000,8000)
\put(22000,-2650){\scriptsize$v_N$}
\put(22000,-3350){\scriptsize$q_N$}
\path(22000,-4750)(22000,-3750)
\whiten\path(21750,-4000)(22250,-4000)(22000,-3750)(21750,-4000)
\blacken\path(21750,-1250)(22250,-1250)(22000,-750)(21750,-1250)
\blacken\path(21750,750)(22250,750)(22000,1250)(21750,750)
\blacken\path(21750,2750)(22250,2750)(22000,3250)(21750,2750)
\blacken\path(21750,4750)(22250,4750)(22000,5250)(21750,4750)
\blacken\path(21750,6750)(22250,6750)(22000,7250)(21750,6750)

\path(24000,-2000)(24000,10000)
\put(24000,-2650){\scriptsize$w_1$}
\put(24000,-3350){\scriptsize$r_1$}
\path(24000,-4750)(24000,-3750)
\whiten\path(23750,-4000)(24250,-4000)(24000,-3750)(23750,-4000)
\blacken\path(23750,-750)(24250,-750)(24000,-1250)(23750,-750)
\blacken\path(23750,8750)(24250,8750)(24000,9250)(23750,8750)

\path(26000,-2000)(26000,10000)
\path(26000,-4750)(26000,-3750)
\whiten\path(25750,-4000)(26250,-4000)(26000,-3750)(25750,-4000)
\blacken\path(25750,-750)(26250,-750)(26000,-1250)(25750,-750)
\blacken\path(25750,8750)(26250,8750)(26000,9250)(25750,8750)

\path(28000,-2000)(28000,10000)
\path(28000,-4750)(28000,-3750)
\whiten\path(27750,-4000)(28250,-4000)(28000,-3750)(27750,-4000)
\blacken\path(27750,-750)(28250,-750)(28000,-1250)(27750,-750)
\blacken\path(27750,8750)(28250,8750)(28000,9250)(27750,8750)

\path(30000,-2000)(30000,10000)
\path(30000,-4750)(30000,-3750)
\whiten\path(29750,-4000)(30250,-4000)(30000,-3750)(29750,-4000)
\blacken\path(29750,-750)(30250,-750)(30000,-1250)(29750,-750)
\blacken\path(29750,8750)(30250,8750)(30000,9250)(29750,8750)

\path(32000,-2000)(32000,10000)
\put(32000,-2650){\scriptsize$w_N$}
\put(32000,-3350){\scriptsize$r_N$}
\path(32000,-4750)(32000,-3750)
\whiten\path(31750,-4000)(32250,-4000)(32000,-3750)(31750,-4000)
\blacken\path(31750,-750)(32250,-750)(32000,-1250)(31750,-750)
\blacken\path(31750,8750)(32250,8750)(32000,9250)(31750,8750)

\end{picture}

\end{minipage}
\end{center}

\caption[Reordering the lattice lines of the trigonometric Felderhof partition function]{Reordering the lattice lines of the trigonometric Felderhof partition function. The diagram on the left is the domain wall partition function multiplied by the string of vertices $\prod_{j=1}^{N-1} a_{-}(v_j,q_j,v_N,q_N)$, and it corresponds with the left hand side of (\ref{reorder}). Each vertex can be threaded through the lattice using the Yang-Baxter equation, which ultimately produces the diagram on the right. This diagram represents the domain wall partition function with its $N^{\rm th}$ row transferred to the top of the lattice, multiplied by the string of vertices $\prod_{j=1}^{N-1} a_{+}(v_j,q_j,v_N,q_N)$. Clearly, this corresponds with the right hand side of (\ref{reorder}).}
\end{figure}
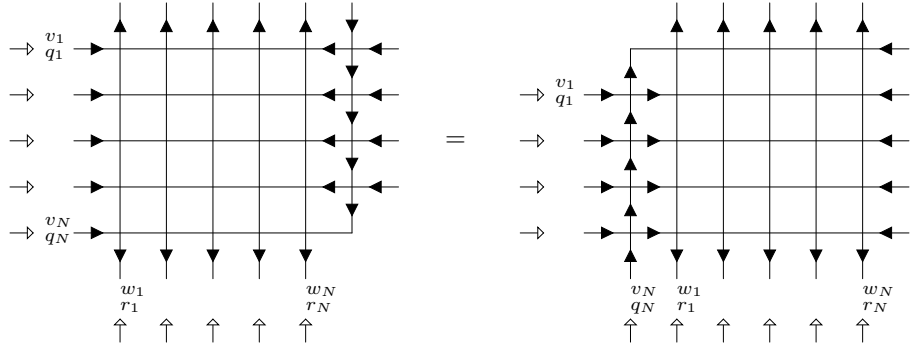

The right hand side of (\ref{reorder}) is a trigonometric polynomial of degree $2N-2$ in $v_N$, with zeros at the points $v_N = v_j + q_j +q_N$ for all $1 \leq j \leq N-1$. Therefore the partition function $Z_N(\{v,q\}_N,\{w,r\}_N)$ must have zeros at the same points.

}
\end{conditions}

\begin{conditions}
{\rm 
We start from the expansion (\ref{pfexp-tf}) of the domain wall partition function, and set $v_N = w_N+q_N+r_N$. This causes all terms in the summation over $1\leq n \leq N$ to collapse to zero except the $n=N$ term, and we obtain
\begin{align}
&
Z_N\Big(\{v,q\}_N,\{w,r\}_N\Big)\Big|_{v_N = w_N + q_N + r_N}
=
\label{pfrec2-tf}
\\
&
[2q_N]^{\frac{1}{2}} [2r_N]^{\frac{1}{2}}
\prod_{j=1}^{N-1} [w_N-w_j+r_j+r_N]
\langle \Downarrow_N| \sigma_N^{-}
\lprod_{j=1}^{N-1} B(v_j,q_j,\{w,r\}_N)
|\Uparrow_N\rangle.
\nonumber
\end{align}
We further simplify the previous expression by using the diagrammatic representation of $\langle \Downarrow_N | \sigma_N^{-}\ \lprod_{j=1}^{N-1} B(v_j,q_j,\{w,r\}_N) |\Uparrow_N\rangle$, shown below:

\begin{figure}[H]

\begin{center}
\begin{minipage}{4.3in}

\setlength{\unitlength}{0.000325cm}
\begin{picture}(20000,12500)(-4000,-3000)

\path(-2000,0)(10000,0)
\put(-4250,400){\tiny$v_{N-1}$}
\put(-4250,-400){\tiny$q_{N-1}$}
\path(-5500,0)(-4500,0)
\whiten\path(-4750,250)(-4750,-250)(-4500,0)(-4750,250)
\blacken\path(-1250,250)(-1250,-250)(-750,0)(-1250,250)
\blacken\path(9250,250)(9250,-250)(8750,0)(9250,250)

\path(-2000,2000)(10000,2000)
\path(-5500,2000)(-4500,2000)
\whiten\path(-4750,2250)(-4750,1750)(-4500,2000)(-4750,2250)
\blacken\path(-1250,2250)(-1250,1750)(-750,2000)(-1250,2250)
\blacken\path(9250,2250)(9250,1750)(8750,2000)(9250,2250)

\path(-2000,4000)(10000,4000)
\path(-5500,4000)(-4500,4000)
\whiten\path(-4750,4250)(-4750,3750)(-4500,4000)(-4750,4250)
\blacken\path(-1250,4250)(-1250,3750)(-750,4000)(-1250,4250)
\blacken\path(9250,4250)(9250,3750)(8750,4000)(9250,4250)

\path(-2000,6000)(10000,6000)
\put(-4250,6400){\tiny$v_1$}
\put(-4250,5600){\tiny$q_1$}
\path(-5500,6000)(-4500,6000)
\whiten\path(-4750,6250)(-4750,5750)(-4500,6000)(-4750,6250)
\blacken\path(-1250,6250)(-1250,5750)(-750,6000)(-1250,6250)
\blacken\path(9250,6250)(9250,5750)(8750,6000)(9250,6250)


\path(0,-2000)(0,8000)
\put(-1000,-2750){\tiny$w_1,r_1$}
\path(0,-4200)(0,-3200)
\whiten\path(-250,-3450)(250,-3450)(0,-3200)(-250,-3450)
\blacken\path(-250,-750)(250,-750)(0,-1250)(-250,-750)
\blacken\path(-250,6750)(250,6750)(0,7250)(-250,6750)

\path(2000,-2000)(2000,8000)
\path(2000,-4200)(2000,-3200)
\whiten\path(1750,-3450)(2250,-3450)(2000,-3200)(1750,-3450)
\blacken\path(1750,-750)(2250,-750)(2000,-1250)(1750,-750)
\blacken\path(1750,6750)(2250,6750)(2000,7250)(1750,6750)

\path(4000,-2000)(4000,8000)
\path(4000,-4200)(4000,-3200)
\whiten\path(3750,-3450)(4250,-3450)(4000,-3200)(3750,-3450)
\blacken\path(3750,-750)(4250,-750)(4000,-1250)(3750,-750)
\blacken\path(3750,6750)(4250,6750)(4000,7250)(3750,6750)

\path(6000,-2000)(6000,8000)
\path(6000,-4200)(6000,-3200)
\whiten\path(5750,-3450)(6250,-3450)(6000,-3200)(5750,-3450)
\blacken\path(5750,-750)(6250,-750)(6000,-1250)(5750,-750)
\blacken\path(5750,6750)(6250,6750)(6000,7250)(5750,6750)

\path(8000,-2000)(8000,8000)
\put(7000,-2750){\tiny$w_N,r_N$}
\path(8000,-4200)(8000,-3200)
\whiten\path(7750,-3450)(8250,-3450)(8000,-3200)(7750,-3450)
\blacken\path(7750,-1250)(8250,-1250)(8000,-750)(7750,-1250)
\blacken\path(7750,6750)(8250,6750)(8000,7250)(7750,6750)


\put(12000,2750){$=$}


\path(18000,0)(30000,0)
\put(15750,400){\tiny$v_{N-1}$}
\put(15750,-400){\tiny$q_{N-1}$}
\path(14500,0)(15500,0)
\whiten\path(15250,250)(15250,-250)(15500,0)(15250,250)
\blacken\path(18750,250)(18750,-250)(19250,0)(18750,250)
\blacken\path(27250,250)(27250,-250)(26750,0)(27250,250)
\blacken\path(29250,250)(29250,-250)(28750,0)(29250,250)

\path(18000,2000)(30000,2000)
\path(14500,2000)(15500,2000)
\whiten\path(15250,2250)(15250,1750)(15500,2000)(15250,2250)
\blacken\path(18750,2250)(18750,1750)(19250,2000)(18750,2250)
\blacken\path(27250,2250)(27250,1750)(26750,2000)(27250,2250)
\blacken\path(29250,2250)(29250,1750)(28750,2000)(29250,2250)

\path(18000,4000)(30000,4000)
\path(14500,4000)(15500,4000)
\whiten\path(15250,4250)(15250,3750)(15500,4000)(15250,4250)
\blacken\path(18750,4250)(18750,3750)(19250,4000)(18750,4250)
\blacken\path(27250,4250)(27250,3750)(26750,4000)(27250,4250)
\blacken\path(29250,4250)(29250,3750)(28750,4000)(29250,4250)

\path(18000,6000)(30000,6000)
\put(15750,6400){\tiny$v_1$}
\put(15750,5600){\tiny$q_1$}
\path(14500,6000)(15500,6000)
\whiten\path(15250,6250)(15250,5750)(15500,6000)(15250,6250)
\blacken\path(18750,6250)(18750,5750)(19250,6000)(18750,6250)
\blacken\path(27250,6250)(27250,5750)(26750,6000)(27250,6250)
\blacken\path(29250,6250)(29250,5750)(28750,6000)(29250,6250)


\path(20000,-2000)(20000,8000)
\put(19000,-2750){\tiny$w_1,r_1$}
\path(20000,-4200)(20000,-3200)
\whiten\path(19750,-3450)(20250,-3450)(20000,-3200)(19750,-3450)
\blacken\path(19750,-750)(20250,-750)(20000,-1250)(19750,-750)
\blacken\path(19750,6750)(20250,6750)(20000,7250)(19750,6750)

\path(22000,-2000)(22000,8000)
\path(22000,-4200)(22000,-3200)
\whiten\path(21750,-3450)(22250,-3450)(22000,-3200)(21750,-3450)
\blacken\path(21750,-750)(22250,-750)(22000,-1250)(21750,-750)
\blacken\path(21750,6750)(22250,6750)(22000,7250)(21750,6750)

\path(24000,-2000)(24000,8000)
\path(24000,-4200)(24000,-3200)
\whiten\path(23750,-3450)(24250,-3450)(24000,-3200)(23750,-3450)
\blacken\path(23750,-750)(24250,-750)(24000,-1250)(23750,-750)
\blacken\path(23750,6750)(24250,6750)(24000,7250)(23750,6750)

\path(26000,-2000)(26000,8000)
\path(26000,-4200)(26000,-3200)
\whiten\path(25750,-3450)(26250,-3450)(26000,-3200)(25750,-3450)
\blacken\path(25750,-750)(26250,-750)(26000,-1250)(25750,-750)
\blacken\path(25750,6750)(26250,6750)(26000,7250)(25750,6750)

\path(28000,-2000)(28000,8000)
\put(27000,-2750){\tiny$w_N,r_N$}
\path(28000,-4200)(28000,-3200)
\whiten\path(27750,-3450)(28250,-3450)(28000,-3200)(27750,-3450)
\blacken\path(27750,-1250)(28250,-1250)(28000,-750)(27750,-1250)
\blacken\path(27750,750)(28250,750)(28000,1250)(27750,750)
\blacken\path(27750,2750)(28250,2750)(28000,3250)(27750,2750)
\blacken\path(27750,4750)(28250,4750)(28000,5250)(27750,4750)
\blacken\path(27750,6750)(28250,6750)(28000,7250)(27750,6750)

\end{picture}

\end{minipage}
\end{center}

\caption[Peeling the right-most column of the trigonometric Felderhof partition function]{Peeling the right-most column of the trigonometric Felderhof partition function. The diagram on the left hand side represents the quantity $\langle \Downarrow_N | \sigma_N^{-}\ \lprod_{j=1}^{N-1} B(v_j,q_j,\{w,r\}_N) |\Uparrow_N\rangle$, with the internal black arrows being summed over all configurations. The diagram on the right contains all surviving configurations.}

\label{tf-zequiv5}
\end{figure}

The right hand side of figure \ref{tf-zequiv5} represents the $(N-1)\times (N-1)$ partition function, multiplied by a column of vertices. Replacing these vertices with their trigonometric weights, we have
\begin{align}
\langle \Downarrow_N| \sigma_N^{-}
\lprod_{j=1}^{N-1} B(v_j,q_j,\{w,r\}_N)
|\Uparrow_N\rangle
=
\label{tf-peel3}
\prod_{j=1}^{N-1} [v_j-w_N+q_j-r_N]
Z_{N-1}.
\end{align}
Substituting (\ref{tf-peel3}) into (\ref{pfrec2-tf}) we recover the required recursion relation (\ref{pfrec1-tf}). 

}
\end{conditions}

\begin{conditions}
{\rm
Specializing the definition (\ref{pf-tf}) to the case $N=1$ gives
\begin{align}
Z_1(v_1,q_1,w_1,r_1)
&=
\langle \Downarrow_1|
B(v_1,q_1,\{w,r\}_1)
|\Uparrow_1\rangle,
\\
&=
\uparrow_{a_1}^{*} \otimes \downarrow_{1}^{*}
R_{a_1 1}(v_1,q_1,w_1,r_1)
\uparrow_1 \otimes \downarrow_{a_1}
=
[2q_1]^{\frac{1}{2}} [2r_1]^{\frac{1}{2}},
\nonumber
\end{align}
as required. Alternatively, the $1\times 1$ partition function is the top-right vertex in figure \ref{verttf}, whose weight is equal to $[2q_1]^{\frac{1}{2}} [2r_1]^{\frac{1}{2}}$.
}
\end{conditions}
\end{proof}

\subsection{Factorized expression for $Z_{N}(\{v,q\}_{N},\{w,r\}_{N})$}

The conditions {\bf 1}--{\bf 4} are strong constraints. Not only do they specify $Z_{N}(\{v,q\}_{N},\{w,r\}_{N})$ uniquely, they lead to its direct evaluation, as we demonstrate below.

\begin{lemma}
{\rm 
The domain wall partition function has the factorized expression
\begin{align}
&
Z_N\Big(\{v,q\}_N,\{w,r\}_N\Big)
=
\label{lem-tf}
\prod_{j=1}^{N}
[2q_j]^{\frac{1}{2}} [2r_j]^{\frac{1}{2}}
\prod_{1 \leq j < k \leq N}
[v_j-v_k+q_j+q_k] [w_k-w_j+r_j+r_k].
\end{align}
The result (\ref{lem-tf}) was first obtained in \cite{cfwz} using a complicated recursion relation. A more straightforward proof, based on solving the conditions {\bf 1}--{\bf 4}, subsequently appeared in \cite{fwz1}. It is the latter proof which we present below. 
}
\end{lemma}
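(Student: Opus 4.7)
The plan is to mimic the structure used for Izergin's formula in Lemma~8: first establish that conditions \textbf{1}--\textbf{4} uniquely determine $Z_N$, and then verify that the right hand side of (\ref{lem-tf}) satisfies all four of them. For uniqueness I would proceed by induction on $N$. The base case $N=1$ is fixed outright by condition \textbf{4}. For the inductive step, assume the claim holds for $N-1$; then condition \textbf{3} pins down the value $Z_N\big|_{v_N=w_N+q_N+r_N}$ in terms of $Z_{N-1}$, while condition \textbf{2} supplies $N-1$ explicit zeros in $v_N$ at $v_N=v_j+q_j+q_N$ for $1\leq j\leq N-1$. By condition \textbf{1} the function $Z_N$ is a trigonometric polynomial of degree $N-1$ in $v_N$, so together with its $N-1$ zeros and its value at one additional point, $Z_N$ is determined everywhere as a function of $v_N$ by Lagrange-type interpolation on trigonometric polynomials. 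The remaining variables are merely parameters, so the induction closes.

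Next I would check the four conditions for the candidate
\begin{align*}
\breve{Z}_N \Big(\{v,q\}_N,\{w,r\}_N\Big)
=
\prod_{j=1}^{N} [2q_j]^{\tfrac12}[2r_j]^{\tfrac12}
\prod_{1\leq j<k\leq N} [v_j-v_k+q_j+q_k][w_k-w_j+r_j+r_k].
\end{align*}
Condition \textbf{1} is immediate: all dependence on $v_N$ lives in the factor $\prod_{j=1}^{N-1}[v_j-v_N+q_j+q_N]$, which is manifestly a trigonometric polynomial of degree $N-1$ in $v_N$. Condition \textbf{2} is also immediate: substituting $v_N=v_j+q_j+q_N$ for any $1\leq j\leq N-1$ kills the factor $[v_j-v_N+q_j+q_N]$. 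Condition \textbf{4} follows from specialization, since at $N=1$ the double product is empty.

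The main (but still mechanical) step is condition \textbf{3}. Setting $v_N=w_N+q_N+r_N$, I would split off the $k=N$ terms from each product:
\begin{align*}
\prod_{1\leq j<k\leq N}[v_j-v_k+q_j+q_k]
\Big|_{v_N=w_N+q_N+r_N}
&=
\prod_{j=1}^{N-1}[v_j-w_N+q_j-r_N]
\prod_{1\leq j<k\leq N-1}[v_j-v_k+q_j+q_k],
\\
\prod_{1\leq j<k\leq N}[w_k-w_j+r_j+r_k]
&=
\prod_{j=1}^{N-1}[w_N-w_j+r_j+r_N]
\prod_{1\leq j<k\leq N-1}[w_k-w_j+r_j+r_k].
\end{align*}
Combining these with the peeled-off factor $[2q_N]^{\tfrac12}[2r_N]^{\tfrac12}$ in the Vandermonde-like prefactor reproduces exactly the right hand side of (\ref{pfrec1-tf}), with the residual double product being $\breve{Z}_{N-1}(\{v,q\}_{N-1},\{w,r\}_{N-1})$. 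So $\breve{Z}_N$ satisfies all four conditions, and by uniqueness $\breve{Z}_N=Z_N$. The only place where care is needed is bookkeeping the $j<k$ index splitting cleanly; otherwise the verification is entirely formal.
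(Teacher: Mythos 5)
Your proposal is correct, but it takes the opposite (though logically equivalent) route to the one in the paper. You propose the closed-form candidate up front, verify conditions \textbf{1}--\textbf{4} for it, and then invoke uniqueness --- exactly the strategy the paper uses for the Izergin determinant in the XXZ case. The paper's own proof of this lemma is instead constructive: conditions \textbf{1} and \textbf{2} force $Z_N$ to equal $\mathcal{C}\prod_{j=1}^{N-1}[v_j-v_N+q_j+q_N]$ with $\mathcal{C}$ independent of $v_N$ (a degree-$(N-1)$ trigonometric polynomial with $N-1$ prescribed zeros is a constant multiple of the corresponding product), condition \textbf{3} then pins down $\mathcal{C}$ in terms of $Z_{N-1}$, and the resulting first-order recurrence with base case \textbf{4} is solved directly to produce the factorized formula --- so no separate uniqueness lemma is needed there. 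The two arguments involve the same index bookkeeping (your splitting of the $k=N$ factors is precisely the step that solves the recursion), and your substitution $[v_j-v_N+q_j+q_N]\big|_{v_N=w_N+q_N+r_N}=[v_j-w_N+q_j-r_N]$ checks out. Your uniqueness sketch is also sound, with the standard caveat that the $N$ interpolation points ($N-1$ zeros from condition \textbf{2} plus the specialization point of condition \textbf{3}) must be distinct, which holds for generic parameters; the paper relegates the analogous uniqueness statement to a footnote in the scalar-product lemma. What the paper's derivational route buys is that it simultaneously proves existence, uniqueness, and the formula in one pass; what your route buys is uniformity with the Izergin--Korepin template used elsewhere in the paper.
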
 

\begin{proof}
From condition {\bf 1} and {\bf 2} on $Z_{N}(\{v,q\}_{N},\{w,r\}_{N})$ we know that it must have the form
\begin{align}
Z_{N}\Big(\{v,q\}_{N},\{w,r\}_{N}\Big)
=
\mathcal{C}\Big(\{v\}_{N-1},\{q\}_N,\{w,r\}_N\Big)
\prod_{j=1}^{N-1}
[v_j-v_N+q_j+q_N],
\label{p1-tf}
\end{align}
where $\mathcal{C}$ does not depend on $v_N$, but depends on all other variables. Evaluating (\ref{p1-tf}) at $v_N = w_N+q_N+r_N$ and comparing with condition {\bf 3} on $Z_{N}$, we obtain 
\begin{align}
Z_N\Big|_{v_N = w_N+q_N+r_N}
&=
\mathcal{C} \Big(\{v\}_{N-1},\{q\}_N,\{w,r\}_N\Big)
\prod_{j=1}^{N-1}
[v_j-w_N+q_j-r_N],
\\
&=
[2q_N]^{\frac{1}{2}} [2r_N]^{\frac{1}{2}}
\prod_{j=1}^{N-1}
[w_N-w_j+r_j+r_N]
[v_j-w_N+q_j-r_N]
Z_{N-1},
\nonumber
\end{align}
from which we extract the equation
\begin{align}
\mathcal{C}
=
[2q_N]^{\frac{1}{2}} [2r_N]^{\frac{1}{2}}
\prod_{j=1}^{N-1}
[w_N-w_j+r_j+r_N]
Z_{N-1}\Big(\{v,q\}_{N-1},\{w,r\}_{N-1}\Big).
\end{align}
Substituting this expression for $\mathcal{C}$ into (\ref{p1-tf}), we obtain the recurrence 
\begin{align}
&
Z_N\Big(\{v,q\}_N,\{w,r\}_N\Big)
=
[2q_N]^{\frac{1}{2}} [2r_N]^{\frac{1}{2}}
\times
\\
&
\prod_{j=1}^{N-1}
[v_j-v_N+q_j+q_N]
[w_N-w_j+r_j+r_N]
Z_{N-1}\Big(\{v,q\}_{N-1},\{w,r\}_{N-1}\Big),
\nonumber
\end{align}
whose basis is given by condition {\bf 4}. This recurrence is trivially solved to produce the formula (\ref{lem-tf}).
\end{proof}

\section{Scalar products $S_n\Big(\{u,p\}_n,\{v,q\}_N,\{w,r\}_M\Big)$}

\subsection{Definition of $S_n(\{u,p\}_n,\{v,q\}_N,\{w,r\}_M)$}

Define three sets of rapidity variables $\{u\}_n = \{u_1,\ldots,u_n\},\{v\}_N = \{v_1,\ldots,v_N\},\{w\}_M = \{w_1,\ldots,w_M\}$ and their corresponding sets of external fields $\{p\}_n = \{p_1,\ldots,p_n\},\{q\}_N = \{q_1,\ldots,q_N\},\{r\}_M = \{r_1,\ldots,r_M\}$. The cardinalities of these sets are assumed to satisfy $0 \leq n \leq N$ and $1 \leq N \leq M$. For $n=0$ we define
\begin{align}
S_0\Big(\{v,q\}_N,\{w,r\}_M\Big)
=
\langle \Downarrow_{N/M} |
\lprod_{k=1}^{N}
B(v_k,q_k,\{w,r\}_M)
|\Uparrow_M\rangle.
\label{tf-sp1}
\end{align}
Similarly to in \S 4 we will find that $S_0$ is equal to the trigonometric Felderhof partition function $Z_N$, up to an overall normalization. Next, for all $1\leq n \leq N-1$ we define
\begin{align}
&
S_n\Big(\{u,p\}_n,\{v,q\}_N,\{w,r\}_M\Big)
=
\label{tf-sp2}
\langle \Downarrow_{\widetilde{N}/M} |
\lprod_{j=1}^{n} C(u_j,p_j,\{w,r\}_M)
\lprod_{k=1}^{N} B(v_k,q_k,\{w,r\}_M)
|\Uparrow_M\rangle
\end{align}
with $\widetilde{N} = N-n$. Finally, in the case $n=N$ we fix
\begin{align}
S_N\Big(
\{u,p\}_N,\{v,q\}_N,\{w,r\}_M
\Big)
=
\label{tf-sp3}
\langle \Uparrow_M|
\lprod_{j=1}^{N}
C(u_j,p_j,\{w,r\}_M)
\lprod_{k=1}^{N}
B(v_k,q_k,\{w,r\}_M)
|\Uparrow_M \rangle.
\end{align}
The scalar products (\ref{tf-sp1})--(\ref{tf-sp3}) are the trigonometric Felderhof analogues of those defined in \S \ref{xxz-sp-def}. They have identical graphical representations to those described in \S \ref{xxz-sp-graph}, except that every rapidity variable is now accompanied by an appropriate external field. In the following subsection we give a set of conditions on these scalar products, using similar techniques to those developed earlier.

\subsection{Conditions on $S_n(\{u,p\}_n,\{v,q\}_N,\{w,r\}_M)$}

\begin{lemma}
{\rm 
For all $1\leq n \leq N$ we claim that

\setcounter{conditions}{0}
\begin{conditions}
{\rm 
$S_n$ is invariant under the simultaneous permutation of variables
$\{w_j,r_j\} \leftrightarrow \{w_k,r_k\}$ for all $j,k \in \{\widetilde{N}+1,\ldots,M\}$.
}
\end{conditions}

\begin{conditions}
{\rm
$S_n$ is a trigonometric polynomial of degree $M-1$ in $u_n$, with zeros occurring at the points $u_n =p_n+w_j+r_j$, for all $1\leq j \leq \widetilde{N}$.
}
\end{conditions}

\begin{conditions}
{\rm
Setting $u_n+p_n = w_{\widetilde{N}+1}+r_{\widetilde{N}+1}$, $S_n$ satisfies the recursion relation
\begin{align}
S_n 
\Big|_{u_n+p_n=w_{\widetilde{N}+1}+r_{\widetilde{N}+1}}
&=
[2p_n]^{\frac{1}{2}} [2r_{\widetilde{N}+1}]^{\frac{1}{2}}
\prod_{1\leq j < \widetilde{N}+1}
[w_j-w_{\widetilde{N}+1}+r_j-r_{\widetilde{N}+1}+2p_n]
\label{Srec0-tf}
\\
&
\times
\prod_{\widetilde{N}+1 < j \leq M}
[w_{\widetilde{N}+1}-w_j+r_j+r_{\widetilde{N}+1}]
S_{n-1},
\nonumber
\end{align}
where we have abbreviated $S_{n-1} = S_{n-1}(\{u,p\}_{n-1},\{v,q\}_N,\{w,r\}_M)$.
}
\end{conditions}

In addition, we have the supplementary condition

\begin{conditions}
{\rm $S_0$ and $Z_N$ are related via the equation
\begin{align}
S_0\Big(\{v,q\}_N,\{w,r\}_M\Big)
=
\prod_{j=1}^{N}
\prod_{k=N+1}^{M}
[v_j-w_k+q_j-r_k]
Z_N\Big(\{v,q\}_N,\{w,r\}_N\Big).
\label{S0-tf}
\end{align}

}
\end{conditions}
}
\end{lemma}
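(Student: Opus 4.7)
The plan is to mimic the proof of lemma \ref{sncond} condition by condition, substituting the Felderhof vertex weights (\ref{a-tf})--(\ref{c-tf}) for their XXZ counterparts and tracking the external-field shifts throughout. Three standing tools are needed: the intertwining equation (\ref{tf-int}) together with the commutation relations (\ref{bb-tf})--(\ref{cc-tf}) it provides, the diagonal actions of $A$ and $D$ on $|\Uparrow_M\rangle, |\Downarrow_M\rangle$, and a Felderhof analog of lemma 4 obtained by the same crossing-type manipulation as in the XXZ case.

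For condition 1, the cleanest route is to insert $R_{m,m+1}(w_m,r_m,w_{m+1},r_{m+1})$ between two adjacent quantum sites $m,m+1 > \widetilde{N}$ and propagate it through the lattice via the Yang-Baxter equation (\ref{yb-tf}). Both the top boundary $|\Uparrow_M\rangle$ and the bottom boundary $\langle\Downarrow_{\widetilde{N}/M}|$ are spin-up at sites $m, m+1$, so the inserted $R$-matrix acts diagonally on both boundaries with eigenvalue $a_+(w_m,r_m,w_{m+1},r_{m+1})$; these equal factors cancel, leaving $S_n$ evaluated at the permuted variables. An arbitrary permutation of the labels $\{w_j,r_j\}$ with $j > \widetilde{N}$ is a composition of such adjacent swaps.

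For conditions 2 and 3, I would insert the resolution $\sum_{m>\widetilde{N}}\sigma_m^-|\Downarrow_{\widetilde{N}/M}\rangle\langle\Downarrow_{\widetilde{N}/M}|\sigma_m^+$ immediately after $C(u_n,p_n,\{w,r\}_M)$ in (\ref{tf-sp2}), in direct analogy with figure \ref{svert100}. The Felderhof version of the frozen row evaluates to
\begin{align*}
\langle\Downarrow_{\widetilde{N}/M}|C(u_n,p_n,\{w,r\}_M)\sigma_m^-|\Downarrow_{\widetilde{N}/M}\rangle
&=
\prod_{i=1}^{\widetilde{N}}a_-(u_n,p_n,w_i,r_i)
\cdot[2p_n]^{\frac{1}{2}}[2r_m]^{\frac{1}{2}}
\\
&\quad\times
\prod_{\widetilde{N}<i<m}b_-(u_n,p_n,w_i,r_i)
\prod_{m<i\leq M}a_+(u_n,p_n,w_i,r_i),
\end{align*}
which is a trigonometric polynomial of degree $M-1$ in $u_n$ and manifestly vanishes at $u_n=p_n+w_j+r_j$ for $1\leq j\leq\widetilde{N}$, since $a_-(u_n,p_n,w_j,r_j)=[p_n+r_j-u_n+w_j]$; this gives condition 2. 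Specializing $u_n+p_n=w_{\widetilde{N}+1}+r_{\widetilde{N}+1}$ annihilates $b_-(u_n,p_n,w_{\widetilde{N}+1},r_{\widetilde{N}+1})$, so every $m>\widetilde{N}+1$ term in the sum drops out; the surviving $m=\widetilde{N}+1$ configuration reproduces the prefactor of (\ref{Srec0-tf}) by direct evaluation of $a_-$ and $a_+$ at that specialization, and the residual matrix element is $S_{n-1}$ by virtue of $\langle\Downarrow_{\widetilde{N}/M}|\sigma_{\widetilde{N}+1}^+=\langle\Downarrow_{\widetilde{N}+1/M}|$.

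Condition 4 follows from the Felderhof analog of lemma 4 applied to (\ref{tf-sp1}), which rewrites $S_0$ as a stack of $C$- and $D$-operators along the $M$ quantum lines. The $D$-operators on lines $N+1,\ldots,M$ act diagonally on the auxiliary state $|\Downarrow_N^a\rangle$ and yield $\prod_{i=1}^{N}\prod_{j=N+1}^{M}[v_i-w_j+q_i-r_j]$; the remaining $C$-operator sandwich is the rotated form of $Z_N(\{v,q\}_N,\{w,r\}_N)$, analogous to figure \ref{zequiv2}. The principal obstacle is establishing this Felderhof analog of lemma 4: one must verify that the crossing-type identity for $R_{ab}(u,p,v,q)$ shifts both the rapidity and the external field in a consistent way, so that the barred variables match on both sides of the rewriting. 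Once that identity is secured, the bookkeeping in all four conditions reduces to routine manipulation of the weights (\ref{a-tf})--(\ref{c-tf}), made particularly transparent by the free-fermion character of the model.
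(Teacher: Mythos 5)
Your treatment of conditions \textbf{1}--\textbf{3} is sound and matches the paper in substance: the paper presents these as graphical freezing arguments on the lattice, explicitly remarking that an algebraic proof ``analogous to the proof of lemma 9'' exists, and your proposal is precisely that algebraic version. Your frozen-row formula for $\langle\Downarrow_{\widetilde{N}/M}|C(u_n,p_n,\{w,r\}_M)\sigma_m^-|\Downarrow_{\widetilde{N}/M}\rangle$ has the correct assignment of $a_-$, $c_-$, $b_-$, $a_+$ weights, and evaluating it at $u_n+p_n=w_{\widetilde{N}+1}+r_{\widetilde{N}+1}$ does reproduce the prefactor of (\ref{Srec0-tf}). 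The railroad argument for condition \textbf{1} is likewise equivalent to the paper's attach-and-extract of an $a_+(w_j,r_j,w_{j-1},r_{j-1})$ vertex.

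The genuine gap is in condition \textbf{4}. You route it through a ``Felderhof analog of lemma 4,'' i.e.\ a crossing-symmetry rewriting of the stack of $B$-operators as a stack of $C$- and $D$-operators on the rotated lattice, and you flag establishing that analog as the principal obstacle without resolving it. This is not a formality one can wave through: the final step of the XXZ crossing lemma uses $R_{ba}(u,v)=R_{ab}(u,v)$, and the Felderhof $R$-matrix does \emph{not} have this exchange symmetry (for instance $b_+(v,q,u,p)=[v-u+p-q]$ while $b_-(u,p,v,q)=[u-v+p-q]$, which differ), so the naive transplant of the shift $\bar u=u+\gamma$ does not produce a matching of barred variables on both sides. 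A crossing-type relation for this model, if it exists, requires a genuinely different shift of both rapidity and field, and nothing in the paper supplies it. The good news is that condition \textbf{4} does not need any of this: in the $S_0$ lattice the columns $N+1,\ldots,M$ carry up arrows on both the top boundary $|\Uparrow_M\rangle$ and the bottom boundary $\langle\Downarrow_{N/M}|$, and arrow conservation, applied column by column from the right, forces every vertex in those columns to be of type $b_-(v_j,q_j,w_k,r_k)$. Peeling off this frozen block yields the factor $\prod_{j=1}^{N}\prod_{k=N+1}^{M}[v_j-w_k+q_j-r_k]$ times the $N\times N$ domain wall lattice, which is exactly (\ref{S0-tf}). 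This is the paper's argument, it uses only the vertex weights and the ice rule, and you should replace your crossing-based route with it.
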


\begin{proof}
The proof of properties {\bf 1}--{\bf 4} is analogous to the proof of lemma 9. There, we presented an algebraic proof of the properties. Here, we outline a less technical graphical proof.

\setcounter{conditions}{0}
\begin{conditions}
{\rm
For any $\widetilde{N}+1 < j \leq M$, multiplying $S_n(\{u,p\}_n,\{v,q\}_N,\{w,r\}_M)$ by the weight $a_{+}(w_{j},r_{j},w_{j-1},r_{j-1})$ is equivalent to attaching an $a_{+}$ vertex at the base of the lattice, as shown in figure \ref{multbya}.

\begin{figure}[H]

\begin{center}
\begin{minipage}{4.3in}

\setlength{\unitlength}{0.0003cm}
\begin{picture}(20000,22000)(-9000,-11000)

\put(-8000,-4000){\fbox{$u,p$}}

\path(-2000,-6000)(20000,-6000)
\put(-3250,-6000){\tiny{$n$}}
\path(-4750,-6000)(-3750,-6000)
\whiten\path(-4000,-5750)(-4000,-6250)(-3750,-6000)(-4000,-5750)
\blacken\path(-750,-5750)(-750,-6250)(-1250,-6000)(-750,-5750)
\blacken\path(18750,-5750)(18750,-6250)(19250,-6000)(18750,-5750)

\path(-2000,-4000)(20000,-4000)
\path(-4750,-4000)(-3750,-4000)
\whiten\path(-4000,-3750)(-4000,-4250)(-3750,-4000)(-4000,-3750)
\blacken\path(-750,-3750)(-750,-4250)(-1250,-4000)(-750,-3750)
\blacken\path(18750,-3750)(18750,-4250)(19250,-4000)(18750,-3750)

\path(-2000,-2000)(20000,-2000)
\put(-3250,-2000){\tiny{1}}
\path(-4750,-2000)(-3750,-2000)
\whiten\path(-4000,-1750)(-4000,-2250)(-3750,-2000)(-4000,-1750)
\blacken\path(-750,-1750)(-750,-2250)(-1250,-2000)(-750,-1750)
\blacken\path(18750,-1750)(18750,-2250)(19250,-2000)(18750,-1750)

\put(-8000,5000){\fbox{$v,q$}}

\path(-2000,0)(20000,0)
\put(-3250,0){\tiny{$N$}}
\path(-4750,0)(-3750,0)
\whiten\path(-4000,250)(-4000,-250)(-3750,0)(-4000,250)
\blacken\path(-1250,250)(-1250,-250)(-750,0)(-1250,250)
\blacken\path(19250,250)(19250,-250)(18750,0)(19250,250)

\path(-2000,2000)(20000,2000)
\path(-4750,2000)(-3750,2000)
\whiten\path(-4000,2250)(-4000,1750)(-3750,2000)(-4000,2250)
\blacken\path(-1250,2250)(-1250,1750)(-750,2000)(-1250,2250)
\blacken\path(19250,2250)(19250,1750)(18750,2000)(19250,2250)

\path(-2000,4000)(20000,4000)
\path(-4750,4000)(-3750,4000)
\whiten\path(-4000,4250)(-4000,3750)(-3750,4000)(-4000,4250)
\blacken\path(-1250,4250)(-1250,3750)(-750,4000)(-1250,4250)
\blacken\path(19250,4250)(19250,3750)(18750,4000)(19250,4250)

\path(-2000,6000)(20000,6000)
\path(-4750,6000)(-3750,6000)
\whiten\path(-4000,6250)(-4000,5750)(-3750,6000)(-4000,6250)
\blacken\path(-1250,6250)(-1250,5750)(-750,6000)(-1250,6250)
\blacken\path(19250,6250)(19250,5750)(18750,6000)(19250,6250)

\path(-2000,8000)(20000,8000)
\path(-4750,8000)(-3750,8000)
\whiten\path(-4000,8250)(-4000,7750)(-3750,8000)(-4000,8250)
\blacken\path(-1250,8250)(-1250,7750)(-750,8000)(-1250,8250)
\blacken\path(19250,8250)(19250,7750)(18750,8000)(19250,8250)

\path(-2000,10000)(20000,10000)
\put(-3250,10000){\tiny{$1$}}
\path(-4750,10000)(-3750,10000)
\whiten\path(-4000,10250)(-4000,9750)(-3750,10000)(-4000,10250)
\blacken\path(-1250,10250)(-1250,9750)(-750,10000)(-1250,10250)
\blacken\path(19250,10250)(19250,9750)(18750,10000)(19250,10250)


\path(0,-8000)(0,12000)
\put(-200,-10750){\tiny{1}}
\path(0,-12500)(0,-11500)
\whiten\path(-250,-11750)(250,-11750)(0,-11500)(-250,-11750)
\blacken\path(-250,-6750)(250,-6750)(0,-7250)(-250,-6750)
\blacken\path(-250,10750)(250,10750)(0,11250)(-250,10750)

\path(2000,-8000)(2000,12000)
\path(2000,-12500)(2000,-11500)
\whiten\path(1750,-11750)(2250,-11750)(2000,-11500)(1750,-11750)
\blacken\path(1750,-6750)(2250,-6750)(2000,-7250)(1750,-6750)
\blacken\path(1750,10750)(2250,10750)(2000,11250)(1750,10750)

\path(4000,-8000)(4000,12000)
\put(3800,-10750){\tiny{$\widetilde{N}$}}
\path(4000,-12500)(4000,-11500)
\whiten\path(3750,-11750)(4250,-11750)(4000,-11500)(3750,-11750)
\blacken\path(3750,-6750)(4250,-6750)(4000,-7250)(3750,-6750)
\blacken\path(3750,10750)(4250,10750)(4000,11250)(3750,10750)

\path(6000,-8000)(6000,12000)
\put(5800,-10750){\tiny{$\widetilde{N}+1$}}
\path(6000,-12500)(6000,-11500)
\whiten\path(5750,-11750)(6250,-11750)(6000,-11500)(5750,-11750)
\blacken\path(5750,-7250)(6250,-7250)(6000,-6750)(5750,-7250)
\blacken\path(5750,10750)(6250,10750)(6000,11250)(5750,10750)

\path(8000,-8000)(8000,12000)
\path(8000,-12500)(8000,-11500)
\whiten\path(7750,-11750)(8250,-11750)(8000,-11500)(7750,-11750)
\blacken\path(7750,-7250)(8250,-7250)(8000,-6750)(7750,-7250)
\blacken\path(7750,10750)(8250,10750)(8000,11250)(7750,10750)

\path(10000,-8000)(10000,12000)
\path(10000,-12500)(10000,-11500)
\whiten\path(9750,-11750)(10250,-11750)(10000,-11500)(9750,-11750)
\blacken\path(9750,-7250)(10250,-7250)(10000,-6750)(9750,-7250)
\blacken\path(9750,10750)(10250,10750)(10000,11250)(9750,10750)

\path(12000,-12500)(12000,-11500)
\whiten\path(11750,-11750)(12250,-11750)(12000,-11500)(11750,-11750)
\put(13800,-10750){\tiny{$j-1$}}
\path(12000,-7250)(12000,12000)
\path(12000,-7250)(14000,-9250)
\blacken\path(13750,-9750)(14250,-9750)(14000,-9250)(13750,-9750)
\shade\path(11750,-7250)(12250,-7250)(12000,-6750)(11750,-7250)
\blacken\path(11750,10750)(12250,10750)(12000,11250)(11750,10750)

\path(14000,-12500)(14000,-11500)
\whiten\path(13750,-11750)(14250,-11750)(14000,-11500)(13750,-11750)
\put(11800,-10750){\tiny{$j$}}
\path(14000,-7250)(14000,12000)
\path(14000,-7250)(12000,-9250)
\blacken\path(11750,-9750)(12250,-9750)(12000,-9250)(11750,-9750)
\shade\path(13750,-7250)(14250,-7250)(14000,-6750)(13750,-7250)
\blacken\path(13750,10750)(14250,10750)(14000,11250)(13750,10750)

\path(16000,-8000)(16000,12000)
\path(16000,-12500)(16000,-11500)
\whiten\path(15750,-11750)(16250,-11750)(16000,-11500)(15750,-11750)
\blacken\path(15750,-7250)(16250,-7250)(16000,-6750)(15750,-7250)
\blacken\path(15750,10750)(16250,10750)(16000,11250)(15750,10750)

\path(18000,-8000)(18000,12000)
\put(17800,-10750){\tiny{$M$}}
\path(18000,-12500)(18000,-11500)
\whiten\path(17750,-11750)(18250,-11750)(18000,-11500)(17750,-11750)
\blacken\path(17750,-7250)(18250,-7250)(18000,-6750)(17750,-7250)
\blacken\path(17750,10750)(18250,10750)(18000,11250)(17750,10750)

\put(20000,-9000){\fbox{$w,r$}}

\end{picture}

\end{minipage}
\end{center}

\caption[Attaching an $a_{+}(w_j,r_j,w_{j-1},r_{j-1})$ vertex to the $S_n$ lattice]{Attaching an $a_{+}(w_j,r_j,w_{j-1},r_{j-1})$ vertex to the $S_n$ lattice. The points marked with grey arrows are considered to be summed over all arrow configurations, but the only non-zero configuration is the one shown.}

\label{multbya}
\end{figure}
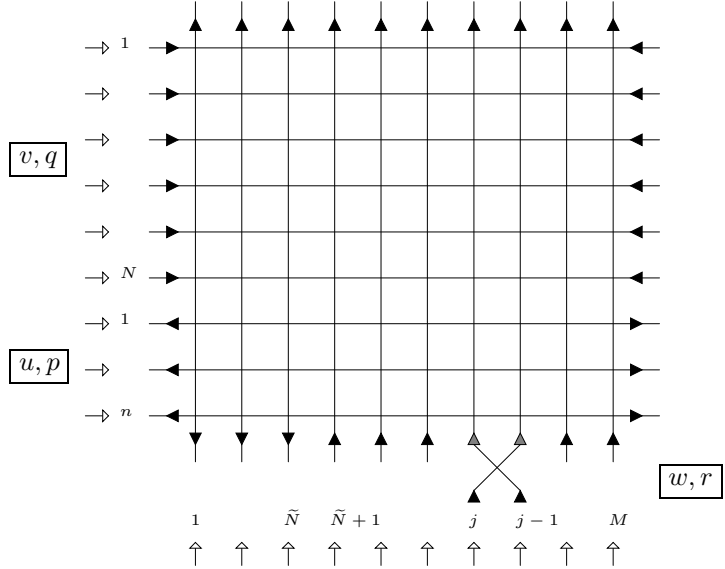

The attached vertex can be translated vertically through the lattice using the graphical version of the Yang-Baxter equation, as given by figure \ref{ybtf}. It ultimately emerges from the top of the lattice, still as an $a_{+}(w_{j},r_{j},w_{j-1},r_{j-1})$ vertex, and the $(j-1)^{\rm th}$ and $j^{\rm th}$ lattice columns are swapped in the process. The result of this procedure is shown in figure \ref{extracta}.

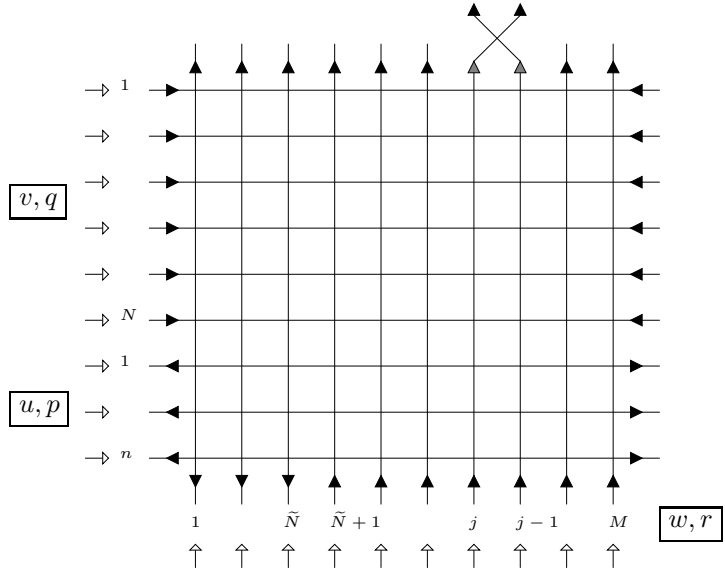
\begin{figure}[H]

\begin{center}
\begin{minipage}{4.3in}

\setlength{\unitlength}{0.00030cm}
\begin{picture}(20000,22000)(-9000,-10000)

\put(-8000,-4000){\fbox{$u,p$}}

\path(-2000,-6000)(20000,-6000)
\put(-3250,-6000){\tiny{$n$}}
\path(-4750,-6000)(-3750,-6000)
\whiten\path(-4000,-5750)(-4000,-6250)(-3750,-6000)(-4000,-5750)
\blacken\path(-750,-5750)(-750,-6250)(-1250,-6000)(-750,-5750)
\blacken\path(18750,-5750)(18750,-6250)(19250,-6000)(18750,-5750)

\path(-2000,-4000)(20000,-4000)
\path(-4750,-4000)(-3750,-4000)
\whiten\path(-4000,-3750)(-4000,-4250)(-3750,-4000)(-4000,-3750)
\blacken\path(-750,-3750)(-750,-4250)(-1250,-4000)(-750,-3750)
\blacken\path(18750,-3750)(18750,-4250)(19250,-4000)(18750,-3750)

\path(-2000,-2000)(20000,-2000)
\put(-3250,-2000){\tiny{1}}
\path(-4750,-2000)(-3750,-2000)
\whiten\path(-4000,-1750)(-4000,-2250)(-3750,-2000)(-4000,-1750)
\blacken\path(-750,-1750)(-750,-2250)(-1250,-2000)(-750,-1750)
\blacken\path(18750,-1750)(18750,-2250)(19250,-2000)(18750,-1750)

\put(-8000,5000){\fbox{$v,q$}}

\path(-2000,0)(20000,0)
\put(-3250,0){\tiny{$N$}}
\path(-4750,0)(-3750,0)
\whiten\path(-4000,250)(-4000,-250)(-3750,0)(-4000,250)
\blacken\path(-1250,250)(-1250,-250)(-750,0)(-1250,250)
\blacken\path(19250,250)(19250,-250)(18750,0)(19250,250)

\path(-2000,2000)(20000,2000)
\path(-4750,2000)(-3750,2000)
\whiten\path(-4000,2250)(-4000,1750)(-3750,2000)(-4000,2250)
\blacken\path(-1250,2250)(-1250,1750)(-750,2000)(-1250,2250)
\blacken\path(19250,2250)(19250,1750)(18750,2000)(19250,2250)

\path(-2000,4000)(20000,4000)
\path(-4750,4000)(-3750,4000)
\whiten\path(-4000,4250)(-4000,3750)(-3750,4000)(-4000,4250)
\blacken\path(-1250,4250)(-1250,3750)(-750,4000)(-1250,4250)
\blacken\path(19250,4250)(19250,3750)(18750,4000)(19250,4250)

\path(-2000,6000)(20000,6000)
\path(-4750,6000)(-3750,6000)
\whiten\path(-4000,6250)(-4000,5750)(-3750,6000)(-4000,6250)
\blacken\path(-1250,6250)(-1250,5750)(-750,6000)(-1250,6250)
\blacken\path(19250,6250)(19250,5750)(18750,6000)(19250,6250)

\path(-2000,8000)(20000,8000)
\path(-4750,8000)(-3750,8000)
\whiten\path(-4000,8250)(-4000,7750)(-3750,8000)(-4000,8250)
\blacken\path(-1250,8250)(-1250,7750)(-750,8000)(-1250,8250)
\blacken\path(19250,8250)(19250,7750)(18750,8000)(19250,8250)

\path(-2000,10000)(20000,10000)
\put(-3250,10000){\tiny{1}}
\path(-4750,10000)(-3750,10000)
\whiten\path(-4000,10250)(-4000,9750)(-3750,10000)(-4000,10250)
\blacken\path(-1250,10250)(-1250,9750)(-750,10000)(-1250,10250)
\blacken\path(19250,10250)(19250,9750)(18750,10000)(19250,10250)


\path(0,-8000)(0,12000)
\put(-200,-9000){\tiny{1}}
\path(0,-10750)(0,-9750)
\whiten\path(-250,-10000)(250,-10000)(0,-9750)(-250,-10000)
\blacken\path(-250,-6750)(250,-6750)(0,-7250)(-250,-6750)
\blacken\path(-250,10750)(250,10750)(0,11250)(-250,10750)

\path(2000,-8000)(2000,12000)
\path(2000,-10750)(2000,-9750)
\whiten\path(1750,-10000)(2250,-10000)(2000,-9750)(1750,-10000)
\blacken\path(1750,-6750)(2250,-6750)(2000,-7250)(1750,-6750)
\blacken\path(1750,10750)(2250,10750)(2000,11250)(1750,10750)

\path(4000,-8000)(4000,12000)
\put(3800,-9000){\tiny{$\widetilde{N}$}}
\path(4000,-10750)(4000,-9750)
\whiten\path(3750,-10000)(4250,-10000)(4000,-9750)(3750,-10000)
\blacken\path(3750,-6750)(4250,-6750)(4000,-7250)(3750,-6750)
\blacken\path(3750,10750)(4250,10750)(4000,11250)(3750,10750)

\path(6000,-8000)(6000,12000)
\put(5800,-9000){\tiny{$\widetilde{N}+1$}}
\path(6000,-10750)(6000,-9750)
\whiten\path(5750,-10000)(6250,-10000)(6000,-9750)(5750,-10000)
\blacken\path(5750,-7250)(6250,-7250)(6000,-6750)(5750,-7250)
\blacken\path(5750,10750)(6250,10750)(6000,11250)(5750,10750)

\path(8000,-8000)(8000,12000)
\path(8000,-10750)(8000,-9750)
\whiten\path(7750,-10000)(8250,-10000)(8000,-9750)(7750,-10000)
\blacken\path(7750,-7250)(8250,-7250)(8000,-6750)(7750,-7250)
\blacken\path(7750,10750)(8250,10750)(8000,11250)(7750,10750)

\path(10000,-8000)(10000,12000)
\path(10000,-10750)(10000,-9750)
\whiten\path(9750,-10000)(10250,-10000)(10000,-9750)(9750,-10000)
\blacken\path(9750,-7250)(10250,-7250)(10000,-6750)(9750,-7250)
\blacken\path(9750,10750)(10250,10750)(10000,11250)(9750,10750)

\path(12000,-8000)(12000,10750)
\put(11800,-9000){\tiny{$j$}}
\path(12000,-10750)(12000,-9750)
\whiten\path(11750,-10000)(12250,-10000)(12000,-9750)(11750,-10000)
\blacken\path(11750,-7250)(12250,-7250)(12000,-6750)(11750,-7250)
\shade\path(11750,10750)(12250,10750)(12000,11250)(11750,10750)
\path(12000,11250)(14000,13250)
\blacken\path(13750,13250)(14250,13250)(14000,13750)(13750,13250)

\path(14000,-8000)(14000,10750)
\put(13800,-9000){\tiny{$j-1$}}
\path(14000,-10750)(14000,-9750)
\whiten\path(13750,-10000)(14250,-10000)(14000,-9750)(13750,-10000)
\blacken\path(13750,-7250)(14250,-7250)(14000,-6750)(13750,-7250)
\shade\path(13750,10750)(14250,10750)(14000,11250)(13750,10750)
\path(14000,11250)(12000,13250)
\blacken\path(11750,13250)(12250,13250)(12000,13750)(11750,13250)

\path(16000,-8000)(16000,12000)
\path(16000,-10750)(16000,-9750)
\whiten\path(15750,-10000)(16250,-10000)(16000,-9750)(15750,-10000)
\blacken\path(15750,-7250)(16250,-7250)(16000,-6750)(15750,-7250)
\blacken\path(15750,10750)(16250,10750)(16000,11250)(15750,10750)

\path(18000,-8000)(18000,12000)
\put(17800,-9000){\tiny{$M$}}
\path(18000,-10750)(18000,-9750)
\whiten\path(17750,-10000)(18250,-10000)(18000,-9750)(17750,-10000)
\blacken\path(17750,-7250)(18250,-7250)(18000,-6750)(17750,-7250)
\blacken\path(17750,10750)(18250,10750)(18000,11250)(17750,10750)

\put(20000,-9000){\fbox{$w,r$}}

\end{picture}

\end{minipage}
\end{center}

\caption[Extracting the $a_{+}(w_j,r_j,w_{j-1},r_{j-1})$ vertex from the $S_n$ lattice]{Extracting the $a_{+}(w_j,r_j,w_{j-1},r_{j-1})$ vertex from the $S_n$ lattice. Once again, the grey arrows indicate the only surviving configuration in the summation at those points.}

\label{extracta}
\end{figure}

Cancelling the common factor $a_{+}(w_j,r_j,w_{j-1},r_{j-1})$ from figures \ref{multbya} and \ref{extracta}, we see that $S_n$ is invariant under swapping the $(j-1)^{\rm th}$ and $j^{\rm th}$ lattice columns, for all $\widetilde{N}+1 < j \leq M$. An arbitrary permutation of the lattice columns is just a composition of such swaps. Therefore $S_n$ is invariant under permuting its $j^{\rm th}$ and $k^{\rm th}$ columns, for all $\widetilde{N}+1 \leq j,k \leq M$. 
}
\end{conditions}

\begin{conditions}
{\rm
Consider the graphical representation of the scalar product $S_n$, as given below:

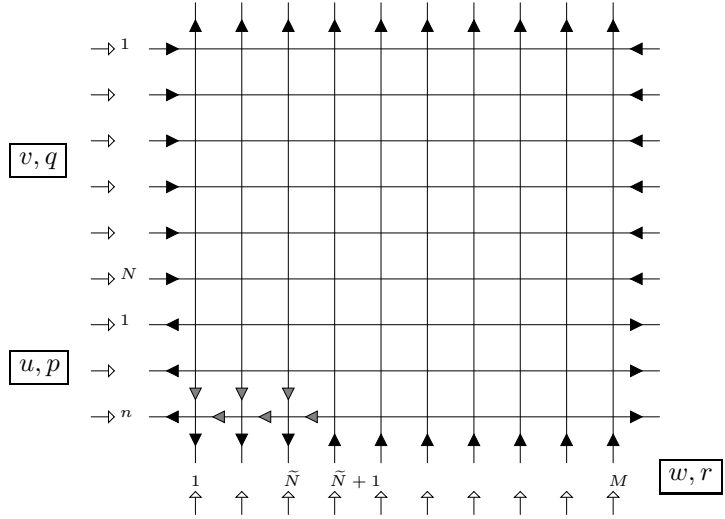
\begin{figure}[H]

\begin{center}
\begin{minipage}{4.3in}

\setlength{\unitlength}{0.0003cm}
\begin{picture}(20000,20000)(-9000,-9000)

\put(-8000,-4000){\fbox{$u,p$}}

\path(-2000,-6000)(20000,-6000)
\put(-3250,-6000){\tiny$n$}
\path(-4500,-6000)(-3500,-6000)
\whiten\path(-3750,-5750)(-3750,-6250)(-3500,-6000)(-3750,-5750)
\blacken\path(-750,-5750)(-750,-6250)(-1250,-6000)(-750,-5750)
\shade\path(1250,-5750)(1250,-6250)(750,-6000)(1250,-5750)
\shade\path(3250,-5750)(3250,-6250)(2750,-6000)(3250,-5750)
\shade\path(5250,-5750)(5250,-6250)(4750,-6000)(5250,-5750)
\blacken\path(18750,-5750)(18750,-6250)(19250,-6000)(18750,-5750)

\path(-2000,-4000)(20000,-4000)
\path(-4500,-4000)(-3500,-4000)
\whiten\path(-3750,-3750)(-3750,-4250)(-3500,-4000)(-3750,-3750)
\blacken\path(-750,-3750)(-750,-4250)(-1250,-4000)(-750,-3750)
\blacken\path(18750,-3750)(18750,-4250)(19250,-4000)(18750,-3750)

\path(-2000,-2000)(20000,-2000)
\put(-3250,-2000){\tiny$1$}
\path(-4500,-2000)(-3500,-2000)
\whiten\path(-3750,-1750)(-3750,-2250)(-3500,-2000)(-3750,-1750)
\blacken\path(-750,-1750)(-750,-2250)(-1250,-2000)(-750,-1750)
\blacken\path(18750,-1750)(18750,-2250)(19250,-2000)(18750,-1750)

\put(-8000,5000){\fbox{$v,q$}}

\path(-2000,0)(20000,0)
\put(-3250,0){\tiny$N$}
\path(-4500,0)(-3500,0)
\whiten\path(-3750,250)(-3750,-250)(-3500,0)(-3750,250)
\blacken\path(-1250,250)(-1250,-250)(-750,0)(-1250,250)
\blacken\path(19250,250)(19250,-250)(18750,0)(19250,250)

\path(-2000,2000)(20000,2000)
\path(-4500,2000)(-3500,2000)
\whiten\path(-3750,2250)(-3750,1750)(-3500,2000)(-3750,2250)
\blacken\path(-1250,2250)(-1250,1750)(-750,2000)(-1250,2250)
\blacken\path(19250,2250)(19250,1750)(18750,2000)(19250,2250)

\path(-2000,4000)(20000,4000)
\path(-4500,4000)(-3500,4000)
\whiten\path(-3750,4250)(-3750,3750)(-3500,4000)(-3750,4250)
\blacken\path(-1250,4250)(-1250,3750)(-750,4000)(-1250,4250)
\blacken\path(19250,4250)(19250,3750)(18750,4000)(19250,4250)

\path(-2000,6000)(20000,6000)
\path(-4500,6000)(-3500,6000)
\whiten\path(-3750,6250)(-3750,5750)(-3500,6000)(-3750,6250)
\blacken\path(-1250,6250)(-1250,5750)(-750,6000)(-1250,6250)
\blacken\path(19250,6250)(19250,5750)(18750,6000)(19250,6250)

\path(-2000,8000)(20000,8000)
\path(-4500,8000)(-3500,8000)
\whiten\path(-3750,8250)(-3750,7750)(-3500,8000)(-3750,8250)
\blacken\path(-1250,8250)(-1250,7750)(-750,8000)(-1250,8250)
\blacken\path(19250,8250)(19250,7750)(18750,8000)(19250,8250)

\path(-2000,10000)(20000,10000)
\put(-3250,10000){\tiny$1$}
\path(-4500,10000)(-3500,10000)
\whiten\path(-3750,10250)(-3750,9750)(-3500,10000)(-3750,10250)
\blacken\path(-1250,10250)(-1250,9750)(-750,10000)(-1250,10250)
\blacken\path(19250,10250)(19250,9750)(18750,10000)(19250,10250)


\path(0,-8000)(0,12000)
\put(-200,-9000){\tiny$1$}
\path(0,-10250)(0,-9250)
\whiten\path(-250,-9500)(250,-9500)(0,-9250)(-250,-9500)
\blacken\path(-250,-6750)(250,-6750)(0,-7250)(-250,-6750)
\shade\path(-250,-4750)(250,-4750)(0,-5250)(-250,-4750)
\blacken\path(-250,10750)(250,10750)(0,11250)(-250,10750)

\path(2000,-8000)(2000,12000)
\path(2000,-10250)(2000,-9250)
\whiten\path(1750,-9500)(2250,-9500)(2000,-9250)(1750,-9500)
\blacken\path(1750,-6750)(2250,-6750)(2000,-7250)(1750,-6750)
\shade\path(1750,-4750)(2250,-4750)(2000,-5250)(1750,-4750)
\blacken\path(1750,10750)(2250,10750)(2000,11250)(1750,10750)

\path(4000,-8000)(4000,12000)
\put(3800,-9000){\tiny$\widetilde{N}$}
\path(4000,-10250)(4000,-9250)
\whiten\path(3750,-9500)(4250,-9500)(4000,-9250)(3750,-9500)
\blacken\path(3750,-6750)(4250,-6750)(4000,-7250)(3750,-6750)
\shade\path(3750,-4750)(4250,-4750)(4000,-5250)(3750,-4750)
\blacken\path(3750,10750)(4250,10750)(4000,11250)(3750,10750)

\path(6000,-8000)(6000,12000)
\put(5800,-9000){\tiny$\widetilde{N}+1$}
\path(6000,-10250)(6000,-9250)
\whiten\path(5750,-9500)(6250,-9500)(6000,-9250)(5750,-9500)
\blacken\path(5750,-7250)(6250,-7250)(6000,-6750)(5750,-7250)
\blacken\path(5750,10750)(6250,10750)(6000,11250)(5750,10750)

\path(8000,-8000)(8000,12000)
\path(8000,-10250)(8000,-9250)
\whiten\path(7750,-9500)(8250,-9500)(8000,-9250)(7750,-9500)
\blacken\path(7750,-7250)(8250,-7250)(8000,-6750)(7750,-7250)
\blacken\path(7750,10750)(8250,10750)(8000,11250)(7750,10750)

\path(10000,-8000)(10000,12000)
\path(10000,-10250)(10000,-9250)
\whiten\path(9750,-9500)(10250,-9500)(10000,-9250)(9750,-9500)
\blacken\path(9750,-7250)(10250,-7250)(10000,-6750)(9750,-7250)
\blacken\path(9750,10750)(10250,10750)(10000,11250)(9750,10750)

\path(12000,-8000)(12000,12000)
\path(12000,-10250)(12000,-9250)
\whiten\path(11750,-9500)(12250,-9500)(12000,-9250)(11750,-9500)
\blacken\path(11750,-7250)(12250,-7250)(12000,-6750)(11750,-7250)
\blacken\path(11750,10750)(12250,10750)(12000,11250)(11750,10750)

\path(14000,-8000)(14000,12000)
\path(14000,-10250)(14000,-9250)
\whiten\path(13750,-9500)(14250,-9500)(14000,-9250)(13750,-9500)
\blacken\path(13750,-7250)(14250,-7250)(14000,-6750)(13750,-7250)
\blacken\path(13750,10750)(14250,10750)(14000,11250)(13750,10750)

\path(16000,-8000)(16000,12000)
\path(16000,-10250)(16000,-9250)
\whiten\path(15750,-9500)(16250,-9500)(16000,-9250)(15750,-9500)
\blacken\path(15750,-7250)(16250,-7250)(16000,-6750)(15750,-7250)
\blacken\path(15750,10750)(16250,10750)(16000,11250)(15750,10750)

\path(18000,-8000)(18000,12000)
\put(17800,-9000){\tiny$M$}
\path(18000,-10250)(18000,-9250)
\whiten\path(17750,-9500)(18250,-9500)(18000,-9250)(17750,-9500)
\blacken\path(17750,-7250)(18250,-7250)(18000,-6750)(17750,-7250)
\blacken\path(17750,10750)(18250,10750)(18000,11250)(17750,10750)

\put(20000,-8750){\fbox{$w,r$}}

\end{picture}

\end{minipage}
\end{center}

\caption[Lattice representation of $S_n$, with frozen vertices included]{Lattice representation of $S_n$, with frozen vertices included. The grey arrows indicate points which are summed over all configurations, but whose only non-zero configuration is the one shown.}

\label{snfroze}
\end{figure} 

We examine the final row of this lattice, through which the variable $u_n$ flows. Every non-zero configuration of this row contains a $c_{-}(u_n,p_n,w_j,r_j)$ vertex, which by (\ref{c-tf}) does not depend on $u_n$, and $M-1$ other vertices which are trigonometric polynomials of degree 1 in $u_n$. It follows that $S_n$ is a trigonometric polynomial of degree $M-1$ in $u_n$. 

Furthermore, all surviving configurations of the final row contain the $\widetilde{N}$ vertices as shown in figure \ref{snfroze}. Consequentially, $S_n$ contains the factor
\begin{align}
\prod_{j=1}^{\widetilde{N}} a_{-}(u_n,p_n,w_j,r_j)
=
\prod_{j=1}^{\widetilde{N}} [w_j-u_n+p_n+r_j],
\end{align} 
which gives rise to zeros at $u_n = p_n+w_j+r_j$ for all $1\leq j \leq \widetilde{N}$.
  
}
\end{conditions}

\begin{conditions}
{\rm
Consider the vertex at the intersection of the $u_n$ and $w_{\widetilde{N}+1}$ lines in figure \ref{snfroze}. In any given lattice configuration, this can be of type $b_{-}(u_n,p_n,w_{\widetilde{N}+1},r_{\widetilde{N}+1})$ or $c_{-}(u_n,p_n,w_{\widetilde{N}+1},r_{\widetilde{N}+1})$. Setting $u_n+p_n = w_{\widetilde{N}+1}+r_{\widetilde{N}+1}$ cancels all terms containing $b_{-}(u_n,p_n,w_{\widetilde{N}+1},r_{\widetilde{N}+1})$, and freezes the entire final row of the lattice to the configuration below:

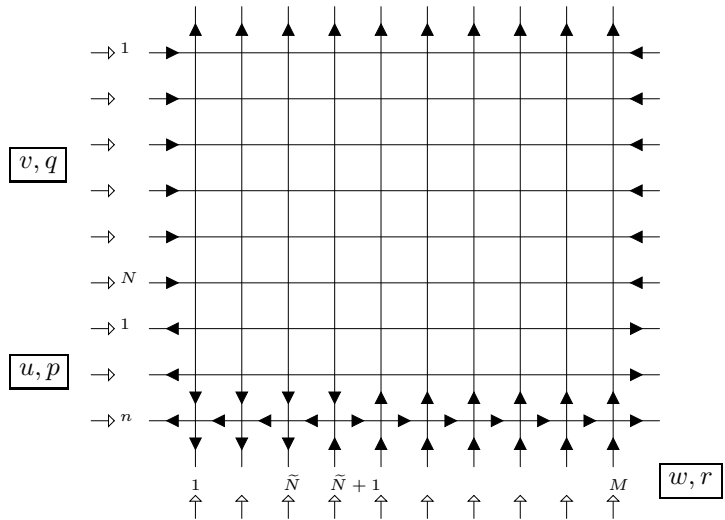
\begin{figure}[H]

\begin{center}
\begin{minipage}{4.3in}

\setlength{\unitlength}{0.0003cm}
\begin{picture}(20000,20000)(-9000,-9000)

\put(-8000,-4000){\fbox{$u,p$}}

\path(-2000,-6000)(20000,-6000)
\put(-3250,-6000){\tiny$n$}
\path(-4500,-6000)(-3500,-6000)
\whiten\path(-3750,-5750)(-3750,-6250)(-3500,-6000)(-3750,-5750)
\blacken\path(-750,-5750)(-750,-6250)(-1250,-6000)(-750,-5750)
\blacken\path(1250,-5750)(1250,-6250)(750,-6000)(1250,-5750)
\blacken\path(3250,-5750)(3250,-6250)(2750,-6000)(3250,-5750)
\blacken\path(5250,-5750)(5250,-6250)(4750,-6000)(5250,-5750)
\blacken\path(6750,-5750)(6750,-6250)(7250,-6000)(6750,-5750)
\blacken\path(8750,-5750)(8750,-6250)(9250,-6000)(8750,-5750)
\blacken\path(10750,-5750)(10750,-6250)(11250,-6000)(10750,-5750)
\blacken\path(12750,-5750)(12750,-6250)(13250,-6000)(12750,-5750)
\blacken\path(14750,-5750)(14750,-6250)(15250,-6000)(14750,-5750)
\blacken\path(16750,-5750)(16750,-6250)(17250,-6000)(16750,-5750)
\blacken\path(18750,-5750)(18750,-6250)(19250,-6000)(18750,-5750)

\path(-2000,-4000)(20000,-4000)
\path(-4500,-4000)(-3500,-4000)
\whiten\path(-3750,-3750)(-3750,-4250)(-3500,-4000)(-3750,-3750)
\blacken\path(-750,-3750)(-750,-4250)(-1250,-4000)(-750,-3750)
\blacken\path(18750,-3750)(18750,-4250)(19250,-4000)(18750,-3750)

\path(-2000,-2000)(20000,-2000)
\put(-3250,-2000){\tiny$1$}
\path(-4500,-2000)(-3500,-2000)
\whiten\path(-3750,-1750)(-3750,-2250)(-3500,-2000)(-3750,-1750)
\blacken\path(-750,-1750)(-750,-2250)(-1250,-2000)(-750,-1750)
\blacken\path(18750,-1750)(18750,-2250)(19250,-2000)(18750,-1750)

\put(-8000,5000){\fbox{$v,q$}}

\path(-2000,0)(20000,0)
\put(-3250,0){\tiny$N$}
\path(-4500,0)(-3500,0)
\whiten\path(-3750,250)(-3750,-250)(-3500,0)(-3750,250)
\blacken\path(-1250,250)(-1250,-250)(-750,0)(-1250,250)
\blacken\path(19250,250)(19250,-250)(18750,0)(19250,250)

\path(-2000,2000)(20000,2000)
\path(-4500,2000)(-3500,2000)
\whiten\path(-3750,2250)(-3750,1750)(-3500,2000)(-3750,2250)
\blacken\path(-1250,2250)(-1250,1750)(-750,2000)(-1250,2250)
\blacken\path(19250,2250)(19250,1750)(18750,2000)(19250,2250)

\path(-2000,4000)(20000,4000)
\path(-4500,4000)(-3500,4000)
\whiten\path(-3750,4250)(-3750,3750)(-3500,4000)(-3750,4250)
\blacken\path(-1250,4250)(-1250,3750)(-750,4000)(-1250,4250)
\blacken\path(19250,4250)(19250,3750)(18750,4000)(19250,4250)

\path(-2000,6000)(20000,6000)
\path(-4500,6000)(-3500,6000)
\whiten\path(-3750,6250)(-3750,5750)(-3500,6000)(-3750,6250)
\blacken\path(-1250,6250)(-1250,5750)(-750,6000)(-1250,6250)
\blacken\path(19250,6250)(19250,5750)(18750,6000)(19250,6250)

\path(-2000,8000)(20000,8000)
\path(-4500,8000)(-3500,8000)
\whiten\path(-3750,8250)(-3750,7750)(-3500,8000)(-3750,8250)
\blacken\path(-1250,8250)(-1250,7750)(-750,8000)(-1250,8250)
\blacken\path(19250,8250)(19250,7750)(18750,8000)(19250,8250)

\path(-2000,10000)(20000,10000)
\put(-3250,10000){\tiny$1$}
\path(-4500,10000)(-3500,10000)
\whiten\path(-3750,10250)(-3750,9750)(-3500,10000)(-3750,10250)
\blacken\path(-1250,10250)(-1250,9750)(-750,10000)(-1250,10250)
\blacken\path(19250,10250)(19250,9750)(18750,10000)(19250,10250)


\path(0,-8000)(0,12000)
\put(-200,-9000){\tiny$1$}
\path(0,-10250)(0,-9250)
\whiten\path(-250,-9500)(250,-9500)(0,-9250)(-250,-9500)
\blacken\path(-250,-6750)(250,-6750)(0,-7250)(-250,-6750)
\blacken\path(-250,-4750)(250,-4750)(0,-5250)(-250,-4750)
\blacken\path(-250,10750)(250,10750)(0,11250)(-250,10750)

\path(2000,-8000)(2000,12000)
\path(2000,-10250)(2000,-9250)
\whiten\path(1750,-9500)(2250,-9500)(2000,-9250)(1750,-9500)
\blacken\path(1750,-6750)(2250,-6750)(2000,-7250)(1750,-6750)
\blacken\path(1750,-4750)(2250,-4750)(2000,-5250)(1750,-4750)
\blacken\path(1750,10750)(2250,10750)(2000,11250)(1750,10750)

\path(4000,-8000)(4000,12000)
\put(3800,-9000){\tiny$\widetilde{N}$}
\path(4000,-10250)(4000,-9250)
\whiten\path(3750,-9500)(4250,-9500)(4000,-9250)(3750,-9500)
\blacken\path(3750,-6750)(4250,-6750)(4000,-7250)(3750,-6750)
\blacken\path(3750,-4750)(4250,-4750)(4000,-5250)(3750,-4750)
\blacken\path(3750,10750)(4250,10750)(4000,11250)(3750,10750)

\path(6000,-8000)(6000,12000)
\put(5800,-9000){\tiny$\widetilde{N}+1$}
\path(6000,-10250)(6000,-9250)
\whiten\path(5750,-9500)(6250,-9500)(6000,-9250)(5750,-9500)
\blacken\path(5750,-7250)(6250,-7250)(6000,-6750)(5750,-7250)
\blacken\path(5750,-4750)(6250,-4750)(6000,-5250)(5750,-4750)
\blacken\path(5750,10750)(6250,10750)(6000,11250)(5750,10750)

\path(8000,-8000)(8000,12000)
\path(8000,-10250)(8000,-9250)
\whiten\path(7750,-9500)(8250,-9500)(8000,-9250)(7750,-9500)
\blacken\path(7750,-7250)(8250,-7250)(8000,-6750)(7750,-7250)
\blacken\path(7750,-5250)(8250,-5250)(8000,-4750)(7750,-5250)
\blacken\path(7750,10750)(8250,10750)(8000,11250)(7750,10750)

\path(10000,-8000)(10000,12000)
\path(10000,-10250)(10000,-9250)
\whiten\path(9750,-9500)(10250,-9500)(10000,-9250)(9750,-9500)
\blacken\path(9750,-7250)(10250,-7250)(10000,-6750)(9750,-7250)
\blacken\path(9750,-5250)(10250,-5250)(10000,-4750)(9750,-5250)
\blacken\path(9750,10750)(10250,10750)(10000,11250)(9750,10750)

\path(12000,-8000)(12000,12000)
\path(12000,-10250)(12000,-9250)
\whiten\path(11750,-9500)(12250,-9500)(12000,-9250)(11750,-9500)
\blacken\path(11750,-7250)(12250,-7250)(12000,-6750)(11750,-7250)
\blacken\path(11750,-5250)(12250,-5250)(12000,-4750)(11750,-5250)
\blacken\path(11750,10750)(12250,10750)(12000,11250)(11750,10750)

\path(14000,-8000)(14000,12000)
\path(14000,-10250)(14000,-9250)
\whiten\path(13750,-9500)(14250,-9500)(14000,-9250)(13750,-9500)
\blacken\path(13750,-7250)(14250,-7250)(14000,-6750)(13750,-7250)
\blacken\path(13750,-5250)(14250,-5250)(14000,-4750)(13750,-5250)
\blacken\path(13750,10750)(14250,10750)(14000,11250)(13750,10750)

\path(16000,-8000)(16000,12000)
\path(16000,-10250)(16000,-9250)
\whiten\path(15750,-9500)(16250,-9500)(16000,-9250)(15750,-9500)
\blacken\path(15750,-7250)(16250,-7250)(16000,-6750)(15750,-7250)
\blacken\path(15750,-5250)(16250,-5250)(16000,-4750)(15750,-5250)
\blacken\path(15750,10750)(16250,10750)(16000,11250)(15750,10750)

\path(18000,-8000)(18000,12000)
\put(17800,-9000){\tiny$M$}
\path(18000,-10250)(18000,-9250)
\whiten\path(17750,-9500)(18250,-9500)(18000,-9250)(17750,-9500)
\blacken\path(17750,-7250)(18250,-7250)(18000,-6750)(17750,-7250)
\blacken\path(17750,-5250)(18250,-5250)(18000,-4750)(17750,-5250)
\blacken\path(17750,10750)(18250,10750)(18000,11250)(17750,10750)

\put(20000,-8750){\fbox{$w,r$}}

\end{picture}

\end{minipage}
\end{center}

\caption[Freezing the entire last row of the $S_n$ lattice]{Freezing the entire last row of the $S_n$ lattice. The last row of vertices produces the prefactor in (\ref{Srec0-tf}), while the remainder of the lattice represents $S_{n-1}$.}

\label{snfroze2}
\end{figure}

From the diagram we see that setting $u_n+p_n = w_{\widetilde{N}+1}+r_{\widetilde{N}+1}$ reduces $S_n$ to $S_{n-1}$, up to a multiplicative factor. This factor is evaluated by matching each vertex in the final row with its trigonometric weight, giving  
\begin{align}
&
S_n\Big|_{u_n+p_n = w_{\widetilde{N}+1}+r_{\widetilde{N}+1}}
=
\prod_{1\leq j < \widetilde{N}+1}
a_{-}\Big(w_{\widetilde{N}+1}+r_{\widetilde{N}+1}-p_n,p_n,w_j,r_j\Big)
\times
\label{Srec-tf}
\\
&
c_{-}\Big(w_{\widetilde{N}+1}+r_{\widetilde{N}+1}-p_n,p_n,w_{\widetilde{N}+1},r_{\widetilde{N}+1}\Big)
\prod_{\widetilde{N}+1 < j \leq M}
a_{+}\Big(w_{\widetilde{N}+1}+r_{\widetilde{N}+1}-p_n,p_n,w_j,r_j\Big)
S_{n-1}.
\nonumber
\end{align}
Using the explicit formulae (\ref{a-tf}) and (\ref{c-tf}) for the functions appearing in (\ref{Srec-tf}), we obtain the required recursion relation (\ref{Srec0-tf}).
}
\end{conditions}

\begin{conditions}
{\rm
The scalar product $S_0$ is represented by the lattice below:

\begin{figure}[H]

\begin{center}
\begin{minipage}{4.3in}

\setlength{\unitlength}{0.0003cm}
\begin{picture}(20000,15000)(-9000,-3000)

\put(-8000,5000){\fbox{$v,q$}}

\path(-2000,0)(20000,0)
\put(-3250,0){\tiny$N$}
\path(-4500,0)(-3500,0)
\whiten\path(-3750,250)(-3750,-250)(-3500,0)(-3750,250)
\blacken\path(-1250,250)(-1250,-250)(-750,0)(-1250,250)
\blacken\path(11250,250)(11250,-250)(10750,0)(11250,250)
\blacken\path(13250,250)(13250,-250)(12750,0)(13250,250)
\blacken\path(15250,250)(15250,-250)(14750,0)(15250,250)
\blacken\path(17250,250)(17250,-250)(16750,0)(17250,250)
\blacken\path(19250,250)(19250,-250)(18750,0)(19250,250)

\path(-2000,2000)(20000,2000)
\path(-4500,2000)(-3500,2000)
\whiten\path(-3750,2250)(-3750,1750)(-3500,2000)(-3750,2250)
\blacken\path(-1250,2250)(-1250,1750)(-750,2000)(-1250,2250)
\blacken\path(11250,2250)(11250,1750)(10750,2000)(11250,2250)
\blacken\path(13250,2250)(13250,1750)(12750,2000)(13250,2250)
\blacken\path(15250,2250)(15250,1750)(14750,2000)(15250,2250)
\blacken\path(17250,2250)(17250,1750)(16750,2000)(17250,2250)
\blacken\path(19250,2250)(19250,1750)(18750,2000)(19250,2250)

\path(-2000,4000)(20000,4000)
\path(-4500,4000)(-3500,4000)
\whiten\path(-3750,4250)(-3750,3750)(-3500,4000)(-3750,4250)
\blacken\path(-1250,4250)(-1250,3750)(-750,4000)(-1250,4250)
\blacken\path(11250,4250)(11250,3750)(10750,4000)(11250,4250)
\blacken\path(13250,4250)(13250,3750)(12750,4000)(13250,4250)
\blacken\path(15250,4250)(15250,3750)(14750,4000)(15250,4250)
\blacken\path(17250,4250)(17250,3750)(16750,4000)(17250,4250)
\blacken\path(19250,4250)(19250,3750)(18750,4000)(19250,4250)

\path(-2000,6000)(20000,6000)
\path(-4500,6000)(-3500,6000)
\whiten\path(-3750,6250)(-3750,5750)(-3500,6000)(-3750,6250)
\blacken\path(-1250,6250)(-1250,5750)(-750,6000)(-1250,6250)
\blacken\path(11250,6250)(11250,5750)(10750,6000)(11250,6250)
\blacken\path(13250,6250)(13250,5750)(12750,6000)(13250,6250)
\blacken\path(15250,6250)(15250,5750)(14750,6000)(15250,6250)
\blacken\path(17250,6250)(17250,5750)(16750,6000)(17250,6250)
\blacken\path(19250,6250)(19250,5750)(18750,6000)(19250,6250)

\path(-2000,8000)(20000,8000)
\path(-4500,8000)(-3500,8000)
\whiten\path(-3750,8250)(-3750,7750)(-3500,8000)(-3750,8250)
\blacken\path(-1250,8250)(-1250,7750)(-750,8000)(-1250,8250)
\blacken\path(11250,8250)(11250,7750)(10750,8000)(11250,8250)
\blacken\path(13250,8250)(13250,7750)(12750,8000)(13250,8250)
\blacken\path(15250,8250)(15250,7750)(14750,8000)(15250,8250)
\blacken\path(17250,8250)(17250,7750)(16750,8000)(17250,8250)
\blacken\path(19250,8250)(19250,7750)(18750,8000)(19250,8250)

\path(-2000,10000)(20000,10000)
\put(-3250,10000){\tiny$1$}
\path(-4500,10000)(-3500,10000)
\whiten\path(-3750,10250)(-3750,9750)(-3500,10000)(-3750,10250)
\blacken\path(-1250,10250)(-1250,9750)(-750,10000)(-1250,10250)
\blacken\path(11250,10250)(11250,9750)(10750,10000)(11250,10250)
\blacken\path(13250,10250)(13250,9750)(12750,10000)(13250,10250)
\blacken\path(15250,10250)(15250,9750)(14750,10000)(15250,10250)
\blacken\path(17250,10250)(17250,9750)(16750,10000)(17250,10250)
\blacken\path(19250,10250)(19250,9750)(18750,10000)(19250,10250)


\path(0,-2000)(0,12000)
\put(-200,-2750){\tiny$1$}
\path(0,-4250)(0,-3250)
\whiten\path(-250,-3500)(250,-3500)(0,-3250)(-250,-3500)
\blacken\path(-250,-750)(250,-750)(0,-1250)(-250,-750)
\blacken\path(-250,10750)(250,10750)(0,11250)(-250,10750)

\path(2000,-2000)(2000,12000)
\path(2000,-4250)(2000,-3250)
\whiten\path(1750,-3500)(2250,-3500)(2000,-3250)(1750,-3500)
\blacken\path(1750,-750)(2250,-750)(2000,-1250)(1750,-750)
\blacken\path(1750,10750)(2250,10750)(2000,11250)(1750,10750)

\path(4000,-2000)(4000,12000)
\path(4000,-4250)(4000,-3250)
\whiten\path(3750,-3500)(4250,-3500)(4000,-3250)(3750,-3500)
\blacken\path(3750,-750)(4250,-750)(4000,-1250)(3750,-750)
\blacken\path(3750,10750)(4250,10750)(4000,11250)(3750,10750)

\path(6000,-2000)(6000,12000)
\path(6000,-4250)(6000,-3250)
\whiten\path(5750,-3500)(6250,-3500)(6000,-3250)(5750,-3500)
\blacken\path(5750,-750)(6250,-750)(6000,-1250)(5750,-750)
\blacken\path(5750,10750)(6250,10750)(6000,11250)(5750,10750)

\path(8000,-2000)(8000,12000)
\path(8000,-4250)(8000,-3250)
\whiten\path(7750,-3500)(8250,-3500)(8000,-3250)(7750,-3500)
\blacken\path(7750,-750)(8250,-750)(8000,-1250)(7750,-750)
\blacken\path(7750,10750)(8250,10750)(8000,11250)(7750,10750)

\path(10000,-2000)(10000,12000)
\put(9800,-2750){\tiny$N$}
\path(10000,-4250)(10000,-3250)
\whiten\path(9750,-3500)(10250,-3500)(10000,-3250)(9750,-3500)
\blacken\path(9750,-750)(10250,-750)(10000,-1250)(9750,-750)
\blacken\path(9750,10750)(10250,10750)(10000,11250)(9750,10750)

\path(12000,-2000)(12000,12000)
\put(11800,-2750){\tiny$N+1$}
\path(12000,-4250)(12000,-3250)
\whiten\path(11750,-3500)(12250,-3500)(12000,-3250)(11750,-3500)
\blacken\path(11750,-1250)(12250,-1250)(12000,-750)(11750,-1250)
\blacken\path(11750,750)(12250,750)(12000,1250)(11750,750)
\blacken\path(11750,2750)(12250,2750)(12000,3250)(11750,2750)
\blacken\path(11750,4750)(12250,4750)(12000,5250)(11750,4750)
\blacken\path(11750,6750)(12250,6750)(12000,7250)(11750,6750)
\blacken\path(11750,8750)(12250,8750)(12000,9250)(11750,8750)
\blacken\path(11750,10750)(12250,10750)(12000,11250)(11750,10750)

\path(14000,-2000)(14000,12000)
\path(14000,-4250)(14000,-3250)
\whiten\path(13750,-3500)(14250,-3500)(14000,-3250)(13750,-3500)
\blacken\path(13750,-1250)(14250,-1250)(14000,-750)(13750,-1250)
\blacken\path(13750,750)(14250,750)(14000,1250)(13750,750)
\blacken\path(13750,2750)(14250,2750)(14000,3250)(13750,2750)
\blacken\path(13750,4750)(14250,4750)(14000,5250)(13750,4750)
\blacken\path(13750,6750)(14250,6750)(14000,7250)(13750,6750)
\blacken\path(13750,8750)(14250,8750)(14000,9250)(13750,8750)
\blacken\path(13750,10750)(14250,10750)(14000,11250)(13750,10750)

\path(16000,-2000)(16000,12000)
\path(16000,-4250)(16000,-3250)
\whiten\path(15750,-3500)(16250,-3500)(16000,-3250)(15750,-3500)
\blacken\path(15750,-1250)(16250,-1250)(16000,-750)(15750,-1250)
\blacken\path(15750,750)(16250,750)(16000,1250)(15750,750)
\blacken\path(15750,2750)(16250,2750)(16000,3250)(15750,2750)
\blacken\path(15750,4750)(16250,4750)(16000,5250)(15750,4750)
\blacken\path(15750,6750)(16250,6750)(16000,7250)(15750,6750)
\blacken\path(15750,8750)(16250,8750)(16000,9250)(15750,8750)
\blacken\path(15750,10750)(16250,10750)(16000,11250)(15750,10750)

\path(18000,-2000)(18000,12000)
\put(17800,-2750){\tiny$M$}
\path(18000,-4250)(18000,-3250)
\whiten\path(17750,-3500)(18250,-3500)(18000,-3250)(17750,-3500)
\blacken\path(17750,-1250)(18250,-1250)(18000,-750)(17750,-1250)
\blacken\path(17750,750)(18250,750)(18000,1250)(17750,750)
\blacken\path(17750,2750)(18250,2750)(18000,3250)(17750,2750)
\blacken\path(17750,4750)(18250,4750)(18000,5250)(17750,4750)
\blacken\path(17750,6750)(18250,6750)(18000,7250)(17750,6750)
\blacken\path(17750,8750)(18250,8750)(18000,9250)(17750,8750)
\blacken\path(17750,10750)(18250,10750)(18000,11250)(17750,10750)

\put(20000,-2750){\fbox{$w,r$}}

\end{picture}

\end{minipage}
\end{center}

\caption[Frozen vertices within $S_0$]{Frozen vertices within $S_0$. The final $M-N$ columns of vertices produce the prefactor in (\ref{S0-tf}), while the remainder of the lattice represents $Z_N$.}
\end{figure}
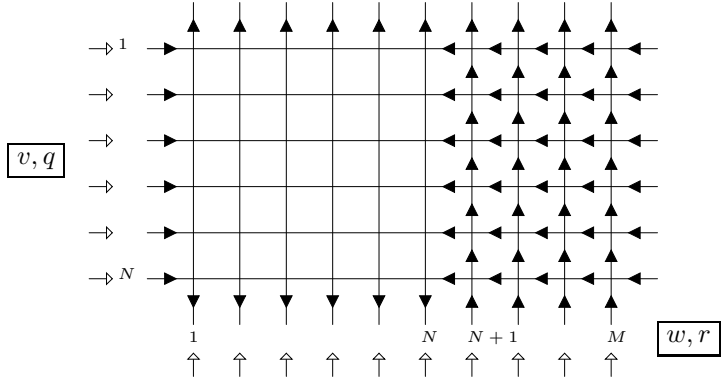

The vertices in the last $M-N$ columns of a lattice configuration must be of the form $b_{-}(v_j,q_j,w_k,r_k)$, or else the configuration vanishes. Peeling away this block of frozen vertices, we find that $S_0$ is equal to $Z_N$ up to the overall factor $\prod_{j=1}^{N}\prod_{k=N+1}^{M} b_{-}(v_j,q_j,w_k,r_k)$.
}
\end{conditions}

\end{proof}

\subsection{Factorized expression for $S_n(\{u,p\}_n,\{v,q\}_N,\{w,r\}_M)$}

\begin{lemma}
{\rm Let $v_j+q_j$ satisfy the equation (\ref{tf-bethe2}) for all $1\leq j \leq N$. In the presence of this constraint, the scalar product $S_n$ has the factorized expression
\begin{align}
&
S_n\Big( \{u,p\}_n, \{v,q\}_N, \{w,r\}_M \Big)
=
\prod_{j=1}^{n}
[2p_j]^{\frac{1}{2}}
\prod_{j=1}^{N}
[2q_j]^{\frac{1}{2}}
\prod_{j=1}^{\widetilde{N}}
[2r_j]^{\frac{1}{2}}
\times
\label{Scalc-tf}
\\
 &
\prod_{1 \leq j<k \leq n}
[u_k-u_j+p_j+p_k]
\prod_{1\leq j<k \leq N}
[v_j-v_k+q_j+q_k]
\prod_{1\leq j<k \leq \widetilde{N}}
[w_k-w_j+r_j+r_k]
\times
\nonumber
\\
 &
\prod_{j=1}^{n}
\prod_{k=1}^{\widetilde{N}}
[w_k-u_j+p_j+r_k]
\prod_{j=1}^{N}
\prod_{k=\widetilde{N}+1}^{M}
[v_j-w_k+q_j-r_k]
\times
\nonumber
\\
&
\prod_{j=1}^{n}
\prod_{k=1}^{N}
\frac{1}{\displaystyle{
[u_j+p_j-v_k-q_k]
}}
\prod_{j=1}^{n}
\left(
(-)^N
\prod_{k=1}^{M}
[u_j+p_j-w_k+r_k]
+
\prod_{k=1}^{M}
[u_j+p_j-w_k-r_k]
\right).
\nonumber
\end{align}

}
\end{lemma}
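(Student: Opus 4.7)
The plan is to follow exactly the Izergin--Korepin pattern used for the XXZ intermediate scalar products in the preceding section: first establish that $S_n$ is uniquely determined by the four conditions of the previous lemma, and then verify that the right-hand side of (\ref{Scalc-tf}) satisfies each of those conditions. The uniqueness argument mirrors lemma~\ref{uniqueness}: knowing $S_0$ from condition~\textbf{4} (together with the factorized form (\ref{lem-tf}) of $Z_N$), one inducts on $n$. At each step, condition~\textbf{2} tells us that $S_n$ is a trigonometric polynomial of degree $M-1$ in $u_n$; condition~\textbf{3} together with condition~\textbf{1} (symmetry in $\{w_j,r_j\}_{\widetilde N<j\le M}$) pins down the value of $S_n$ at $M-\widetilde N = M-N+n$ distinct specializations $u_n+p_n = w_j+r_j$ with $j\in\{\widetilde N+1,\dots,M\}$; combined with the $\widetilde N$ zeros from condition~\textbf{2} at $u_n = p_n+w_j+r_j$ for $j\le \widetilde N$, this gives $M$ values and so determines the polynomial.

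Next I would verify that (\ref{Scalc-tf}) satisfies conditions \textbf{1}--\textbf{4}. Condition~\textbf{1} is immediate: the only $\{w_k,r_k\}_{k>\widetilde N}$ dependence of (\ref{Scalc-tf}) lives in the two manifestly symmetric products $\prod_{j,k}[v_j-w_k+q_j-r_k]$ and $\prod_j\bigl((-)^N\prod_k[u_j+p_j-w_k+r_k]+\prod_k[u_j+p_j-w_k-r_k]\bigr)$. Condition~\textbf{4} (the $n=0$ base case) is a direct comparison with (\ref{lem-tf}): all $u$-indexed products become empty, the index $\widetilde N$ becomes $N$, and what remains is precisely $\prod_{j\le N}\prod_{k>N}[v_j-w_k+q_j-r_k]\cdot Z_N$. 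For condition~\textbf{2}, the only potential poles in $u_n$ come from $\prod_{k=1}^{N}[u_n+p_n-v_k-q_k]^{-1}$; these are removed because the Bethe equations (\ref{tf-bethe2}) (or equivalently the single-variable equation (\ref{tf-bethe5})) guarantee that the bracketed factor $(-)^N\prod_k[u_n+p_n-w_k+r_k]+\prod_k[u_n+p_n-w_k-r_k]$ vanishes at $u_n+p_n=v_k+q_k$. A degree count then gives $(n-1)+\widetilde N-N+M = M-1$, and the factor $\prod_{k=1}^{\widetilde N}[w_k-u_n+p_n+r_k]$ supplies the required $\widetilde N$ zeros.

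The main computational step, and the one I expect to be the main obstacle, is condition~\textbf{3}. Specializing $u_n+p_n = w_{\widetilde N+1}+r_{\widetilde N+1}$ in (\ref{Scalc-tf}), the $k=\widetilde N+1$ term of the second product inside the bracketed Bethe factor vanishes, so the bracket collapses to $(-)^N\prod_{k=1}^{M}[w_{\widetilde N+1}+r_{\widetilde N+1}-w_k+r_k]$. The task is then to reorganize all remaining products involving $u_n$ at this specialization and to match them against the right-hand side of (\ref{Srec0-tf}) multiplied by the formula (\ref{Scalc-tf}) evaluated at $n-1$. Concretely, one splits $\prod_{k=1}^{\widetilde N+1}[w_k-w_{\widetilde N+1}+r_k-r_{\widetilde N+1}+2p_n]$ off the $k=n$ slice of $\prod_{j,k}[w_k-u_j+p_j+r_k]$, extracts the factor $[2p_n]^{1/2}[2r_{\widetilde N+1}]^{1/2}$ by relabelling the square-root prefactors, and absorbs the collapsed bracket of length $M$ into the products over $k<\widetilde N+1$ and $k>\widetilde N+1$ appearing in the prefactor of (\ref{Srec0-tf}). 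The $k\le\widetilde N$ terms of that bracket combine with the shift of the cross product $\prod_j\prod_k[w_k-u_j+p_j+r_k]$, while the $k>\widetilde N+1$ terms become the explicit product in the second line of (\ref{Srec0-tf}). Provided every sign and every square root is tracked correctly, the match is clean; the bookkeeping is the delicate part, but it is essentially the same as for the six-vertex partition function recursion in the proof of (\ref{lem-tf}). Once all four conditions are checked, uniqueness gives (\ref{Scalc-tf}).
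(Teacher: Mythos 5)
Your proposal follows the paper's proof exactly: assert that conditions \textbf{1}--\textbf{4} determine $S_n$ uniquely (by an induction on $n$ patterned on the domain-wall uniqueness lemma), then verify that the right-hand side of (\ref{Scalc-tf}) satisfies each condition, with the Bethe equations removing the apparent poles in condition \textbf{2} and the collapse of the bracketed factor at $u_n+p_n=w_{\widetilde{N}+1}+r_{\widetilde{N}+1}$ driving condition \textbf{3}. The only difference is that you spell out the uniqueness counting and the condition-\textbf{3} bookkeeping in somewhat more detail than the paper, which dismisses both as routine; your degree counts and specialization arguments check out.
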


\begin{proof}

We begin by stating that the conditions {\bf 1}--{\bf 4} are uniquely determining.\footnote{This is proved along very similar lines to lemma 7.} Hence we need only verify that (\ref{Scalc-tf}) satisfies properties {\bf 1}--{\bf 4}.

\setcounter{conditions}{0}
\begin{conditions}
{\rm
By studying (\ref{Scalc-tf}) we see that $S_n$ has dependence on $\{w_{\widetilde{N}+1},\ldots,w_M\}$ and $\{r_{\widetilde{N}+1},\ldots,r_M\}$ only through the terms
\begin{align*}
\prod_{j=1}^{N}
\prod_{k=\widetilde{N}+1}^{M}
[v_j-w_k+q_j-r_k],
\ 
\prod_{j=1}^{n}
\left(
(-)^N
\prod_{k=1}^{M}
[u_j+p_j-w_k+r_k]
+
\prod_{k=1}^{M}
[u_j+p_j-w_k-r_k]
\right)
\end{align*}
which are clearly both invariant under the permutation $\{w_j,r_j\} \leftrightarrow \{w_k,r_k\}$ for all $j,k \in \{\widetilde{N}+1,\ldots,M\}$. 

}
\end{conditions}

\begin{conditions}
{\rm
Since $v_j+q_j$ is a root of the equation (\ref{tf-bethe2}) for all $1\leq j \leq N$, it follows that
\begin{align*}
\prod_{j=1}^{n}
\prod_{k=1}^{N}
\frac{1}{[u_j+p_j-v_k-q_k]}
\prod_{j=1}^{n}
\left(
(-)^N
\prod_{k=1}^{M}
[u_j+p_j-w_k+r_k]
+
\prod_{k=1}^{M}
[u_j+p_j-w_k-r_k]
\right)
\end{align*}
is a trigonometric polynomial in $u_n$ of degree $M-N$. The remaining terms in (\ref{Scalc-tf}) comprise a trigonometric polynomial of degree $N-1$ in $u_n$. Therefore the entire expression (\ref{Scalc-tf}) is a trigonometric polynomial of degree $M-1$ in $u_n$. In addition, the required factor $\prod_{j=1}^{\widetilde{N}}[w_j-u_n+p_n+r_j]$ is present in (\ref{Scalc-tf}).

}
\end{conditions}

\begin{conditions}
{\rm
The recursion relation (\ref{Srec0-tf}) is proved by setting $u_n+p_n=w_{\widetilde{N}+1}+r_{\widetilde{N}+1}$ in (\ref{Scalc-tf}) and rearranging the factors in the resulting equation. Since this procedure is tedious but elementary in nature, we shall omit the details. 
}
\end{conditions}

\begin{conditions}
{\rm
Setting $n=0$ in (\ref{Scalc-tf}) gives
\begin{align}
&
S_0\Big(\{v,q\}_N,\{w,r\}_M\Big)
=
\prod_{j=1}^{N} \prod_{k=N+1}^{M}
[v_j-w_k+q_j-r_k]
\times
\label{S0proof-tf}
\\
&
\prod_{j=1}^{N} [2q_j]^{\frac{1}{2}} [2r_j]^{\frac{1}{2}}
\prod_{1\leq j < k \leq N} [v_j-v_k+q_j+q_k] [w_k-w_j+r_j+r_k].
\nonumber
\end{align}
Comparing equation (\ref{S0proof-tf}) with the factorized expression (\ref{lem-tf}) for the domain wall partition function, we verify (\ref{S0-tf}).
}
\end{conditions}

\end{proof}

\subsection{Evaluation of $S_N(\{u,p\}_N,\{v,q\}_N,\{w,r\}_M)$}

For completeness, we write the $n=N$ case of equation (\ref{Scalc-tf}) explicitly. We have
\begin{align}
&
S_N
=
\prod_{j=1}^{N}
[2p_j]^{\frac{1}{2}} [2q_j]^{\frac{1}{2}}
\prod_{1\leq j<k \leq N}
[u_k-u_j+p_j+p_k]
[v_j-v_k+q_j+q_k]
\times
\\
&
\frac{\displaystyle{
\prod_{j=1}^{N} \prod_{k=1}^{M}
[v_j-w_k+q_j-r_k]
}}
{\displaystyle{
\prod_{j,k=1}^{N}
[u_j+p_j-v_k-q_k]
}}
\prod_{j=1}^{N}
\left(
(-)^N \prod_{k=1}^{M} [u_j+p_j-w_k+r_k]
+
\prod_{k=1}^{M} [u_j+p_j-w_k-r_k]
\right).
\nonumber
\end{align}
Let us consider the zeros of this expression in the variable $u_N+p_N$. Studying the first line of the right hand side, we see that $N-1$ of the zeros are of the same type as in the domain wall partition function. The remaining zeros, in the second line, coincide with the $M-N$ roots of the Bethe equation (\ref{tf-bethe5}) which are {\it different} from $\{v_1+q_1,\ldots,v_N+q_N\}$.

\end{document}